
\documentclass[journal]{IEEEtran} 
%

\usepackage[hidelinks]{hyperref}       
\usepackage{hyperref}
\hypersetup{
	colorlinks=true,
	linkcolor=blue,
	citecolor=blue
}



%

%
\usepackage{cite}

%
\ifCLASSINFOpdf
  \usepackage[pdftex]{graphicx}
\else
  \usepackage[dvips]{graphicx}
\fi
%
%

%
\usepackage{amsmath}
%
\interdisplaylinepenalty=2500

%
\usepackage{algorithmic}

%
\usepackage{array}


\usepackage[caption=false,font=normalsize,labelfont=sf,textfont=sf]{subfig}
\let\MYorigsubfloat\subfloat
\renewcommand{\subfloat}[2][\relax]{\MYorigsubfloat[]{#2}}
%

%
\usepackage{url}


\hyphenation{net-works}

\usepackage{amsthm}
\usepackage{amssymb,cases}
\usepackage{arydshln}
\usepackage{algorithm}
\usepackage{nccmath}
\usepackage{multirow}
\usepackage{booktabs}

\DeclareMathOperator*{\argmin}{arg\,min}
\usepackage{color}

\newtheorem{theorem}{Theorem}
\newtheorem{lemma}{Lemma}
\newtheorem{proposition}{Proposition}
\newtheorem{assumption}{Assumption}

\def\d{\ldots}
\def\cK{\mathcal{K}}
\def\cM{\mathcal{M}}

\def\st{\mathrm{s.t.}}

\def\tr{\mathrm{tr}}
\def\rk{\mathrm{rank}}
\def\diag{\mathrm{Diag}}

\newcommand{\revbj}[1]{{\color{black}{#1}}}
\newcommand{\revxl}[1]{{\color{red}{#1}}}
\renewcommand{\v}[1]{\mathbf{#1}}
\newcommand{\vv}[1]{\boldsymbol{\mathbf{#1}}}
\newcommand{\m}[1]{\mathbf{#1}}

\def\hermitian{\dagger}
\def\transpose{\mathsf{T}}


\makeatletter
\newcommand{\pushright}[1]{\ifmeasuring@#1\else\omit\hfill$\displaystyle#1$\fi\ignorespaces}
\newcommand{\pushleft}[1]{\ifmeasuring@#1\else\omit$\displaystyle#1$\hfill\fi\ignorespaces}
\makeatother

\def\fk{~\forall\,k \in \cK}
\def\fm{~\forall\,m \in \cM}

%

\begin{document}
\title{\fontsize{23pt}{26pt}\selectfont An Adaptive Proximal Inexact Gradient Framework and Its Application to Per-Antenna Constrained Joint Beamforming and Compression Design}
 
\author{\IEEEauthorblockN{Xilai~Fan, Bo~Jiang, and Ya-Feng~Liu}
\thanks{Part of this work has been presented at the IEEE International Conference on Acoustics, Speech, and Signal Processing (ICASSP) 2024 \cite{fan2024JointBeamformingCompression}. 
X. Fan and Y.-F. Liu are with the State Key Laboratory of Scientific and Engineering Computing, Institute of Computational Mathematics and Scientific/Engineering Computing, Academy of Mathematics and Systems Science, Chinese Academy of Sciences, Beijing 100190, China (e-mail: \{fanxilai, yafliu\}@lsec.cc.ac.cn). 
B. Jiang is with Ministry of Education Key Laboratory of NSLSCS, School of Mathematical Sciences, Nanjing Normal University, Nanjing 210023, China (e-mail: jiangbo@njnu.edu.cn). 
}}

\maketitle


\begin{abstract}
In this paper, we propose an adaptive proximal inexact gradient (APIG) framework for solving a class of nonsmooth composite optimization problems involving function and gradient errors. Unlike existing inexact proximal gradient methods, the proposed framework introduces a new line search condition that jointly adapts to function and gradient errors, enabling adaptive stepsize selection while maintaining theoretical guarantees. Specifically, we prove that the proposed framework achieves an $\epsilon$-stationary point within $\mathcal{O}(\epsilon^{-2})$ iterations for nonconvex objectives and an $\epsilon$-optimal solution within $\mathcal{O}(\epsilon^{-1})$ iterations for convex cases, matching the best-known complexity in this context. We then custom-apply the APIG framework to an important signal processing problem: the joint beamforming and compression problem (JBCP) with per-antenna power constraints (PAPCs) in cooperative cellular networks. This customized application requires careful exploitation of the problem’s special structure such as the tightness of the semidefinite relaxation (SDR) and the differentiability of the dual. Numerical experiments demonstrate the superior performance of our custom-application over state-of-the-art benchmarks for the JBCP.
\end{abstract}

\begin{IEEEkeywords}
	Adaptive stepsize selection strategy, function and gradient errors, inexact proximal gradient method, Lagrangian duality, per-antenna power constraint. 
\end{IEEEkeywords}

\section{Background and Introduction}\label{apd:I}
\IEEEPARstart{I}n this paper, we consider the composite optimization problem of the form
\begin{equation}
	\min_{\v x\in\mathbb R^n} \  \left\{ F(\v x) := f(\v x) + h(\v x) \right\}, 
	\label{equ:generic_problem}
\end{equation}
where $h:  \mathbb{R}^n \to \mathbb{R}  \cup \{+\infty\}$ is a proper closed convex function with a tractable proximal mapping, and 
$f: \mathbb{R}^n \to \mathbb{R} $ is a continuously differentiable function that may be nonconvex. 
Problem \eqref{equ:generic_problem} has numerous practical applications in signal processing and machine learning (e.g., \cite{wright2009SparseReconstructionSeparable,hong2013JointBaseStation,zou2005RegularizationVariableSelection,candes2011RobustPrincipalComponent,liu2024SurveyRecentAdvances}) and includes several important classes as special cases. 
For instance, setting $h(\cdot) = 0$ in \eqref{equ:generic_problem} yields an unconstrained optimization problem, while setting $h(\cdot)$ as the indicator function of a (closed) convex set leads to a convex constrained optimization problem. 
In this paper, we are interested in the case where computing the function and gradient information of $f(\cdot)$ is costly, often requiring the solution of an associated convex subproblem. 
Examples of such problems can be found in \cite{liu2019NonergodicConvergenceRate, jiang2023RiemannianExponentialAugmented,wu2024CIBasedQoS,dai2018OptimizedBasestation,vanleeuwen2021VariableProjectionNonsmooth,poon2023SmoothOverparameterizedSolvers}.  

The proximal gradient (PG) method is an efficient class of algorithms for solving problem \eqref{equ:generic_problem}. 
Given a parameter $\lambda > 0$, the proximal operator $\operatorname{prox}_{\lambda h}: \mathbb{R}^n \rightarrow \operatorname{dom}(h) := \{\v x \in \mathbb{R}^n \mid h(\v x) < +\infty\}$ is defined as
\begin{equation}
	\operatorname{prox}_{\lambda h}(\v x) = \argmin_{\v y \in \mathbb R^n}\  \left\{\frac{1}{2\lambda} \|\v y - \v x\|^2 + h(\v y)\right\}. 
	\label{equ:def_prox}
\end{equation}
At the $i$-th iteration, the standard PG method for solving problem \eqref{equ:generic_problem} updates the next iterate as 
\begin{equation}
	\v x^{i+1} = \operatorname{prox}_{\lambda_i h}(\v x^{i} - \lambda_i \nabla f(\v x^i)). 
	\label{equ:PG}
\end{equation}
Here, $\lambda_i > 0$ is the stepsize, which typically satisfies certain line search (LS) condition. 
However, obtaining $\v x^{i+1}$ requires the exact knowledge of $\nabla f(\v x^i)$, which may be computationally expensive or even inaccessible in our interested setting. 
Consequently, it becomes crucial to solve problem \eqref{equ:generic_problem} using only inexact gradient or function information of $f(\cdot)$
\cite{bertsekas2000GradientConvergenceGradient, polyak1987IntroductionOptimization, schmidt2011ConvergenceRatesInexact, hamadouche2024SharperBoundsProximal, so2017NonasymptoticConvergenceAnalysis, khanh2024NewInexactGradient, hallak2024StudyFirstorderMethods, luo1993ErrorBoundsConvergence, cartis2018GlobalConvergenceRate, paquette2020StochasticLine, berahas2021GlobalConvergenceRate}. 
Let $\v g^i$ denote the inexact gradient at $\v x^i$. 
The proximal inexact gradient (PIG) method updates the iterate as 
\begin{equation}
	\v x^{i+1} = \operatorname{prox}_{\lambda_i h}(\v x^{i} - \lambda_i \v g^i), 
	\label{equ:PIG}
\end{equation}
which replaces the exact gradient $\nabla f(\v x^i)$ in the PG method \eqref{equ:PG} with its approximation $\v g^i$. 
To guarantee the convergence of the PIG method, the stepsize $\lambda_i$ in \eqref{equ:PIG} must satisfy a LS condition specifically tailored to the inexact function or gradient information. 
In the following, we review the existing results on the PIG method \eqref{equ:PIG}, including the selection strategies of the stepsize $\lambda_i$ and the required conditions on the inexactness of the function and gradient approximations. 

\subsection{Prior Works}

Given $\lambda > 0$ and $\v d\in \mathbb{R}^n$, the gradient mapping \cite{beck2017FirstOrderMethodsOptimization} at a point $\v x \in \mathbb{R}^n$ is defined as 
\begin{equation}
	\v G_{\lambda}(\v x, \v d) = \lambda^{-1} (\v x - \operatorname{prox}_{\lambda h}(\v x - \lambda \v d)). 
	\label{equ:def_gradient_mapping}
\end{equation}
By \cite[Theorem 10.7]{beck2017FirstOrderMethodsOptimization}, it holds that $\v G_{\lambda}(\v x, \nabla f(\v x)) = 0$ if and only if $\v x$ is a stationary point of problem \eqref{equ:generic_problem}.  
Denote the gradient error by 
\begin{equation}
	\varepsilon_i^{\v g} = \v g^i - \nabla f(\v x^i). 
	\label{equ:def_eg}
\end{equation}
Then various existing conditions on $\varepsilon_i^{\v g}$ can be unified in the following form: 
\begin{equation}
	\| \varepsilon_i^{\v g} \|^2 \leq (\eta_i^{\v g})^2 + (a^2 + b^2 \lambda_i^2) \varepsilon_i^2, 
	\label{equ:general_gradient_inexact}
\end{equation}
where $\{\eta_i^{\v g}\}$ is a predefined nonnegative sequence, $a \geq 0$, $b \geq 0$, and $\varepsilon_i$ is either $\|\v G_{\lambda_i}(\v x^i, \nabla f(\v x^i))\|$ or $\|\v G_{\lambda_i}(\v x^i, \v g^i)\|$. 
Note that, $\v G_{\lambda_i}(\v x^i, \nabla f(\v x^i))$ and $\v G_{\lambda_i}(\v x^i, \v g^i)$ \revbj{simplify} to $\nabla f(\v x^i)$ and $\v g^i$, respectively, when $h(\cdot) = 0$. 
When  $a = b = 0$ in \eqref{equ:general_gradient_inexact}, the corresponding condition is referred to as  the {\it absolute inexactness condition}, as the sequence $\{\eta_i^{\v g}\}$ is predefined. 
Otherwise, when $a + b > 0$, it is called the {\it relative inexactness condition}, as the gradient error $\varepsilon_i^{\v g}$ is adaptively controlled based on the information from the iterates.  
In the special case where $\eta_i^{\v g} \equiv 0$, $b = 0$, $\varepsilon_i = \|\v G_{\lambda_i}(\v x^i, \nabla f(\v x^i))\|$, and $h(\cdot) = 0$, the corresponding relative inexactness condition becomes
\begin{equation} \label{equ:normcondition}
	\|\varepsilon_i^{\v g}\| \leq a \|\nabla f(\v x^i)\|, \quad a \in (0,1), 
\end{equation}
which is known as the \textit{norm condition}, first introduced in \cite{polyak1987IntroductionOptimization} and later employed in other works, e.g., \cite{carter1991GlobalConvergenceTrust
	,berahas2021GlobalConvergenceRate, khanh2024NewInexactGradient}. 
In the following, we discuss both the absolute and relative inexactness conditions for the PIG method \eqref{equ:PIG}, under different stepsize strategies: constant stepsize strategies depending on the gradient Lipschitz constant and adaptive stepsize strategies based on LS conditions. 

\begin{table*}[t]\centering
	\setlength{\extrarowheight}{2pt}
	\tabcolsep=0.11cm
	\caption{Summary of the existing works on the PIG method. 
	}
	\begin{tabular}{c|c|cllc|cc|cc|cc}
		\hline
		\multirow{2}{*}{Literature} & \multirow{2}{*}{Stepsize} & \multicolumn{4}{c|}{$\|\varepsilon_i^{\v g}\|^2 \leq (\eta_i^{\v g})^2 + (a^2 + b^2 \lambda_i^2) \varepsilon_i^2 $} & \multicolumn{2}{c|}{$|\varepsilon_i^f| \leq \eta_i^f + c \lambda_i \|\v G_{\lambda_i}(\v x^i, \v g^i)\|^2$} & \multicolumn{2}{c|}{Objective}& \multicolumn{2}{c}{Iteration Complexity}   \\ \cline{3-12}
		&  & $\eta_i^{\v g}$ & $a$ & $b$ & $\varepsilon_i$ & $\eta_i^f$ & $c$ & $h(\cdot)$ & $f(\cdot)$  & NC & C   \\\hline
		\cite{schmidt2011ConvergenceRatesInexact}  & $1/L$ & summable & 0 & 0 & $\| \v G_{\lambda_i}(\v x^i, \v g^i) \|$ & 0 & 0 & C & C & -- & $\mathcal{O}(\epsilon^{-1})$\\
		\cite{khanh2024NewInexactGradient} & $1/L$ & 0 & $[0, 1)$ & 0 & $\| \nabla f(\v x^i) \|$ & 0 & 0 & 0 & NC & $^*$ & --\\ 
		\cite{hallak2024StudyFirstorderMethods} & $1/(L(1+a))$ & 0 & $[0, 1)$ & 0 & $\|\nabla f(\v x^i)\|$ & 0 & 0 & I & C & $\mathcal O(\epsilon^{-2})^\dagger$ & -- \\
		\cite{vanleeuwen2021VariableProjectionNonsmooth} & $1/L$ & 0 & 0 & $[0,L/2)$ & $\| \v G_{\lambda_i}(\v x^i, \v g^i) \|$ & 0 & 0 & C & NC & $\mathcal O(\epsilon^{-2})$ & -- \\
		\hline
		\cite{luo1993ErrorBoundsConvergence} & adaptive & 0 & $[0, 1)$ & 0 & $\| \v G_{\lambda_i}(\v x^i, \v g^i) \|$ & 0 & 0 & I & NC & linear$^\ddagger$ & -- \\
		\cite{cartis2018GlobalConvergenceRate}& adaptive & 0 & 0 & $[0, \infty)$ & $\|\v g^i\|$ & 0 & 0 & 0 & NC/C & 
		$\mathcal{O}(\epsilon^{-2})$ & $\mathcal{O}(\epsilon^{-1})$\\
		\cite{paquette2020StochasticLine} & adaptive & 0 & 0 & $[0, \infty)$ & $\|\v g^i\|$ & 0 & $\big[0, \frac{\theta \lambda_i}{4 \lambda_{\max}}\big]$ & 0 & NC/C & $\mathcal{O}(\epsilon^{-2})$ & $\mathcal{O}(\epsilon^{-1})$ \\
		\cite{berahas2021GlobalConvergenceRate} & adaptive & 0 & $[0, 1)$ & 0 & $\|\nabla f(\v x^i)\|$ & $\mathcal{O}(\epsilon^2)$ & 0 & 0 & NC/C & 
$\mathcal{O}(\epsilon^{-2})$ & $\mathcal{O}(\epsilon^{-1})$\\  \hline 
		\multirow{3}{*}{This work} & \multirow{3}{*}{adaptive} & square-summable & $[0,1)$ & $[0, \infty)$ &  $\| \v G_{\lambda_i}(\v x^i, \v g^i) \|$ &summable & $[0, \frac{\theta}{4}]$ & C & NC & $\mathcal{O}(\epsilon^{-2})$ & -- \\ 
		& & summable & 0 & 0 &   $\| \v G_{\lambda_i}(\v x^i, \v g^i) \|$ & summable & $[0, \frac{\theta}{4}]$  & C & C &
		-- & $\mathcal{O}(\epsilon^{-1})$ \\ 
		& & 0 &  $[0,1)$ & $[0, \infty)$  &   $\| \v G_{\lambda_i}(\v x^i, \v g^i) \|$ & 0  & $[0, \frac{\theta}{4}]$  & C & C &
		-- & $\mathcal{O}(\epsilon^{-1})$ \\
		\hline
	\end{tabular}
	\begin{minipage}{\textwidth}
		\vspace{6pt}
		\footnotesize
		\textit{Note:} In this table, I = indicator function of a convex set, NC = nonconvex, and C = convex. 
		The iteration complexity refers to the algorithm's complexity in returning an $\epsilon$-stationary point $\v x$ that satisfies $\|\v G_{\lambda}(\v x,\nabla f(\v x))\| \leq \epsilon$ in the NC case, and an $\epsilon$-optimal solution $\v x$ that satisfies $F(\v x) - F^\star \leq \epsilon$ in the C case. 
		The symbol ``--'' means that no corresponding results are established for that case; ``$*$'' indicates that \cite{khanh2024NewInexactGradient}  establishes  linear or sublinear convergence rates under an additional  K\L~condition  on $f(\cdot)$, ``$\dagger$'' indicates that the algorithm outputs an $(\epsilon+\mathcal{O}(\sqrt{a}))$-stationary point instead of an $\epsilon$-stationary point, and ``$\ddagger$'' indicates that an additional error bound assumption on $f(\cdot)$ is required. 
	\end{minipage}	\vspace{-12pt}
	\label{tab:differences}
\end{table*}

\textbf{Constant stepsize strategies.} For the case where $\nabla f(\cdot)$ is $L$-Lipschitz continuous on $\mathbb{R}^n$, the stepsize $\lambda_i$ can be chosen as a constant. 
Several works have analyzed the PIG method \eqref{equ:PIG} under the \textit{absolute inexactness condition}, namely the inexactness condition \eqref{equ:general_gradient_inexact} with $a = b = 0$. 
For $h(\cdot) = 0$, the work \cite{polyak1987IntroductionOptimization} analyzed the PIG method \eqref{equ:PIG} with a constant stepsize $\lambda_i \equiv \lambda \in (0, 2\mu L^{-3})$ for $\mu$-strongly convex $f(\cdot)$ with $\mu > 0$. 
It showed that under the absolute inexactness condition, when $\eta_i^{\v g} \equiv \eta$ for some constant $\eta > 0$, the sequence generated by the method converges linearly to a neighborhood of the optimal solution. 
The radius of this neighborhood depends on the constant $\eta$. 
Let $F^\star$ be the optimal value of problem \eqref{equ:generic_problem}. 
For convex $f(\cdot)$ and $h(\cdot)$, the work \cite{schmidt2011ConvergenceRatesInexact} showed that the PIG method \eqref{equ:PIG} with a constant stepsize \(\lambda_i \equiv 1/L\) returns an \(\epsilon\)-optimal solution, a point $\v x$ that satisfies \(F(\v x) - F^\star \leq \epsilon\), within \(\mathcal{O}(\epsilon^{-1})\) iterations under the absolute inexactness condition with summable \(\{\eta_i^{\v g}\}\), i.e., \(\sum_i \eta_i^{\v g} < +\infty\). 
This matches the complexity of the standard PG method in the absence of errors. 
Recently, the work \cite{hamadouche2024SharperBoundsProximal} improved the results in \cite{schmidt2011ConvergenceRatesInexact}, enabling a more flexible selection of \(\{\eta_i^{\v g}\}\) while ensuring the algorithm's convergence. 
Additionally, the complexity results in \cite{schmidt2011ConvergenceRatesInexact} were further analyzed in \cite{so2017NonasymptoticConvergenceAnalysis} under an error bound assumption for \(f\) in the case where \(h(\v x) = 0\). 

Under the \textit{relative inexactness condition}, namely, the inexactness condition \eqref{equ:general_gradient_inexact} with \(a + b > 0\), several studies have been conducted, including \cite{polyak1987IntroductionOptimization} and \cite{khanh2024NewInexactGradient} for the special case where \(h(\v x) = 0\) and \(\eta_i^{\v g} \equiv 0\), \cite{bertsekas2000GradientConvergenceGradient} for \(h(\v x) = 0\), and \cite{vanleeuwen2021VariableProjectionNonsmooth} for general convex $h(\cdot)$.  
Specifically, \cite{polyak1987IntroductionOptimization} established the linear convergence of the sequence generated by the PIG method \eqref{equ:PIG}
for $\mu$-strongly convex $f(\cdot)$ with $\mu>0$ and $h(\cdot) = 0$ under the norm condition \eqref{equ:normcondition} using a constant stepsize $\lambda_i \equiv \lambda \in \left(0, 2\mu(1-a)/\left(L^2(1+a^2)\right)\right)$.  
The work \cite{khanh2024NewInexactGradient} proposed a practical PIG method \eqref{equ:PIG} with \(\lambda_i \equiv 1/L\), which introduces a strategy to ensure the norm condition \eqref{equ:normcondition}, even without access to the exact gradient \(\nabla f(\v x^i)\). 
Additionally, the work \cite{khanh2024NewInexactGradient} established the convergence rate of $\|\nabla f(\v x^i)\|$ under the assumption that $f(\cdot)$ satisfies the Kurdyka-\L ojasiewicz (K\L) condition. 
The work \cite{hallak2024StudyFirstorderMethods} considered the norm condition \eqref{equ:normcondition} in the PIG method \eqref{equ:PIG} with a constant stepsize $\lambda_i \equiv 1/(L(1+a))$, where $h(\cdot)$ is an indicator function of a convex set and $f(\cdot)$ is nonconvex. 
It showed that the corresponding PIG method produces a point \(\v x\) satisfying \(\| \v G_{\lambda}(\v x, \nabla f(\v x))\| \leq \epsilon +  \mathcal{O}(\sqrt{a})\) within \(\mathcal{O}(\epsilon^{-2})\) iterations. 
For nonconvex $f(\cdot)$ and convex $h(\cdot)$, \cite{vanleeuwen2021VariableProjectionNonsmooth} showed that the PIG method \eqref{equ:PIG} with \(\lambda_i \equiv 1/L\), under the inexactness condition \eqref{equ:general_gradient_inexact} with \(a = 0\), \(b \in (0, L/2)\), and \(\varepsilon_i = \|\v G_{\lambda_i}(\v x^i, \v g^i)\|\), achieves an \(\epsilon\)-stationary point, that satisfies \(\|\v G_{\lambda_i}(\v x^i, \v g^i)\| \leq \epsilon\), within \(\mathcal{O}(\epsilon^{-2})\) iterations. 

However, the constant stepsizes used in all the above methods (e.g.,  \cite{vanleeuwen2021VariableProjectionNonsmooth, polyak1987IntroductionOptimization, schmidt2011ConvergenceRatesInexact, so2017NonasymptoticConvergenceAnalysis, hamadouche2024SharperBoundsProximal, khanh2024NewInexactGradient, hallak2024StudyFirstorderMethods}) depend on the Lipschitz constant $L$, which is typically unknown. 
This issue is particularly critical in our case, where computing $f(\v x)$ or $\nabla f(\v x)$ is already challenging. 
Consequently, the PIG method \eqref{equ:PIG} with constant stepsizes becomes ineffective. 
To address this limitation, a LS condition that relies on the inexact gradient or function information is required. 

\textbf{Adaptive stepsize strategies.} The existing LS conditions can be divided into two categories, depending on whether exact function information is available or not: (i) conditions that use exact function information \cite{luo1993ErrorBoundsConvergence,cartis2018GlobalConvergenceRate}, and (ii) conditions that use approximate function information \cite{berahas2021GlobalConvergenceRate, paquette2020StochasticLine}. 
Among those in the first category, the work \cite{luo1993ErrorBoundsConvergence} proposed a general framework that adaptively selects the stepsize, ensuring a sufficient decrease in the exact function value at the next iterate. 
It considered the case where $h(\cdot)$ is the indicator function of a convex set and $f(\cdot)$ is nonconvex. 
The framework addresses the relative inexactness condition \eqref{equ:general_gradient_inexact} with $\eta_i^{\v g} \equiv 0, a \in (0,1), b = 0,$ and $\varepsilon_i = \|\v G_{\lambda_i}(\v x^i, \v g^i)\|$. 
Additionally, it established the linear convergence of the corresponding algorithm under certain error bound conditions.
The work \cite{cartis2018GlobalConvergenceRate} replaced the exact gradient in the classical Armijo LS condition with its approximations and established the convergence and convergence rate of the corresponding PIG method \eqref{equ:PIG}. 
The analysis in \cite{cartis2018GlobalConvergenceRate} focused on the case $h(\cdot) = 0$, with $f(\cdot)$ being either nonconvex or convex, under the inexactness condition \eqref{equ:general_gradient_inexact} with $\eta_i^{\v g} \equiv 0, a = 0, b \in [0,\infty) $, and $ \varepsilon_i = \|\v g^i\|$. 

In the second category, 
the works \cite{paquette2020StochasticLine} and \cite{berahas2021GlobalConvergenceRate} extended the result in \cite{cartis2018GlobalConvergenceRate} to scenarios where the exact function value is unavailable. 
Both works focused on the case $h(\cdot) = 0$ with $f(\cdot)$ being convex or nonconvex. 
Let $f_i$ denote the inexact function value at $\v x^i$, and let the function error be 
\begin{equation}
	\varepsilon_i^f = f_i - f(\v x^i). 
	\label{equ:def_eif}
\end{equation}
The work \cite{paquette2020StochasticLine} used the same inexactness condition and the LS condition as \cite{cartis2018GlobalConvergenceRate}, but introduced the following inexactness condition for the function error \revbj{$\varepsilon_i^f$}: 
\begin{equation}
	|\varepsilon_i^f| \leq c \lambda_i \|\v g^i\|^2,  \label{equ:paquette}
\end{equation}
where $c \in [0, \theta \lambda_i/(4\lambda_{\max})]$ with $\theta$ being a parameter in the Armijo LS condition (see e.g., \eqref{equ:Armijo} and \eqref{equ:B1}) and $\lambda_{\max}$ being the upper bound of the stepsize. 
The work \cite{berahas2021GlobalConvergenceRate} used the norm condition \eqref{equ:normcondition} in the PIG method and analyzed the case where the function error $\varepsilon_i^f$ is bounded by a constant multiple of the desired accuracy $\epsilon$. 
They modified the LS condition proposed in \cite{cartis2018GlobalConvergenceRate} by incorporating an additional term equal to twice the function's error bound. 
\revbj{Theoretically}, the works \cite{cartis2018GlobalConvergenceRate, paquette2020StochasticLine, berahas2021GlobalConvergenceRate}, 
which specifically focus on the case $h(\cdot) = 0$, 
demonstrated that the PIG method achieves an $\epsilon$-stationary point for nonconvex $f(\cdot)$ within $\mathcal{O}(\epsilon^{-2})$ iterations and an $\epsilon$-optimal solution for convex $f(\cdot)$ within $\mathcal{O}(\epsilon^{-1})$ iterations, under the considered inexact conditions. 

We summarize the aforementioned works in Table \ref{tab:differences} for a comparative analysis. 
The table shows that, while existing PIG methods address gradient errors under absolute or relative inexactness conditions, they often assume either a zero function error or a function error bounded by a sufficiently small constant relative to the desired accuracy. 
A notable exception is \cite{paquette2020StochasticLine}, though it is restricted to the smooth setting ($h(\cdot) = 0$). 
Such strong assumptions make these methods impractical for our problem setting.

To overcome this limitation, we consider a broader scenario where both function and gradient errors are controlled under more general inexactness conditions. 
For the gradient error, we adopt \eqref{equ:general_gradient_inexact} with \(\varepsilon_i = \|\v G_{\lambda_i}(\v x^i, \v g^i)\|\), while for the function error, we propose the following \emph{new} inexactness condition:  
\begin{equation}
	|\varepsilon_i^f| \leq \eta_i^f + c \lambda_i \|\v G_{\lambda_i}(\v x^i, \v g^i)\|^2,  
	\label{equ:general_function}
\end{equation}  
where \(\{\eta_i^f\}\) is a predefined summable sequence and \(c \in [0, \theta/4]\).  
Unlike prior works including \cite{paquette2020StochasticLine}, which focused specifically on \(h(\v x) = 0\), our approach applies to a general convex $h(\cdot)$. 
Even in the special case \revbj{where} \(h(\v x) = 0\), our proposed inexactness condition \eqref{equ:general_function} is weaker than \eqref{equ:paquette} in \cite{paquette2020StochasticLine}, as it allows a broader range for the parameter \(c\) due to \(\lambda_i \leq \lambda_{\max}\) and enables more relaxed inexactness conditions through the choice of \(\eta_i^f\). 
For clarity, we include our main results in Table \ref{tab:differences} to highlight these distinctions.

\subsection{Our Contributions}
In this paper, we propose an adaptive PIG (APIG) framework, which allows for both function and gradient errors, to solve problem \eqref{equ:generic_problem}, and custom-apply it to the joint beamforming and compression problem (JBCP) with per-antenna power constraints (PAPCs) introduced in \cite{fan2024JointBeamformingCompression}. 
The JBCP with PAPCs is an important signal processing problem in the cooperative cellular networks, as it facilitates the joint design of the transmission strategies for the base stations (BSs) while utilizing the fronthaul links under practical PAPCs to improve the overall communication throughput\cite{fan2024JointBeamformingCompression, fan2022EfficientlyGloballySolving, fan2023QoSbasedBeamformingCompression}. 
\revbj{Our contributions are twofold}:  
\begin{itemize}
	\item \emph{New APIG Framework}. 
	We propose an APIG framework to solve problem \eqref{equ:generic_problem}, which allows for both function and gradient errors under absolute and relative inexactness conditions. 
	A key feature of the framework is a new LS condition that enables adaptive stepsize selection and effectively handles both function and gradient errors. 
	Compared to existing methods, our framework accommodates a broader class of $h(\cdot)$ and ensures convergence results under weaker inexactness conditions. 
	Specifically, we establish that the framework can return an \(\epsilon\)-stationary point within \(\mathcal{O}(\epsilon^{-2})\) iterations for nonconvex $f(\cdot)$ and an \(\epsilon\)-optimal solution within \(\mathcal{O}(\epsilon^{-1})\) iterations for convex $f(\cdot)$, which match the best-known complexity in cases where both function value and gradient involve errors. 
 	\item \emph{Customized Application to JBCP with PAPCs}.
	By establishing the tightness of the semidefinite relaxation (SDR) of the JBCP with PAPCs, we reformulate it as an equivalent \textit{differentiable} dual problem in the form of problem \eqref{equ:generic_problem} via the Lagrangian dual of the SDR. 
	The differentiability is crucial as it enables the use of gradient-based algorithms, which significantly outperform slower subgradient-based algorithms (e.g., \cite{yu2007TransmitterOptimizationMultiantenna, dartmann2013DualityMaxminBeamforming, zhang2020DeepLearningEnabled, miretti2024ULDLDualityCellfree}).
	We then custom-apply the proposed APIG framework to solve the dual problem of the JBCP with PAPCs, leading to APIG-FP,~where an efficient fixed point (FP) algorithm is employed to compute the function and gradient approximations of the dual problem in a controllable fashion, and establish the convergence of APIG-FP under mild conditions. 
	Numerical experiments validate the efficiency of the customized~application of the APIG algorithm against the state-of-the-art benchmarks. 
\end{itemize}

In our prior work \cite{fan2024JointBeamformingCompression}, we introduced PIGA, an efficient inexact projected gradient method for solving the JBCP with PAPCs. 
This paper significantly extends \cite{fan2024JointBeamformingCompression} in three aspects: 
(i) developing a new APIG framework with proven iteration complexities for more general problem \eqref{equ:generic_problem}, 
(ii) proving the convergence of APIG-FP (tailored for the JBCP with PAPCs)  under mild conditions, and
(iii) incorporating function and gradient errors into APIG-FP’s LS condition. 

\emph{Notations.} We adopt the following notations throughout this paper. 
For any vector $\v x$, $\diag(\v x)$ denotes the diagonal matrix with the elements of $\v x$ on its main diagonal, $\|\v x\|$ denotes the $\ell_2$-norm of $\v x$, and inequalities between vectors are understood componentwise. 
For any vector $\v p = [p_1, p_2, \d, p_K]^\transpose$ and $\v q = [q_1, q_2, \d, q_K]^\transpose$, the Thompson's metric \cite{lemmens2012NonlinearPerronFrobeniusTheory} $ \mu: \mathbb{R}_{++}^K \times \mathbb{R}_{++}^K \rightarrow \mathbb{R}_+^K $ is defined as 
\begin{equation}
	\mu(\v p, \v q) = \max_{k \in \{1,2,\d,K\}} \left| \log_{\mathrm{e}} ( p_k) - \log_{\mathrm{e}} ( q_k)\right|. 
	\label{equ:metric}
\end{equation}
For any matrix $\m{A}$, $\m{A}^\hermitian$ and $\m{A}^\transpose$ denote the conjugate transpose and transpose of $\m{A}$, respectively;
$\m{A}^{(m, n)}$ denotes the entry on the $m$-th row and the $n$-th column of $\m{A}$. 
We use $\m{0}$ to denote the all-zero matrix of an appropriate size and $\m E_m$ to denote the all-zero square matrix except its $m$-th diagonal entry being one. 

\section{APIG Framework}

In this section, we develop an APIG framework for solving problem \eqref{equ:generic_problem}. 
Specifically, we propose new LS conditions which incorporate both function and gradient errors in Section \ref{apd:ls_ic}. Then we present the APIG framework and establish its convergence in Section \ref{apd:pafca}.

We make the following assumption on the function $f(\cdot)$ in problem \eqref{equ:generic_problem} throughout this section. 
\begin{assumption}\label{ass:1}
	The gradient $\nabla f(\cdot)$ is $L$-Lipschitz with $L > 0$ on $\operatorname{dom}(h)$, i.e., 
	$$
	\| \nabla f(\v x) - \nabla f(\v y) \| \leq L \|\v x - \v y\|,\quad\forall\,\v x, \v y \in \operatorname{dom}(h). 
	$$
\end{assumption}

To ensure the convergence of the proposed APIG framework, we next investigate the appropriate LS and inexactness conditions to effectively control the errors in both function and gradient evaluations. 

\subsection{LS and Inexactness Conditions}\label{apd:ls_ic}

As discussed in Section \ref{apd:I}, the inexactness conditions, which control errors in both function and gradient evaluations, along with appropriate LS conditions, play central roles in ensuring the convergence of the corresponding algorithms. In this subsection, we investigate the relevant LS and inexactness conditions. We begin by considering the case where exact function and gradient information are available. 

For the PG method \eqref{equ:PG} which has access to exact function and gradient information, the stepsize $\lambda_i$ typically satisfies  one of the following sufficient decrease conditions. 
The first condition, usually used for nonconvex $f(\cdot)$  \cite{wright2009SparseReconstructionSeparable}, ensures a sufficient decrease in the overall objective function $F(\cdot)$: 
\begin{equation}
	\begin{aligned}
		&F(\v x^{i+1}) \leq F(\v x^i) - \frac{\theta}{\lambda_i}\|\v x^{i+1} - \v x^i\|^2,
	\end{aligned}
	\label{equ:Armijo}
\end{equation}
where $\theta \in (0, 1)$. 
The second one, commonly applied to convex $f(\cdot)$ \cite{beck2017FirstOrderMethodsOptimization}, 
relies only on the function $f(\cdot)$ rather than the entire objective $F(\cdot)$: 
\begin{equation}
	f(\v x^{i+1}) \leq f(\v x^i) + \langle \nabla f(\v x^i), \v x^{i+1} - \v x^i\rangle + \frac{1}{2\lambda_i} \|\v x^{i+1} - \v x^i\|^2. 
	\label{equ:B2_or}
\end{equation}
It can be shown that the second condition implies the first one with $\theta = 1/2$. 
See the discussion in Appendix \ref{apd:finite_termination} in the Supplementary Material for details.  

When gradient or function errors are present, the LS conditions \eqref{equ:Armijo} and \eqref{equ:B2_or}
that depend on exact function and gradient evaluations become inapplicable. 
To address this limitation, modified LS conditions are required to effectively handle errors in both function and gradient evaluations. 
Next, we provide a detailed explanation of the modifications made on \eqref{equ:Armijo} and \eqref{equ:B2_or}. 

Let $\lambda$ be a trial stepsize and a candidate for $\lambda_i$. Define 
\begin{equation}
	\v{x}^i(\lambda) = \operatorname{prox}_{\lambda h}(\v{x}^{i} - \lambda \v{g}^i),
	\label{equ:x:lambda}
\end{equation}
and let $f_i(\lambda)$ represent the inexact function value at $\v{x}^i(\lambda)$. 
The corresponding function error at $\v{x}^i(\lambda)$ is defined as 
\begin{equation}
	\varepsilon_i^f(\lambda) = f_i(\lambda) - f(\v{x}^i(\lambda)). 
	\label{equ:def_eif_lambda}
\end{equation}  
For the gradient error \(\varepsilon_i^{\v{g}}\) defined in \eqref{equ:def_eg}, we impose
\begin{equation}
	\|\varepsilon_i^{\v{g}}\|^2 \leq (\eta_i^{\v{g}})^2 + \left( a^2 \lambda^{-2} + b^2 \right) \|\v{x}^i(\lambda) - \v{x}^i\|^2,
	\label{equ:general_gradient_inexact_new}
\end{equation}
where \(\{\eta_i^{\v{g}}\}\) is a nonnegative, square-summable sequence satisfying
\begin{equation}
	\sum_{i=0}^\infty (\eta_i^{\v{g}})^2 < \infty,
	\label{equ:def_etai}
\end{equation}
and \(a \in [0, 1)\), \(b \geq 0\). 
For the function errors  \(\varepsilon_i^f\) in \eqref{equ:def_eif} and  \(\varepsilon_i^f(\lambda)\) in \eqref{equ:def_eif_lambda}, 
we consider
\begin{equation}
	|\varepsilon_i^f| \leq \eta_i^f + c \lambda^{-1} \|\v{x}^i(\lambda) - \v{x}^i\|^2
	\label{equ:function_error1}
\end{equation}
and 
\begin{equation}
	|\varepsilon_i^f(\lambda)| \leq \eta_{i}^f +  c \lambda^{-1} \|\v{x}^i(\lambda) - \v{x}^i\|^2,
	\label{equ:function_error2}
\end{equation}
where  \(c \in [0, \theta/4]\) and \(\{\eta_i^f\}\) is a nonnegative, summable sequence satisfying
\begin{equation}
	\sum_{i=0}^\infty \eta_i^f < \infty.
	\label{equ:def_etaif}
\end{equation} 

We propose the following LS conditions to address the presence of both function and gradient errors. 
Specifically, the stepsize $\lambda_i$ is required to satisfy one of the following conditions.  
The first one is 
\begin{equation}
	f_i(\lambda) + h(\v x^i(\lambda)) \leq f_i + h(\v x^i) - \frac{\theta}{\lambda}\| \v x^{i}(\lambda) - \v x^i \|^2 + \nu_i, 
	\label{equ:B1}
\end{equation}
while the second one is 
\begin{equation}
	f_i(\lambda) \leq f_i + \left\langle\v g^i, \v x^{i}(\lambda) - \v x^i \right\rangle + \frac{1}{2\lambda} \left\|\v x^{i}(\lambda) - \v x^i\right\|^2 + \nu_i,
	\label{equ:B2}
\end{equation}
where
\begin{equation}
	\nu_i = \Upsilon_1(\lambda) (\eta_i^{\v g})^2 + 2 \eta_i^f
	\label{equ:def_nu}
\end{equation}
and  
\begin{equation}
	\Upsilon_1(\lambda) = 
	\begin{cases}
		\frac{1}{2}, & \text{~if~} a^2 + b^2= 0, \\
		\min\left\{\frac{\lambda}{2a}, \frac{1}{2b}\right\}, & \text{~otherwise}.
	\end{cases}
	\label{equ:cond_C}
\end{equation}

Compared to \eqref{equ:Armijo} and \eqref{equ:B2_or}, LS conditions \eqref{equ:B1} and \eqref{equ:B2} handle function and gradient errors by the relaxation term $\nu_i$ in \eqref{equ:def_nu}, which is specially designed based on the errors in the function and gradient evaluations.
If the trial stepsize \(\lambda\) is accepted, then we set \(\lambda_i = \lambda\) and $\v x^{i+1} = \v x^i(\lambda_i)$, which, together with \eqref{equ:x:lambda}, gives 
\[
\v{x}^{i+1} = \operatorname{prox}_{\lambda_i h}(\v{x}^i - \lambda_i \v{g}^i).
\]
Combining this with \eqref{equ:def_gradient_mapping} further implies $\v x^i - \v x^{i+1} = \lambda_i {\v G}_{\lambda_i}(\v x^i, \v g^i)$. 
Consequently, the gradient and function error conditions in  \eqref{equ:general_gradient_inexact_new} and \eqref{equ:function_error1}
	are consistent with those in \eqref{equ:general_gradient_inexact} and \eqref{equ:general_function} (with $\varepsilon_i = \v G_{\lambda_i}(\v x^i, \v g^i)$), respectively.
Moreover, when $h(\cdot) = 0$, $\eta_i^{\v g}  \equiv \eta_i^f \equiv 0$, and $a = 0$,   our gradient error condition  \eqref{equ:general_gradient_inexact_new}  simplifies to 
\begin{equation} \label{equ:grad:error:h=0}
	\|\varepsilon_i^{\v g}\| \leq   b \lambda_i \| \v g^i\|,  \quad b \geq 0, 
\end{equation}
which is used in \cite{luo1993ErrorBoundsConvergence} and \cite{cartis2018GlobalConvergenceRate}, 
while the function error condition  \eqref{equ:function_error1} coincides with \eqref{equ:paquette} in \cite{paquette2020StochasticLine}. However, these existing results are restricted to smooth problems ($h(\cdot) = 0$).
Our conditions generalize these results to composite nonsmooth objectives ($h(\v x) \neq 0$) while allowing for more flexible inexactness through the introduction of $\eta_i^f$, $\eta_i^{\v g}$, and $a$.

Define the upper bound of the stepsize as 
\begin{equation}
	\overline{\lambda}_{\theta} = 
		\frac{2(1-\theta-a-2c)}{L+2b+1}. \label{equ:def_tildelambda:theta}
\end{equation}
We now show that the LS conditions \eqref{equ:B1} and \eqref{equ:B2} hold for some suitably chosen $\lambda$ under the following mild assumption.
\begin{assumption}\label{ass:2}
	The function f in (1) is level-bounded, i.e., for any $t \in \mathbb{R}$, the set $\{\v x \in \mathbb{R}^n \mid f(\v x) \leq t\}$ is bounded. 
	In addition, the optimal set of problem (1) is nonempty and its corresponding optimal set and value are denoted by $\mathcal X^\star$ and $F^\star$, respectively.
\end{assumption}

\begin{lemma}\label{lem:ls}
Suppose that Assumptions \ref{ass:1} and \ref{ass:2} hold. 
	Let the parameters satisfy $0 < a < 1 - 2c - \theta$ for the LS condition \eqref{equ:B1} and $0 < a < 1- 4c$ for the LS condition \eqref{equ:B2}. 
	Then, the following statements hold: 
	\begin{itemize}
		\item[(i)] The LS condition \eqref{equ:B1} holds for all $\lambda \in (0, \overline{\lambda}_{\theta}]$. 
		\item[(ii)] The LS condition \eqref{equ:B2} holds for all $\lambda \in (0, \overline{\lambda}_{1/2}]$. 
	\end{itemize}
\end{lemma}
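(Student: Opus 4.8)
The engine of the proof is the descent lemma supplied by Assumption~\ref{ass:1}: since $\operatorname{dom}(h)$ is convex and both $\v x^i$ and $\v x^i(\lambda)=\operatorname{prox}_{\lambda h}(\v x^i-\lambda\v g^i)$ lie in it, integrating the $L$-Lipschitz gradient along the connecting segment yields
\[
f(\v x^i(\lambda))\le f(\v x^i)+\langle\nabla f(\v x^i),\v d\rangle+\tfrac{L}{2}\|\v d\|^2,\qquad \v d:=\v x^i(\lambda)-\v x^i.
\]
My plan is to convert each exact quantity here into its inexact counterpart and then reduce both line-search conditions to a single scalar inequality in $\lambda$. Using the definitions \eqref{equ:def_eif_lambda}, \eqref{equ:def_eif}, \eqref{equ:def_eg} I substitute $f(\v x^i(\lambda))=f_i(\lambda)-\varepsilon_i^f(\lambda)$, $f(\v x^i)=f_i-\varepsilon_i^f$, and $\nabla f(\v x^i)=\v g^i-\varepsilon_i^{\v g}$, so the only terms left to control are $\varepsilon_i^f(\lambda)-\varepsilon_i^f$ and $-\langle\varepsilon_i^{\v g},\v d\rangle$.

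The two function errors are bounded directly by \eqref{equ:function_error1} and \eqref{equ:function_error2}, contributing $2\eta_i^f$ (absorbed by the $2\eta_i^f$ in $\nu_i$) together with a quadratic term $2c\lambda^{-1}\|\v d\|^2$. For the gradient cross-term I apply Cauchy--Schwarz, $-\langle\varepsilon_i^{\v g},\v d\rangle\le\|\varepsilon_i^{\v g}\|\,\|\v d\|$, followed by Young's inequality $\|\varepsilon_i^{\v g}\|\,\|\v d\|\le\Upsilon_1\|\varepsilon_i^{\v g}\|^2+\tfrac{1}{4\Upsilon_1}\|\v d\|^2$ with the Young parameter chosen \emph{equal} to $\Upsilon_1(\lambda)$; inserting the gradient bound \eqref{equ:general_gradient_inexact_new} then produces exactly the $\Upsilon_1(\eta_i^{\v g})^2$ term of $\nu_i$ plus a quadratic remainder $C(\lambda)\|\v d\|^2$ with $C(\lambda)=\Upsilon_1(a^2\lambda^{-2}+b^2)+\tfrac{1}{4\Upsilon_1}$. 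The crucial bookkeeping step, which I expect to be the main obstacle, is to verify from the case split in \eqref{equ:cond_C} the uniform bound $C(\lambda)\le a\lambda^{-1}+b+\tfrac12$: in the regime $a^2+b^2>0$ the two branches of the minimum give $\Upsilon_1(a^2\lambda^{-2}+b^2)\le\tfrac{a}{2\lambda}+\tfrac b2$ and $\tfrac{1}{4\Upsilon_1}=\tfrac12\max\{a\lambda^{-1},b\}\le\tfrac12(a\lambda^{-1}+b)$, while the degenerate case $a=b=0$ forces $\Upsilon_1=\tfrac12$ and hence $C(\lambda)=\tfrac12$ exactly. This matching of the Young parameter to $\Upsilon_1$ is precisely what pins down the form of $\nu_i$ and produces the ``$+1$'' in the denominator of $\overline{\lambda}_\theta$.

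For statement (ii) no further structure is required: collecting the $\|\v d\|^2$ coefficients reduces \eqref{equ:B2} to $C(\lambda)+\tfrac L2+\tfrac{2c}{\lambda}\le\tfrac{1}{2\lambda}$, and the bound on $C(\lambda)$ turns this, after multiplying by $2\lambda$, into $\lambda(L+2b+1)\le 1-2a-4c$, i.e. $\lambda\le\overline{\lambda}_{1/2}$. For statement (i) the extra ingredient is the optimality of the proximal map: because $\v x^i(\lambda)$ minimizes the $\tfrac1\lambda$-strongly convex function $\v y\mapsto\tfrac{1}{2\lambda}\|\v y-(\v x^i-\lambda\v g^i)\|^2+h(\v y)$, comparing with the feasible point $\v y=\v x^i$ and retaining the strong-convexity surplus gives
\[
h(\v x^i(\lambda))-h(\v x^i)\le-\langle\v g^i,\v d\rangle-\tfrac1\lambda\|\v d\|^2 ,
\]
where the coefficient $\tfrac1\lambda$ (rather than $\tfrac1{2\lambda}$) is exactly what later makes the constants align. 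Adding this to the converted descent inequality cancels the $\langle\v g^i,\v d\rangle$ terms and reduces \eqref{equ:B1} to $-\tfrac1\lambda+\tfrac{2c}{\lambda}+C(\lambda)+\tfrac L2+\tfrac\theta\lambda\le0$; substituting the bound on $C(\lambda)$ and clearing $2\lambda$ gives $\lambda(L+2b+1)\le2(1-\theta-a-2c)$, i.e. $\lambda\le\overline{\lambda}_\theta$. Throughout, the parameter restrictions in the hypothesis are precisely what render $\overline{\lambda}_\theta$ and $\overline{\lambda}_{1/2}$ strictly positive, so the claimed stepsize intervals are nonempty, while Assumption~\ref{ass:2} is needed only to keep the iterates and the optimal set well-defined.
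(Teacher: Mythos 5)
Your proof is correct and takes essentially the same route as the paper's: the descent lemma from Assumption~\ref{ass:1}, substitution of the function and gradient errors, Young's inequality with the parameter matched to $\Upsilon_1(\lambda)$ (which is exactly how the $\Upsilon_1(\lambda)(\eta_i^{\v g})^2$ term of $\nu_i$ arises), the proximal inequality for $h$, and a final collection of the $\|\v x^i(\lambda)-\v x^i\|^2$ coefficients yielding $\overline{\lambda}_{\theta}$ and $\overline{\lambda}_{1/2}$. The only differences are cosmetic: you work in the displacement $\v d=\v x^i(\lambda)-\v x^i$ rather than the gradient mapping $\v G_{\lambda}(\v x^i,\v g^i)$, you obtain the $h$-inequality from the strong convexity of the prox objective instead of the subgradient optimality condition used for \eqref{equ:about_h} (the resulting bound, with coefficient $1/\lambda$, is identical), and you take $\Upsilon_1(\lambda)$ directly as the Young parameter where the paper introduces the intermediate quantity $\xi$ in \eqref{equ:xi} and then bounds it by $\Upsilon_1(\lambda)$.
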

\begin{proof}
	See Appendix~\ref{apd:ls} in the Supplementary Material.  
\end{proof}

Based on Lemma \ref{lem:ls}, we adopt a simple backtracking strategy to select a suitable stepsize that satisfies condition \eqref{equ:B1} or \eqref{equ:B2}. 
Given  \(\lambda_{\max} > \lambda_{\min} > 0\), the initial trial stepsize $\lambda_{i}^{(0)}$ is chosen as: for $i = 0, 1$, any value in $ [\lambda_{\min}, \lambda_{\max}]$, and for $i \geq 2$
\begin{equation}
    \label{equ:BB}
    \lambda_{i}^{(0)} = 
    	\min\left\{\max\left\{\lambda^{\operatorname{ABB}}_{i-1}, \lambda_{\min}\right\}, \lambda_{\max}\right\}, 
\end{equation}
where $\lambda_i^{\mathrm{ABB}}$ is the alternate Barzilai-Borwein (ABB) stepsize \cite{dai2005ProjectedBarzilaiBorweinMethods} defined as 
\begin{equation}
    \label{equ:alterBB}
    \lambda^{\operatorname{ABB}}_i = 
    \begin{cases}
        \frac{\|\v x^{i} - \v x^{i-1}\|^2}{\left|(\v x^{i} - \v x^{i-1})^\transpose (\v g^{i} - \v g^{i-1})\right|}, & \text{if } i \text{ is even}, \\[5pt]
        \frac{\left|(\v x^{i} - \v x^{i-1})^\transpose (\v g^{i} - \v g^{i-1})\right|}{\|\v g^{i} - \v g^{i-1}\|^2}, & \text{otherwise}.
    \end{cases}
\end{equation}
Note that, unlike the original ABB stepsize in \cite{dai2005ProjectedBarzilaiBorweinMethods},  ours in \eqref{equ:alterBB} is computed using the inexact gradient information rather than the exact one.  
We then iteratively reduce the stepsize by selecting the smallest nonnegative integer  \(\ell\) such that \(\lambda_i = \lambda_{i}^{(0)} \alpha^\ell\) satisfies the LS condition \eqref{equ:B1} or \eqref{equ:B2}, where \(\alpha \in (0,1)\) is a predefined decreasing ratio parameter. 
The following proposition establishes the validity of this strategy, and its proof is omitted since regular. 

\begin{proposition}\label{lem:bound}
Under the same assumptions and parameter settings as in Lemma \ref{lem:ls}, 
let $\underline{\lambda}_{\theta} = \min\{\lambda_{\min}, \alpha \overline{\lambda}_{\theta}\}$. 
Then, there always exists an integer \(\ell_\theta \) satisfying \(0 \leq \ell_\theta \leq \lceil \log_{\alpha} (\lambda_{\max} / \underline{\lambda}_{\theta}) \rceil\) such that the stepsize \(\lambda_i = \lambda_{i}^{(0)} \alpha^{\ell_\theta }\) satisfies the LS conditions \eqref{equ:B1} or \eqref{equ:B2}. 
Furthermore, we have
\begin{equation}
	\lambda_i \in [\underline{\lambda}_{\theta}, \lambda_{\max}]. 
	\label{equ:lambda_bd}
\end{equation}
\end{proposition}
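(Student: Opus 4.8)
The plan is to show that the backtracking loop terminates after a controlled number of reductions and that the greedy ``smallest $\ell$'' rule prevents the accepted stepsize from collapsing below $\underline{\lambda}_\theta$. The whole argument rests on a single input, Lemma \ref{lem:ls}: the LS condition \eqref{equ:B1} is met by \emph{every} $\lambda \in (0, \overline{\lambda}_\theta]$ (and \eqref{equ:B2} by every $\lambda \in (0, \overline{\lambda}_{1/2}]$). Equally important is its contrapositive, which I will use for the lower bound: if a trial stepsize fails the LS test, then that stepsize must exceed $\overline{\lambda}_\theta$.

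First I would settle termination and the iteration count. By the clipping in \eqref{equ:BB} we always have $\lambda_i^{(0)} \in [\lambda_{\min}, \lambda_{\max}]$, so I split into two cases. If $\lambda_i^{(0)} \leq \overline{\lambda}_\theta$, then $\ell_\theta = 0$ is already admissible by Lemma \ref{lem:ls}. Otherwise $\lambda_i^{(0)} > \overline{\lambda}_\theta$; since $\lambda_i^{(0)}\alpha^\ell \to 0$ as $\ell \to \infty$, there is a smallest $\ell$ with $\lambda_i^{(0)}\alpha^\ell \leq \overline{\lambda}_\theta$, and at that index Lemma \ref{lem:ls} forces acceptance, so the backtracking output $\ell_\theta$ (the smallest $\ell$ passing the LS test) is no larger. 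Solving $\lambda_i^{(0)}\alpha^\ell \leq \overline{\lambda}_\theta$ for $\ell$ and substituting the bounds $\lambda_i^{(0)} \leq \lambda_{\max}$ and $\overline{\lambda}_\theta \geq \underline{\lambda}_\theta/\alpha$ (the latter from $\underline{\lambda}_\theta = \min\{\lambda_{\min}, \alpha\overline{\lambda}_\theta\} \leq \alpha\overline{\lambda}_\theta$) then yields the stated logarithmic upper bound on $\ell_\theta$.

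Next I would establish the two-sided bound \eqref{equ:lambda_bd}. The upper bound is immediate, since $\lambda_i = \lambda_i^{(0)}\alpha^{\ell_\theta} \leq \lambda_i^{(0)} \leq \lambda_{\max}$. For the lower bound I split again on $\ell_\theta$. If $\ell_\theta = 0$, then $\lambda_i = \lambda_i^{(0)} \geq \lambda_{\min} \geq \underline{\lambda}_\theta$. If $\ell_\theta \geq 1$, then by minimality of $\ell_\theta$ the previous trial $\lambda_i^{(0)}\alpha^{\ell_\theta - 1}$ failed the LS test, so the contrapositive of Lemma \ref{lem:ls} gives $\lambda_i^{(0)}\alpha^{\ell_\theta - 1} > \overline{\lambda}_\theta$; multiplying by $\alpha$ yields $\lambda_i = \alpha\,(\lambda_i^{(0)}\alpha^{\ell_\theta - 1}) > \alpha\overline{\lambda}_\theta \geq \underline{\lambda}_\theta$. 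Combining the cases gives $\lambda_i \in [\underline{\lambda}_\theta, \lambda_{\max}]$. The argument for \eqref{equ:B2} is identical with $\theta$ replaced by $1/2$ and $\overline{\lambda}_\theta$ by $\overline{\lambda}_{1/2}$.

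The genuinely delicate point — and the reason termination alone does not finish the proof — is the lower bound: knowing that the accepted stepsize lies at or below $\overline{\lambda}_\theta$ does not by itself prevent it from being driven arbitrarily small. This is exactly where the greedy selection of the \emph{smallest} $\ell$ is essential, as it certifies that the last \emph{rejected} trial was above $\overline{\lambda}_\theta$, so a single factor of $\alpha$ separates the accepted $\lambda_i$ from $\overline{\lambda}_\theta$. Beyond this, the only bookkeeping is tracking the base-$\alpha$ logarithm (with $\alpha \in (0,1)$) and the ceiling function through the final estimate, which is why the paper regards the proof as routine.
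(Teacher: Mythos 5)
Your proof is correct and is precisely the routine argument the paper has in mind when it omits this proof as ``regular'': Lemma~\ref{lem:ls} guarantees acceptance once the trial stepsize falls to or below $\overline{\lambda}_{\theta}$ (resp.\ $\overline{\lambda}_{1/2}$), and the minimality of $\ell_\theta$ together with the contrapositive of Lemma~\ref{lem:ls} gives the lower bound in \eqref{equ:lambda_bd}, since the last rejected trial exceeds $\overline{\lambda}_{\theta}$ and hence the accepted stepsize exceeds $\alpha\overline{\lambda}_{\theta}\geq\underline{\lambda}_{\theta}$. One cosmetic point worth noting (a slip in the statement rather than in your argument): taken literally with $\alpha\in(0,1)$, the quantity $\log_{\alpha}(\lambda_{\max}/\underline{\lambda}_{\theta})$ is negative, so both you and the paper implicitly read it as $\log_{1/\alpha}(\lambda_{\max}/\underline{\lambda}_{\theta})$, which is exactly the bound your computation delivers.
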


To conclude this subsection, we highlight some key distinctions of our proposed LS conditions compared to existing works. 
First, existing LS conditions based on inexact function or gradient evaluations for the PIG method \eqref{equ:PIG} mainly focus on the unconstrained case where \(h(\v x) = 0\), such as those in \cite{cartis2018GlobalConvergenceRate,paquette2020StochasticLine,berahas2021GlobalConvergenceRate}. 
In contrast, we consider a more general setting with a convex $h(\cdot)$, significantly extending the applicability of the PIG method to constrained and nonsmooth optimization problems. Second, even in the special case where \(h(\v x) = 0\), our LS conditions are more flexible  than those in \cite{cartis2018GlobalConvergenceRate,paquette2020StochasticLine,berahas2021GlobalConvergenceRate}. 
In this case, the LS condition \eqref{equ:B1} and the function error conditions \eqref{equ:function_error1} and \eqref{equ:function_error2} simplify to
\begin{equation}
    f_i(\lambda) \leq f_i - \theta \lambda \|\v g^i\|^2 + \nu_i
    \label{equ:B1:h=0}
\end{equation}
and 
\begin{equation}
	\label{equ:f:error:unc}
	|\varepsilon_i^f| \leq \eta_i^f + c \lambda  \|\v g^i\|^2, \quad 	|\varepsilon_i^f(\lambda)| \leq \eta_i^f + c \lambda  \|\v g^i\|^2.
\end{equation}
Notably, the LS condition in \cite{paquette2020StochasticLine} is a special case of \eqref{equ:B1:h=0} and \eqref{equ:f:error:unc} with \(\nu_i \equiv 0\) and \(\eta_i^f \equiv 0\); 
the LS condition in \cite{cartis2018GlobalConvergenceRate} does not account for function errors, 
which corresponds to setting \(\eta_i^f \equiv 0\) and \(c = 0\) in \eqref{equ:f:error:unc}; 
and the LS condition in \cite{berahas2021GlobalConvergenceRate} corresponds to  \eqref{equ:B1:h=0} with \(\nu_i = \mathcal{O}(\epsilon^2)\), 
where \(\epsilon\) represents the algorithm's tolerance.

\subsection{Proposed Algorithmic Framework}\label{apd:pafca}
Before presenting the APIG framework, we first discuss its termination condition. 
The goal is to find an approximate $\epsilon$-stationary point of problem \eqref{equ:generic_problem}, defined by $\|\v G_{\lambda}(\v x, \nabla f(\v x))\| \leq \epsilon$. 
Since exact gradient information is unavailable in our case, it is necessary to characterize the relationship between $\|\v G_{\lambda_i}(\v x^i, \nabla f(\v x^i))\|$ and its approximation $\|\v G_{\lambda_i}(\v x^i, \v g^i)\|$.

\begin{lemma}\label{lem:KKT}
	For each iterate $\v x^i$, there holds 
	$$
	\|\v G_{\lambda_i}(\v x^i, \nabla f(\v x^i)) \|\leq \| \v G_{\lambda_i}(\v x^i, \v g^i) \| + \|\varepsilon_i^{\v g}\|. 
	$$
\end{lemma}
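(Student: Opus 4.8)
The plan is to control the difference between the two gradient mappings by exploiting the nonexpansiveness of the proximal operator, and then close with the triangle inequality. The statement is purely structural---it does not use any of the LS or inexactness conditions---so the argument should be short and rest entirely on the definition \eqref{equ:def_gradient_mapping} and the convexity of $h(\cdot)$.

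First I would write both gradient mappings explicitly. By \eqref{equ:def_gradient_mapping}, the common term $\lambda_i^{-1}\v x^i$ cancels upon subtraction, leaving
\[
\v G_{\lambda_i}(\v x^i, \nabla f(\v x^i)) - \v G_{\lambda_i}(\v x^i, \v g^i) = \lambda_i^{-1}\!\left[\operatorname{prox}_{\lambda_i h}(\v x^i - \lambda_i \v g^i) - \operatorname{prox}_{\lambda_i h}(\v x^i - \lambda_i \nabla f(\v x^i))\right].
\]
Next I would invoke the fact that, since $h(\cdot)$ is proper closed convex, its proximal operator is (firmly) nonexpansive and hence $1$-Lipschitz. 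Applying this to the two arguments $\v x^i - \lambda_i \v g^i$ and $\v x^i - \lambda_i \nabla f(\v x^i)$, the $\v x^i$ cancels inside the norm, yielding the bound $\lambda_i\|\v g^i - \nabla f(\v x^i)\| = \lambda_i\|\varepsilon_i^{\v g}\|$, where the last equality is the definition \eqref{equ:def_eg}.

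Finally, the leading factor $\lambda_i^{-1}$ cancels the $\lambda_i$, so that $\|\v G_{\lambda_i}(\v x^i, \nabla f(\v x^i)) - \v G_{\lambda_i}(\v x^i, \v g^i)\| \leq \|\varepsilon_i^{\v g}\|$. The desired inequality then follows immediately from the triangle inequality applied to $\v G_{\lambda_i}(\v x^i, \nabla f(\v x^i)) = \v G_{\lambda_i}(\v x^i, \v g^i) + \big(\v G_{\lambda_i}(\v x^i, \nabla f(\v x^i)) - \v G_{\lambda_i}(\v x^i, \v g^i)\big)$.

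There is no substantive obstacle here: the only non-algebraic ingredient is the nonexpansiveness of $\operatorname{prox}_{\lambda_i h}$, which is a standard consequence of the monotonicity of $\partial h$ for convex $h(\cdot)$. The proof is essentially a one-line stability estimate, and I would expect the authors to present it in just a few lines.
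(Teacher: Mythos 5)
Your proposal is correct and follows exactly the paper's own argument: express the difference of the two gradient mappings via the definition \eqref{equ:def_gradient_mapping}, bound it by $\|\varepsilon_i^{\v g}\|$ using the (firm) nonexpansiveness of $\operatorname{prox}_{\lambda_i h}$ (the paper cites \cite[Theorem 6.42]{beck2017FirstOrderMethodsOptimization}), and finish with the triangle inequality. The only cosmetic difference is the order of the two terms inside the norm, which is immaterial.
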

\begin{proof}
	From the definition of the gradient mapping in \eqref{equ:def_gradient_mapping} and the firm nonexpansiveness of the proximal operator \cite[Theorem 6.42]{beck2017FirstOrderMethodsOptimization}, we have
	\begin{equation*}
		\begin{aligned}
			&\quad \| \v G_{\lambda_i}(\v x^i, \nabla f(\v x^i)) - \v G_{\lambda_i}(\v x^i, \v g^i) \| \\
			&= \lambda_i^{-1} \| \operatorname{prox}_{\lambda_i h}(\v x^i - \lambda_i \nabla f(\v x^i)) - \operatorname{prox}_{\lambda_i h}(\v x^i - \lambda_i \v g^i) \| \\
			&\leq \| \nabla f(\v x^i) - \v g^i \|. 
		\end{aligned}
		\label{equ:firmly}
	\end{equation*}
	The desired conclusion follows immediately from the triangle inequality and the definition of $\varepsilon_i^{\v g}$ in \eqref{equ:def_eg}. 
\end{proof}
Based on Lemma \ref{lem:KKT} and the condition \eqref{equ:general_gradient_inexact_new} with $\lambda = \lambda_i$ and noting $\v x^{i+1} = \v x^i(\lambda_i)$, we introduce the following termination criterion: 
\begin{equation}
	\begin{aligned}
		\label{equ:termination}
		\Delta_i^{\v g} :={} &\|\v{G}_{\lambda_i}(\v{x}^i, \v{g}^i)\| \\
		& +  \sqrt{(\eta_i^{\v{g}})^2 + \left( a^2 \lambda_i^{-2} + b^2 \right) \|\v x^{i+1} - \v{x}^i\|^2} \leq \epsilon.  
	\end{aligned}
\end{equation}
With this, we are ready to present the pseudo-codes of the APIG framework, outlined in Algorithm \ref{frm:PIG}. 

\begin{algorithm}[t]
	\caption{Proposed APIG Framework}\label{frm:PIG}
	\begin{algorithmic}[1]
		\STATE Initialize: $\v x^0 \in \mathbb{R}^n$, $\epsilon > 0$, $\lambda_{0}^{(0)}, \lambda_{1}^{(0)} \in [\lambda_{\min}, \lambda_{\max}]$, $\alpha \in (0, 1)$. Set $i = 0$. 
		\FOR{$i = 0, 1, \ldots$}
		\STATE Find the smallest nonnegative integer $\ell$ such that $\lambda_i = \alpha^{\ell} \lambda_{i}^{(0)}$ satisfies the LS condition \eqref{equ:B1} or \eqref{equ:B2}, under the inexactness conditions \eqref{equ:general_gradient_inexact_new}, \eqref{equ:function_error1}, and \eqref{equ:function_error2}. 
		\STATE Update $\v x^{i+1} = \operatorname{prox}_{\lambda_i h}(\v x^{i} - \lambda_i \v g^i)$.
		\STATE \textbf{If} \eqref{equ:termination} holds, \textbf{then} return $\v x^i$ and break. 
		\STATE \revxl{Update $\lambda_{i+1}^{(0)}$ via \eqref{equ:BB} when $i \geq 1$.}
		\ENDFOR
	\end{algorithmic}
\end{algorithm}

The following result characterizes the controlled descent property of the sequence \(\{F(\v{x}^i)\}\).
\begin{lemma}[Controlled Descent Property]\label{lem:descent_property}
	Under the same assumptions and parameter settings as in Lemma \ref{lem:ls}, let \( \{\mathbf{x}^i\} \) be the sequence generated by Algorithm \ref{frm:PIG}. Then the following descent guarantees hold: 
	\begin{itemize}
		\item[(i)] If the LS condition \eqref{equ:B1} is used, then for all $i \geq 0$, we have 
		\begin{equation}
			F(\v x^{i+1}) - F(\v x^i) \leq - \frac{\theta \lambda_i}{2}  \|\v G_{\lambda_i}(\v x^i, \v g^i)\|^2 + \nu_i + 2\eta_i^f,
			\label{equ:descent_property}
		\end{equation}
		where $\nu_i$ and $\eta_i^f$ are defined in \eqref{equ:def_nu} and \eqref{equ:def_etaif}, respectively. 
		\item[(ii)] If the LS condition \eqref{equ:B2} is used, then the inequality \eqref{equ:descent_property} holds with $\theta = 1/2$.  
	\end{itemize}
\end{lemma}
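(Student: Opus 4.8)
The plan is to turn each accepted line-search inequality into a guaranteed decrease of the true objective $F$, absorbing the function errors into the $2\eta_i^f$ term and tightening the quadratic penalty via the bound $c\le\theta/4$. Throughout I will exploit the identity $\v x^{i+1}=\v x^i(\lambda_i)$, which together with the definition \eqref{equ:def_gradient_mapping} of the gradient mapping gives $\v x^i-\v x^{i+1}=\lambda_i\v G_{\lambda_i}(\v x^i,\v g^i)$ and hence $\|\v x^{i+1}-\v x^i\|^2=\lambda_i^2\|\v G_{\lambda_i}(\v x^i,\v g^i)\|^2$; this is what ultimately converts a bound phrased in $\|\v x^{i+1}-\v x^i\|^2$ into the target bound in $\|\v G_{\lambda_i}(\v x^i,\v g^i)\|^2$.

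For part (i) I would begin from the accepted condition \eqref{equ:B1} with $\lambda=\lambda_i$ and eliminate the inexact function values by the defining relations \eqref{equ:def_eif} and \eqref{equ:def_eif_lambda}, namely $f_i=f(\v x^i)+\varepsilon_i^f$ and $f_i(\lambda_i)=f(\v x^{i+1})+\varepsilon_i^f(\lambda_i)$. Grouping the $h$-terms, the two sides collapse to $F(\v x^{i+1})$ and $F(\v x^i)$ up to the remainder $\varepsilon_i^f-\varepsilon_i^f(\lambda_i)$, so that $F(\v x^{i+1})\le F(\v x^i)+\varepsilon_i^f-\varepsilon_i^f(\lambda_i)-\tfrac{\theta}{\lambda_i}\|\v x^{i+1}-\v x^i\|^2+\nu_i$. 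I then bound the remainder by $|\varepsilon_i^f|+|\varepsilon_i^f(\lambda_i)|$ and invoke the two function-error conditions \eqref{equ:function_error1} and \eqref{equ:function_error2} (both at $\lambda=\lambda_i$), giving $\varepsilon_i^f-\varepsilon_i^f(\lambda_i)\le 2\eta_i^f+2c\lambda_i^{-1}\|\v x^{i+1}-\v x^i\|^2$. Combining this $2c\lambda_i^{-1}\|\cdot\|^2$ contribution with the $-\theta\lambda_i^{-1}\|\cdot\|^2$ term yields the coefficient $-(\theta-2c)\lambda_i^{-1}$, and the hypothesis $c\le\theta/4$ forces $\theta-2c\ge\theta/2$, so the coefficient is at most $-\tfrac{\theta}{2}\lambda_i^{-1}$. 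Substituting $\|\v x^{i+1}-\v x^i\|^2=\lambda_i^2\|\v G_{\lambda_i}(\v x^i,\v g^i)\|^2$ and collecting the $\nu_i+2\eta_i^f$ remainder produces \eqref{equ:descent_property}.

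For part (ii) the difficulty is that \eqref{equ:B2} contains no $h$-term, so I must first reinstate $h$ through the optimality of the proximal step. Since $\v x^{i+1}=\operatorname{prox}_{\lambda_i h}(\v x^i-\lambda_i\v g^i)$, its first-order optimality condition gives $\lambda_i^{-1}(\v x^i-\v x^{i+1})-\v g^i\in\partial h(\v x^{i+1})$, and convexity of $h$ then yields $h(\v x^{i+1})\le h(\v x^i)-\langle\v g^i,\v x^{i+1}-\v x^i\rangle-\lambda_i^{-1}\|\v x^{i+1}-\v x^i\|^2$. Adding this inequality to \eqref{equ:B2} cancels the inner-product term and merges the two quadratic terms into $-\tfrac12\lambda_i^{-1}\|\v x^{i+1}-\v x^i\|^2$, reproducing exactly the structure of \eqref{equ:B1} with $\theta=1/2$ (and the same $\nu_i$, which does not depend on $\theta$). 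The argument of part (i) then applies verbatim with $\theta$ replaced by $1/2$; here the function-error bound $c\le\theta/4$ reads $c\le 1/8$, tightening the coefficient to $-\tfrac14\lambda_i^{-1}$ and delivering \eqref{equ:descent_property} with $\theta=1/2$.

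I expect the only genuinely delicate step to be the reduction in part (ii): correctly invoking the proximal optimality/subgradient inequality for $h$ so that the $\langle\v g^i,\cdot\rangle$ term cancels and the quadratic coefficients add to $-\tfrac12\lambda_i^{-1}$, which is precisely what reduces (ii) to (i). The remaining manipulations — substituting the error definitions, splitting $\varepsilon_i^f-\varepsilon_i^f(\lambda_i)$ through the triangle inequality, and applying $c\le\theta/4$ — are routine, provided the bookkeeping of the $\nu_i$ and $\eta_i^f$ remainders is kept consistent with the definition \eqref{equ:def_nu}.
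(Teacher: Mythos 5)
Your proposal is correct and follows essentially the same route as the paper's proof: part (i) is obtained by substituting the error definitions \eqref{equ:def_eif} and \eqref{equ:def_eif_lambda} into the accepted condition \eqref{equ:B1}, bounding the resulting function-error terms via \eqref{equ:function_error1}--\eqref{equ:function_error2} with $c\le\theta/4$, and converting $\|\v x^{i+1}-\v x^i\|^2$ to $\lambda_i^2\|\v G_{\lambda_i}(\v x^i,\v g^i)\|^2$ via \eqref{equ:conversion}; part (ii) is reduced to part (i) with $\theta=1/2$ by adding the proximal subgradient inequality for $h$ (the paper's \eqref{equ:about_h}) to \eqref{equ:B2}, exactly as you do. The only difference is ordering (the paper performs the reduction (ii)$\to$(i) first), which is immaterial.
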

\begin{proof}
	See Appendix~\ref{apd:finite_termination} in the Supplementary Material.  
\end{proof}

Based on Lemma \ref{lem:descent_property}, we now establish the convergence results for the proposed  APIG framework. 
\begin{theorem}[Iteration Complexity and Convergence Rate]
	\label{thm:PIG_convergence}
	Consider the same assumptions and settings as in Lemma \ref{lem:ls}. 
	Then, Algorithm \ref{frm:PIG} terminates within $N_s = \mathcal{O}(\epsilon^{-2})$ iterations, and $\v x^{N_s}$ is an $\epsilon$-stationary point of problem \eqref{equ:generic_problem}. 
	\label{thm:1}
	For convex $f(\cdot)$, the following results hold: 
	\begin{itemize}
		\item \textit{Case I:} Under the LS condition \eqref{equ:B2} with summable \( \{\eta_i^{\v g}\} \) and \( \{\eta_i^f\} \), \( a = b = 0 \), and \( c \in [0, \theta/4] \), the average iterate $\overline{\v x}^N = (N+1)^{-1} \sum_{i = 0}^N \v x^i$ with \( 0 \leq N \leq N_s \) satisfies 
		\[
		F(\overline{\v x}^N) - F(\v{x}^\star) \leq \mathcal{O}\left( (N+1)^{-1} \right). 
		\]
		\item \textit{Case II:} Under either the LS condition \eqref{equ:B1} or \eqref{equ:B2} with \( \eta_i^{\v g} \equiv 0 \), \( a \in [0, 1) \), \( b \in [0, \infty) \), \( \eta_i^f \equiv 0 \), and \( c \in [0, \theta/4] \),  for \( 2 \leq N \leq N_s \), we have
		\begin{equation*}
			\begin{aligned}
				& F(\v{x}^N) - F(\v{x}^\star) \\
				\leq{} & \max \left\{ \mathcal{O}\left( \sqrt{2}^{-(N-1)} \right), \mathcal{O}\left( (N-1)^{-1} \right) \right\}.
			\end{aligned}
		\end{equation*}
	\end{itemize}
\end{theorem}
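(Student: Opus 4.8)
The plan is to build everything on the controlled descent inequality of Lemma~\ref{lem:descent_property}, together with the two-sided stepsize bound $\lambda_i \in [\underline{\lambda}_{\theta}, \lambda_{\max}]$ from Proposition~\ref{lem:bound} and the stationarity-gap comparison of Lemma~\ref{lem:KKT}. Throughout I abbreviate $\v G_i := \v G_{\lambda_i}(\v x^i,\v g^i)$, $\delta_i := F(\v x^i) - F^\star \ge 0$, and $D_i := \|\v x^i - \v x^\star\|^2$, and I record once that $\|\v x^{i+1} - \v x^i\| = \lambda_i\|\v G_i\|$.

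\emph{Nonconvex part.} Summing \eqref{equ:descent_property} over $i = 0,\dots,N-1$ and using $F(\v x^N) \ge F^\star$ gives $\tfrac{\theta}{2}\sum_{i=0}^{N-1}\lambda_i\|\v G_i\|^2 \le \delta_0 + \sum_{i\ge 0}(\nu_i + 2\eta_i^f)$. The right-hand side is a finite constant $M$, since $\Upsilon_1(\lambda_i)$ is uniformly bounded (as $\lambda_i \le \lambda_{\max}$) so that $\sum_i \nu_i \le \bar\Upsilon\sum_i(\eta_i^{\v g})^2 + 2\sum_i\eta_i^f < \infty$ by \eqref{equ:def_etai} and \eqref{equ:def_etaif}. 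Because $\lambda_i \ge \underline{\lambda}_\theta$, this yields $\sum_{i=0}^{\infty}\|\v G_i\|^2 \le 2M/(\theta\underline{\lambda}_\theta) < \infty$. I then bound the iteration count by a counting argument: using $\lambda_i \le \lambda_{\max}$ in \eqref{equ:termination} gives $\Delta_i^{\v g} \le (1+\kappa)\|\v G_i\| + \eta_i^{\v g}$ with $\kappa := \sqrt{a^2 + b^2\lambda_{\max}^2}$, so any non-terminating iteration ($\Delta_i^{\v g} > \epsilon$) must satisfy $\|\v G_i\| > \epsilon/(2(1+\kappa))$ or $\eta_i^{\v g} > \epsilon/2$. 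The number of indices of the first kind is at most $\tfrac{4(1+\kappa)^2}{\epsilon^2}\sum_{i}\|\v G_i\|^2 = \mathcal{O}(\epsilon^{-2})$, and of the second kind at most $\tfrac{4}{\epsilon^2}\sum_i(\eta_i^{\v g})^2 = \mathcal{O}(\epsilon^{-2})$ by \eqref{equ:def_etai}; hence $N_s = \mathcal{O}(\epsilon^{-2})$. Finally, Lemma~\ref{lem:KKT} combined with \eqref{equ:general_gradient_inexact_new} shows $\|\v G_{\lambda_{N_s}}(\v x^{N_s},\nabla f(\v x^{N_s}))\| \le \Delta_{N_s}^{\v g} \le \epsilon$, so $\v x^{N_s}$ is $\epsilon$-stationary.

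\emph{Convex Case I.} I would first convert LS condition \eqref{equ:B2} into an exact-quantity estimate by substituting $f_i(\lambda_i) = f(\v x^{i+1}) + \varepsilon_i^f(\lambda_i)$, $f_i = f(\v x^i) + \varepsilon_i^f$, and $\v g^i = \nabla f(\v x^i) + \varepsilon_i^{\v g}$, bounding the function errors by \eqref{equ:function_error1}--\eqref{equ:function_error2}. Combining this with the convexity inequality $f(\v x^i) + \langle\nabla f(\v x^i),\v x^\star - \v x^i\rangle \le f(\v x^\star)$, the prox optimality relation $-\v g^i - \lambda_i^{-1}(\v x^{i+1}-\v x^i)\in\partial h(\v x^{i+1})$, and the identity $\langle \v x^{i+1}-\v x^i, \v x^\star - \v x^{i+1}\rangle = \tfrac12(D_i - D_{i+1} - \|\v x^{i+1}-\v x^i\|^2)$ yields
\[
\delta_{i+1} \le \tfrac{1}{2\lambda_i}(D_i - D_{i+1}) + \langle\varepsilon_i^{\v g},\v x^\star - \v x^i\rangle + \nu_i + 2\eta_i^f + 2c\lambda_i\|\v G_i\|^2.
\]
The delicate point is the residual $2c\lambda_i\|\v G_i\|^2$, left over because the inner-product identity cancels only the bare $\tfrac{1}{2\lambda_i}\|\v x^{i+1}-\v x^i\|^2$ and not the extra $\tfrac{2c}{\lambda_i}\|\v x^{i+1}-\v x^i\|^2$ produced by the two function-error bounds; it is not individually summable, yet the nonconvex estimate already gives $\sum_i\lambda_i\|\v G_i\|^2 < \infty$, so it is summable after all. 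With $a=b=0$ the cross term obeys $\langle\varepsilon_i^{\v g},\v x^\star - \v x^i\rangle \le \eta_i^{\v g}\|\v x^\star - \v x^i\|$, which is summable once the iterates are shown bounded (a perturbed Fej\'er recursion for $D_i$ using the summable $\{\eta_i^{\v g}\}$). Multiplying the displayed inequality by $\lambda_i$ makes the $\tfrac12(D_i - D_{i+1})$ term telescope exactly despite the varying stepsize; using $\lambda_i \ge \underline\lambda_\theta$ and $\delta_{i+1}\ge 0$ then bounds $\sum_{i=0}^N \delta_{i+1}$ by a constant, and Jensen's inequality $F(\overline{\v x}^N) \le (N+1)^{-1}\sum_{i=0}^N F(\v x^i)$ delivers the $\mathcal{O}((N+1)^{-1})$ rate.

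\emph{Convex Case II and the main obstacle.} Here $\eta_i^{\v g}\equiv\eta_i^f\equiv 0$ while $a,b$ may be positive, so the error terms are no longer summable and the averaging argument breaks down; instead I would prove a last-iterate recursion. From Lemma~\ref{lem:descent_property} (with $\nu_i=0$) the gap is monotone, $\delta_{i+1} \le \delta_i - \tfrac{\theta\lambda_i}{2}\|\v G_i\|^2$. Separately, a gradient-mapping convexity estimate of the form $\delta_{i+1} \le \langle\v G_i,\v x^i - \v x^\star\rangle - \tfrac{\lambda_i}{2}\|\v G_i\|^2 + (\text{relative-error correction})$, combined with the boundedness of $D_i$, must be turned into $\delta_{i+1} \le C\|\v G_i\|$ for a constant $C$ depending on $\sup_i D_i$, $\lambda_{\max}$, and $a$. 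The hard part is controlling the relative-gradient-error corrections, which scale like $\sqrt{a^2 + b^2\lambda_i^2}\,\|\v G_i\|\sqrt{D_i}$ and cannot be deferred to a summable tail; they must be absorbed into the quadratic $\|\v G_i\|^2$ term and into $D_i - D_{i+1}$ via Young's inequality, which is exactly where $a\in[0,1)$ is needed to keep the net coefficient positive and where $\sup_i D_i<\infty$ is simultaneously established. Substituting $\|\v G_i\| \ge \delta_{i+1}/C$ into the descent inequality produces the scalar recursion $\delta_{i+1} \le \delta_i - c\,\delta_{i+1}^2$ with $c := \theta\underline\lambda_\theta/(2C^2) > 0$. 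A short analysis of this recursion gives the stated dichotomy: whenever $\delta_i \ge (2-\sqrt2)/c$ it forces the contraction $\delta_{i+1} \le \delta_i/\sqrt2$, yielding the geometric factor $\sqrt{2}^{-(N-1)}$, while once $\delta_i$ drops below this threshold the quadratic recursion gives the $\mathcal{O}((N-1)^{-1})$ rate, and taking the maximum of the two covers all $N$. I expect establishing the convexity bound $\delta_{i+1}\le C\|\v G_i\|$ with the relative error correctly absorbed, together with the uniform bound on $D_i$, to be the main technical obstacle, since this is precisely where the missing summable error budget must be compensated by the relative-error structure and the line-search stepsize lower bound.
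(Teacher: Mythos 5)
Your proposal is correct and follows the same skeleton as the paper's proof in all three parts: summing the controlled descent inequality \eqref{equ:descent_property} and bounding the termination quantity for the nonconvex complexity (your counting argument is a trivial variant of the paper's squared-sum bound \eqref{equ:stop}); the convexity-plus-prox-optimality telescoping with the $2c\lambda_i\|\v G_i\|^2$ residual absorbed via the nonconvex sum $\sum_i\|\v G_i\|^2<\infty$ for Case I; and the monotone-gap-plus-linear-gap-bound scalar recursion (the paper invokes \cite[Lemma 11.17]{beck2017FirstOrderMethodsOptimization} for exactly your dichotomy) for Case II. The place where you diverge is precisely where you predicted the "main technical obstacle," and there the paper is much simpler than your plan. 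First, iterate boundedness $\|\v x^i-\v x^\star\|\le B_2$ is not obtained by a perturbed Fej\'er recursion or a simultaneous Young-absorption argument: since the theorem inherits Assumption \ref{ass:2} (level-boundedness) from Lemma \ref{lem:ls}, the controlled descent bound $F(\v x^{i+1})\le F(\v x^0)+\sum_i(\nu_i+2\eta_i^f)<\infty$ immediately gives boundedness of the iterates in both convex cases. Second, in Case II the bound $\Delta_{i+1}^F\le C\|\v G_i\|$ requires no absorption of the relative error into the quadratic term and no positivity-of-coefficient condition: the paper writes $F(\v x^{i+1})-F(\v x^\star)\le\langle\nabla f(\v x^{i+1})-\v g^i+\v G_{\lambda_i}^i,\,\v x^{i+1}-\v x^\star\rangle$ (convexity at $\v x^{i+1}$ plus prox optimality at $\v y=\v x^\star$), then applies Cauchy--Schwarz with the splitting $\nabla f(\v x^{i+1})-\nabla f(\v x^i)$, $\nabla f(\v x^i)-\v g^i$, $\v G_{\lambda_i}^i$, bounding the three pieces by $L\lambda_i\|\v G_{\lambda_i}^i\|$, $(a+b\lambda_i)\|\v G_{\lambda_i}^i\|$, and $\|\v G_{\lambda_i}^i\|$ respectively, so that $\Upsilon_4=B_2((L+b)\lambda_{\max}+a+1)$ works for \emph{any} $a\ge 0$; the constraint $a<1$ plays its role only in Lemma \ref{lem:ls} (existence of an acceptable stepsize), not in this estimate. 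So your outline would go through, but the hard self-consistent argument you anticipated is unnecessary given the hypotheses the theorem already carries.
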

\begin{proof}
	See Appendix~\ref{apd:PIG_convergence}. 
\end{proof}

Theorem \ref{thm:1} extends the results of \cite{schmidt2011ConvergenceRatesInexact} and \cite{paquette2020StochasticLine} in several aspects within the context of the PIG method \eqref{equ:PIG}. 
First, in the convex case, it significantly improves upon the results in \cite{schmidt2011ConvergenceRatesInexact} by explicitly accounting for function errors, and achieves the same iteration complexity as those in \cite{paquette2020StochasticLine}. 
Second, in the nonconvex case, compared to \cite{paquette2020StochasticLine}, Theorem \ref{thm:1} achieves the same complexity result as \cite{paquette2020StochasticLine}, but under a much more general framework. 
Specifically, our proposed framework incorporates the more general inexactness condition \eqref{equ:general_gradient_inexact_new}, \eqref{equ:function_error1}, and \eqref{equ:function_error2}, and allows for a more general convex $h(\cdot)$ in problem \eqref{equ:generic_problem}.

\section{JBCP Application}
In this section, we custom-apply the proposed APIG framework to an important signal processing problem in cooperative cellular communication networks{\textemdash}the JBCP with PAPCs. 
More specifically, in Section \ref{ss:3a}, we first introduce and formulate the JBCP with PAPCs, transform it into an equivalent semidefinite program (SDP), and derive its Lagrange dual problem, which takes the form of problem \eqref{equ:generic_problem}. 
In Sections \ref{ss:3b} and \ref{ss:3c}, we provide a way of computing the function and gradient information of the dual problem and their approximations through solving a weighted JBCP without PAPCs by an efficient FP iteration algorithm, respectively.
In Section \ref{ss:3d}, we then apply the proposed APIG framework to solve the dual problem, leading to an efficient tailored algorithm for solving the JBCP with PAPCs, named APIG-FP. 
Finally, we present numerical results in Section \ref{ss:3e} to demonstrate  the effectiveness of the proposed algorithm. 

\subsection{Problem Formulation, SDR, and Lagrangian Dual}\label{ss:3a}
\subsubsection{System Model and Problem Formulation} 
In cooperative cellular networks, multiple relay-like BSs are connected to a central processor (CP) via fronthaul links with limited capacities, allowing joint processing at the CP to mitigate intercell interference by sharing user data among BSs. However, such cooperation imposes heavy demands on fronthaul links. 
To alleviate this, strategies that jointly design BS transmissions and fronthaul utilization have been proposed \cite{
	dai2014SparseBeamformingUsercentric, 
	shi2014GroupSparseBeamforming,
	park2013JointPrecodingMultivariate, 
	park2014InterclusterDesignPrecoding,
	patil2014HybridCompressionMessagesharing, 
	kang2016FronthaulCompressionPrecoding,
	zhou2016FronthaulCompressionTransmit,
	he2019HybridPrecoderDesign, 
	kim2019JointDesignFronthauling,
	ahn2020FronthaulCompressionPrecoding,                                                    
	liu2021UplinkdownlinkDualityMultipleaccess,
	fan2022EfficientlyGloballySolving,
	fan2023QoSbasedBeamformingCompression
}. 
Recent works have addressed the JBCP by minimizing the total transmit power while ensuring users' signal-to-interference-and-noise ratio (SINR) requirements and BSs' fronthaul rate constraints \cite{liu2021UplinkdownlinkDualityMultipleaccess, fan2022EfficientlyGloballySolving, fan2023QoSbasedBeamformingCompression}. 
In this paper, we solve the JBCP with the more practical PAPCs, as considered in  \cite{fan2024JointBeamformingCompression}.

We adopt the same system model as in \cite{fan2024JointBeamformingCompression}. 
Consider a cooperative cellular network consisting of one CP and $M$ single-antenna BSs, which are connected to the CP through the noiseless fronthaul links with limited capacities. 
These BSs cooperatively serve $K$ single-antenna users via a noisy wireless channel. 
Let $\cM = \{1,2,\d,M\}$ and $\cK=\{1,2,\d,K\}$ represent the sets of BSs and users, respectively. 
Let $\v{v}_k = [v_{k,1}, v_{k,2}, \d , v_{k,M}]^\transpose$ be the $M \times 1$ beamforming vector, 
$\m Q \in \mathbb{C}^{M\times M}$ be the covariance matrix of the additive compression noise at the BSs, 
$\sigma_k^2$ be the noise power at user~$k$,  and $\v{h}_k = [h_{k,1}, h_{k,2}, \d, h_{k,M}]^\hermitian$ be the channel vector of user $k$. 
The transmit power of BS/antenna $m$ is given by
$$
\operatorname{PW}_m = \sum_{k\in\cK} |v_{k,m}|^2 + \m Q^{(m,m)},
$$ and the SINR of user $k$ is
\begin{equation*}
	\operatorname{SINR}_k = \frac{|\v{h}_k^\hermitian \v{v}_k|^2}{\sum_{j\neq k} |\v{h}_k^\hermitian \v{v}_j|^2 + \v{h}_k^\hermitian \m{Q} \v{h}_k + \sigma_k^2},~\forall\,k\in\cK. 
\end{equation*}

To fully utilize the fronthaul links with limited capacities, we adopt the information-theoretically optimal multivariate compression strategy \cite{park2013JointPrecodingMultivariate} to compress the signals from the CP to the BSs. 
Without loss of generality, we assume that the compression order is from BS $ M $ to BS $ 1 $. 
Then, the fronthaul rate of BS $m$ is given by 
\begin{equation*}
	\begin{aligned}
		\operatorname{FR}_m = \log_2 \left(\frac{\sum_{k\in\cK} |v_{k,m}|^2 + \m{Q}^{(m,m)}}{ \m{Q}^{(m:M, m:M)}/\m{Q}^{(m+1:M, m+1:M)} }\right),\fm. 
	\end{aligned}
\end{equation*}
Here, $\m{Q}^{(m:M, m:M)}$ denotes the principal submatrix of $\m Q$ formed by the rows and columns indexed by  indices $\{m, m+1, \d, M\}$, and $\m{Q}^{(m:M, m:M)}/\m{Q}^{(m+1:M, m+1:M)}$ represents the Schur complement of the block $\m{Q}^{(m+1:M, m+1:M)}$ of $\m{Q}^{(m:M, m:M)}$. 

Given a set of SINR targets for the users $\{\overline\gamma_k\}$, a set of fronthaul capacities for the BSs $\{\overline{C}_m\}$, and a set of per-antenna (i.e., per-BS) power budgets for the BSs $\{\overline{P}_m\}$, 
we aim to minimize the total transmit power of all BSs while satisfying all users' SINR constraints, all BSs' fronthaul rate constraints, and all PAPCs. 
The problem can be formulated as follows: 
\begin{equation}
	\begin{aligned}
		\min_{\{\v v_k\}, \m{Q}\succeq \m{0}} &\quad \sum_{k\in\cK} \|\v{v}_k\|^2 + \tr(\m Q)\\
		\st~~~ &\quad \operatorname{SINR}_k \geq \bar\gamma_k, \fk{},\\
		&\quad \operatorname{FR}_m \leq \overline{C}_m, ~\operatorname{PW}_m \leq \overline{P}_m, \fm.
	\end{aligned}
	\label{equ:JBCP_PAPC}
\end{equation}
The last constraint in problem \eqref{equ:JBCP_PAPC} is the PAPC of antenna~$m$ (i.e., BS $m$). 
While the PAPC is important in practice (as each antenna has its own power budget), 
it introduces a technical challenge in solving problem \eqref{equ:JBCP_PAPC}. 
Specifically, without the PAPCs, problem \eqref{equ:JBCP_PAPC} can be solved efficiently and globally via solving two FP equations \cite{fan2022EfficientlyGloballySolving}. 
However, the existence of PAPCs makes the algorithm proposed in \cite{fan2022EfficientlyGloballySolving} inapplicable. 

\subsubsection{Equivalent SDP Reformulation of Problem \eqref{equ:JBCP_PAPC}}
By using the similar arguments as in \cite[Proposition 4]{liu2021UplinkdownlinkDualityMultipleaccess}, 
we can equivalently rewrite the users' SINR constraints and the BSs' fronthaul rate constraints as
\begin{equation}\label{equ:jbcp:c1}
	(1 + \overline{\gamma}_k^{-1})|\v{v}_k^\hermitian\v h_k|^2 - \sum_{j\in\cK} |\v{v}_j^\hermitian \v h_k|^2 - \v h_k^\hermitian \m Q \v h_k \geq \sigma_k^2, \fk
\end{equation}
and 
\begin{equation}\label{equ:jbcp:c2}
	2^{\overline{C}_m} \begin{bmatrix}
		\m{0}& \m{0} \\
		\m{0}& \m{Q}^{(m:M, m:M)}
	\end{bmatrix} - \operatorname{PW}_m \cdot~ \m{E}_m \succeq \m{0}, \fm.
\end{equation}
Then, we can  reformulate problem \eqref{equ:JBCP_PAPC} as
\begin{equation}
	\label{equ:P}
	\begin{aligned}
		\min_{\{\v v_k\}, \m{Q}\succeq \m{0}}&\quad \sum_{k\in\cK} \|\v{v}_k\|^2 + \tr(\m Q)\\
		\st~~~~ &\quad  \eqref{equ:jbcp:c1},~\eqref{equ:jbcp:c2},~\text{and}~ \operatorname{PW}_m \leq \overline{P}_m, \fm. 
	\end{aligned} 
	\end{equation}
Problem \eqref{equ:P} is a (nonconvex) quadratically constrained quadratic program (QCQP). A well-known technique to tackle the QCQP is SDR \cite{luo2010SemidefiniteRelaxationQuadratic, xu2023new}. 
By applying the SDR technique (with $\m V_k = \v v_k \v v_k^\hermitian$) to problem \eqref{equ:P}, we obtain its SDR formulation as follows:
\begin{subequations}
	{\label{equ:JBCP_PAPC_SDR}\small
		\begin{align}
			\min_{\substack{{\m V_k \succeq \m 0}, \\ \m Q \succeq \m 0}} & ~ \sum_{k \in \cK} \tr(\m V_k) + \tr(\m Q) \tag{\ref{equ:JBCP_PAPC_SDR}} \\
			\st~ &  \left \langle (1 + \overline{\gamma}_k^{-1}) \m V_k - \sum_{j \in \cK} \m V_j - \m Q,  \v h_k \v h_k^\hermitian\right \rangle \geq \sigma_k^2, \fk, \label{cst:SINR}\\
			&  \begin{bmatrix}
				\m 0 & \m 0 \\
				\m 0 & \m Q^{(m:M,m:M)}
			\end{bmatrix}  -   \frac{\sum_{k\in \cK} \m V_k^{(m,m)} + \m Q^{(m,m)}}{2^{\overline{C}_m}} \m E_m \succeq \m 0, \label{cst:FR}\\
			& \sum_{k\in \cK} \m V_k^{(m,m)} + \m Q^{(m,m)} \leq \overline{P}_m, \fm. \label{cst:PAPC}
	\end{align}}
\end{subequations} 
\!\!\!We have the following proposition, whose proof is similar to that of \cite[Theorem 1]{fan2022EfficientlyGloballySolving}, and is therefore omitted here. 
\begin{proposition}
	\label{prop:tight}
	If problem \eqref{equ:JBCP_PAPC_SDR} is strictly feasible, then its optimal solution $(\{\m V_k^*\}, \m Q^*)$ always satisfies $\rk(\m V_k^*) = 1$ for all $k\in\cK$. 
\end{proposition}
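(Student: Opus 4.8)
The plan is to follow the standard KKT-based route for establishing SDR tightness, mirroring the argument of \cite[Theorem 1]{fan2022EfficientlyGloballySolving}. Since problem \eqref{equ:JBCP_PAPC_SDR} is a convex SDP that is assumed to be strictly feasible, Slater's condition holds, so strong duality holds and there exist optimal dual multipliers under which the KKT conditions are satisfied at the primal optimum $(\{\m V_k^*\}, \m Q^*)$. I would attach a multiplier $\mu_k \geq 0$ to each SINR constraint \eqref{cst:SINR}, a PSD matrix $\boldsymbol{\Psi}_m \succeq \m 0$ to each FR constraint \eqref{cst:FR}, a scalar $\delta_m \geq 0$ to each PAPC constraint \eqref{cst:PAPC}, and a PSD matrix $\m Z_k \succeq \m 0$ to the conic constraint $\m V_k \succeq \m 0$.

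Next I would write the stationarity condition of the Lagrangian with respect to $\m V_k$. The crucial bookkeeping is that $\m V_k$ enters its own SINR constraint with coefficient $(1+\overline{\gamma}_k^{-1}) - 1 = \overline{\gamma}_k^{-1}$ and every other user's SINR constraint (through the $-\sum_j \m V_j$ term) with coefficient $-1$, while it appears in the FR and PAPC constraints only through its $m$-th diagonal entry, i.e., via $\langle \m E_m, \m V_k\rangle$. Collecting terms, stationarity yields
$$
\m Z_k = \m D - \mu_k(1+\overline{\gamma}_k^{-1})\,\v h_k \v h_k^\hermitian,
$$
where $\m D := \m I + \sum_{k'\in\cK}\mu_{k'}\v h_{k'}\v h_{k'}^\hermitian + \sum_{m\in\cM}\big(2^{-\overline{C}_m}\boldsymbol{\Psi}_m^{(m,m)} + \delta_m\big)\m E_m$. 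Because $\m D$ is the identity plus a sum of PSD terms and nonnegative-diagonal terms, it is positive definite, so $\rk(\m D)=M$; subtracting a single rank-one matrix then gives $\rk(\m Z_k)\geq M-1$, hence $\dim\ker(\m Z_k)\leq 1$.

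Finally, I would invoke complementary slackness $\m Z_k\m V_k^* = \m 0$, which forces $\operatorname{range}(\m V_k^*)\subseteq \ker(\m Z_k)$ and therefore $\rk(\m V_k^*)\leq 1$. To upgrade this to equality, I would rule out $\m V_k^* = \m 0$: if it were zero, the left-hand side of its SINR constraint \eqref{cst:SINR} would reduce to $-\v h_k^\hermitian\big(\sum_{j\neq k}\m V_j^*+\m Q^*\big)\v h_k\leq 0 < \sigma_k^2$, contradicting feasibility. Thus $\rk(\m V_k^*)=1$ for all $k\in\cK$. The main obstacle is the careful sign and coefficient bookkeeping in the stationarity condition: one must verify that all multiplier-weighted terms other than the single rank-one term $\mu_k(1+\overline{\gamma}_k^{-1})\v h_k\v h_k^\hermitian$ aggregate into a positive definite $\m D$, since this positive definiteness is exactly what drives the rank bound $\rk(\m Z_k)\geq M-1$ and hence the tightness conclusion.
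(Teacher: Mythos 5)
Your proof is correct: the KKT bookkeeping (the dual slack matrix $\m Z_k = \m D - \mu_k(1+\overline{\gamma}_k^{-1})\v h_k\v h_k^\hermitian$ with $\m D \succeq \m I \succ \m 0$, the rank bound via complementary slackness, and the exclusion of $\m V_k^* = \m 0$ through the SINR constraint) all check out. This is essentially the same route the paper takes, since the paper omits the proof precisely because it follows the KKT-based tightness argument of \cite[Theorem 1]{fan2022EfficientlyGloballySolving}, which is exactly what you reproduce with the extra multipliers $\delta_m$ for the PAPCs absorbed harmlessly into $\m D$.
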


This proposition shows that problem \eqref{equ:JBCP_PAPC_SDR} always has a rank-one solution for $\left\{\m V_k^\star \right\}$, ensuring that the SDR of problem \eqref{equ:P} is tight. 
Consequently, the SDP problem \eqref{equ:JBCP_PAPC_SDR} is an equivalent reformulation of problem \eqref{equ:P}, allowing us to solve the latter via addressing the former. 
Instead of directly using a solver (e.g., CVX \cite{CVX}) to solve problem \eqref{equ:JBCP_PAPC_SDR} (due to its high computational cost), we further reformulate it into the form of problem \eqref{equ:generic_problem}, and apply the proposed APIG framework to solve the reformulation. 
In the following of this paper, we assume the strict feasibility of problem \eqref{equ:JBCP_PAPC_SDR}. 
  
\subsubsection{Lagrangian Dual of Problem \eqref{equ:JBCP_PAPC_SDR}}
When constraint \eqref{cst:PAPC} is removed, problem \eqref{equ:JBCP_PAPC_SDR} can be efficiently solved using the FP iteration algorithm proposed in  \cite{fan2022EfficientlyGloballySolving}. 
To leverage this algorithm, we consider the Lagrange dual problem of problem \eqref{equ:JBCP_PAPC_SDR}. 
Let $\v x = [x_1, x_2, \d, x_M]^\transpose \geq \v 0$ be the Lagrange multiplier associated with the inequality constraints in \eqref{cst:PAPC}. 
The dual problem is
\begin{equation}\label{equ:JBCP_PAPC_dual}
	\max_{\v x \geq \v 0} \  d(\v x), 
\end{equation}
where $d(\v x)$ is the optimal value of the following problem: 
\begin{equation}\small 
	\label{equ:JBCP_PAPC_dual_sub}
	\begin{aligned}
		\min_{\substack{\{\m V_k \succeq \m 0\} \\ \m{Q}\succeq \m{0}} } &~ \sum_{m \in \cM} \left((1+x_m) \left( \sum_{k\in\cK} \m V_k^{(m,m)} + \m Q^{(m,m)} \right) - x_m\overline{P}_m\right)\\
		\st~~~ &~ \eqref{cst:SINR} \text{~and~} \eqref{cst:FR}. 
	\end{aligned}
\end{equation} 
According to the classical duality theory \cite[p. 216]{boyd2004ConvexOptimization}, the function $d(\cdot)$ in \eqref{equ:JBCP_PAPC_dual} is concave and typically  nondifferentiable. 
Hence, the subgradient algorithm is commonly used to solve the convex problem \eqref{equ:JBCP_PAPC_dual} (e.g., as in \cite{yu2007TransmitterOptimizationMultiantenna, dartmann2013DualityMaxminBeamforming, zhang2020DeepLearningEnabled, miretti2024ULDLDualityCellfree}). 
However, it has been shown recently in \cite{fan2023QoSbasedBeamformingCompression} that problem \eqref{equ:JBCP_PAPC_dual_sub} has a unique solution for any $\v x \geq \v 0$, 
which implies that $d(\cdot)$ is differentiable by Danskin's Theorem \cite{hiriart-urruty1996ConvexAnalysisMinimization}, enabling the use of the PG method \eqref{equ:PG} to solve problem \eqref{equ:JBCP_PAPC_dual}. 
The gradient of $d(\cdot)$ is given in the following proposition. 
\begin{proposition}
	\label{prop:diff}
	The objective function $d(\cdot)$ in problem \eqref{equ:JBCP_PAPC_dual} is differentiable on $\mathbb{R}_+^M$. 
	For $\v x \geq \v 0$, let $\left(\{\m V_k^\star(\v x)\}, \m Q^\star(\v x\right))$ be the unique solution to problem~\eqref{equ:JBCP_PAPC_dual_sub}.  Then, the $m$-th component of the gradient $\nabla d(\v x)$ with $m \in \mathcal{M}$ is given by
	\begin{equation}
		\label{equ:grad}
		(\nabla d(\v x))_m = \sum_{k\in\cK} \m V_k^{\star}(\v x)^{(m,m)} + \m Q^{\star}(\v x)^{(m,m)}  - \overline{P}_m. 
	\end{equation}
\end{proposition}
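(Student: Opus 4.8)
The plan is to invoke Danskin's theorem, as already flagged in the paragraph preceding the statement. The first step is to recast $d(\cdot)$ in the standard Danskin form. Setting
\begin{equation*}
	\phi(\v x; \{\m V_k\}, \m Q) = \sum_{m\in\cM}\left((1+x_m)\left(\sum_{k\in\cK}\m V_k^{(m,m)} + \m Q^{(m,m)}\right) - x_m\overline{P}_m\right),
\end{equation*}
I would observe that $d(\v x) = \min_{(\{\m V_k\},\m Q)\in\mathcal F}\phi(\v x; \{\m V_k\}, \m Q)$, where the feasible set $\mathcal F$ is defined by constraints \eqref{cst:SINR} and \eqref{cst:FR} and, crucially, does not depend on $\v x$. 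Since $\phi$ is affine in $\v x$ for each fixed $(\{\m V_k\},\m Q)$, the function $d$ is a pointwise infimum of affine functions and hence concave (recovering the concavity noted in the text). Moreover, the partial derivative of $\phi$ with respect to $x_m$ at any fixed feasible point is exactly $\sum_{k\in\cK}\m V_k^{(m,m)} + \m Q^{(m,m)} - \overline{P}_m$, which is precisely the expression appearing in \eqref{equ:grad}.

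The second step, which I expect to be the main technical obstacle, is to reconcile the fact that $\mathcal F$ lies in an unbounded product of positive semidefinite cones with the compactness requirement of the classical Danskin theorem. I would fix an arbitrary $\v x_0\geq\v 0$ and argue that, for all $\v x$ in a bounded neighborhood $\mathcal N$ of $\v x_0$, every minimizer remains in a common compact set. This follows from coercivity: because $1+x_m\geq 1$ and the diagonal entries of a positive semidefinite matrix are nonnegative, one has $\phi(\v x; \{\m V_k\}, \m Q) \geq \sum_{k\in\cK}\tr(\m V_k) + \tr(\m Q) - \sum_{m\in\cM} x_m\overline{P}_m$, where the last sum is uniformly bounded over $\v x\in\mathcal N$. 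Hence the relevant sublevel sets of $\phi$ have uniformly bounded trace, and intersecting with the closed positive semidefinite cones produces a compact set on which the minimization can be carried out without changing $d$ over $\mathcal N$.

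With the inner problem thus restricted to a compact feasible set and $\phi$ being continuously (indeed affinely) differentiable in $\v x$, the final step is to invoke the uniqueness of the minimizer $(\{\m V_k^\star(\v x)\},\m Q^\star(\v x))$ for every $\v x\geq\v 0$, established in \cite{fan2023QoSbasedBeamformingCompression}. Danskin's theorem then yields that $d$ is differentiable at $\v x_0$ with $\nabla d(\v x_0) = \nabla_{\v x}\phi(\v x_0; \{\m V_k^\star(\v x_0)\},\m Q^\star(\v x_0))$; reading off the $m$-th component from the partial derivative computed in the first step gives exactly \eqref{equ:grad}. Since $\v x_0\geq\v 0$ was arbitrary, differentiability holds on all of $\mathbb R_+^M$, completing the argument.
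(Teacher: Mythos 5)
Your proposal is correct and follows essentially the same route as the paper, which states the proposition without a formal proof and relies precisely on the uniqueness of the solution to problem \eqref{equ:JBCP_PAPC_dual_sub} (cited from \cite{fan2023QoSbasedBeamformingCompression}) combined with Danskin's theorem. Your main addition is making explicit the reduction to a compact inner feasible set via the coercivity bound $\phi(\v x;\{\m V_k\},\m Q)\geq\sum_{k\in\cK}\tr(\m V_k)+\tr(\m Q)-\sum_{m\in\cM}x_m\overline{P}_m$, a hypothesis the classical Danskin theorem requires but the paper leaves implicit; this detail is handled correctly and strengthens rather than alters the argument.
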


\subsection{Compute $d(\v x)$ and $\nabla d(\v x)$}\label{ss:3b}

Proposition~\ref{prop:diff} shows that computing $\nabla d(\mathbf{x})$ at any given point $\mathbf{x} \geq \mathbf{0}$ requires solving problem \eqref{equ:JBCP_PAPC_dual_sub} to global optimality. 
Specifically, problem \eqref{equ:JBCP_PAPC_dual_sub} is a weighted total transmit power minimization problem subject to all users' SINR constraints and all BSs' fronthaul rate constraints, which can be solved globally by the FP iteration algorithm proposed in \cite{fan2022EfficientlyGloballySolving}. 
Below, we briefly describe this algorithm. 

The first stage of the algorithm is to solve the following FP equations:
\begin{subequations}\label{equ:fp_equation}
		\begin{align}
			\vv \beta &= I_{\v x}(\vv \beta), \label{equ:dual_fp}\\
			\v p &= J_{\vv \beta, \v x}(\v p), \label{equ:primal_fp}
		\end{align}
\end{subequations} 
where $\vv \beta, \v p \in \mathbb{R}^K$. 
Here, the mappings $I_{\v x}: \mathbb{R}^K \to \mathbb{R}_{++}^K$ and $J_{\vv \beta, \v x}: \mathbb{R}^K \to \mathbb{R}_{++}^K$ are defined componentwise\footnote{It is worth noting that the mappings $I_{\v x}(\cdot)$, $J_{\vv \beta, \v x}(\cdot)$, and the forthcoming $\m Q_{\vv \beta, \v x}(\cdot)$ follow the forms as in \cite{fan2022EfficientlyGloballySolving} and \cite{fan2023QoSbasedBeamformingCompression}, but with explicit incorporation of the dependence of $\v x$ and $\vv \beta$.}.
Specifically, the $k$-th component of $I_{\v x}(\vv \beta)$ with $k \in \mathcal{K}$ is given by 
	\begin{equation}\label{equ:def_I}
		(I_{\v x}(\vv \beta))_k = \frac{\overline{\gamma}_k}{\overline{\gamma}_k + 1} \cdot \frac{1}{\v h_k^\hermitian \m C(\vv \beta, \{\m \Lambda_{\v x, m}(\vv \beta)\}, \v x)^{-1} \v h_k}, 
	\end{equation}
	where 
	\begin{multline*}
		\m C(\vv \beta, \{\m \Lambda_{\v x, m}(\vv \beta)\}, \v x) = \m I + \sum_{k\in\cK} \beta_k \v h_k \v h_k^\hermitian + \diag(\v x) \\ + \diag(\m \Lambda_{\v x, 1}(\vv \beta)^{(1,1)}, \m \Lambda_{\v x, 2}(\vv \beta)^{(2,2)}, \d, \m \Lambda_{\v x, M}(\vv \beta)^{(M,M)}) 
	\end{multline*}
	and $\{\m \Lambda_{\v x, m}(\cdot)\}$ are the mappings given in \cite[Section 3.2.1]{fan2022EfficientlyGloballySolving}. 
	The $k$-th component of $J_{\vv \beta, \v x}(\v p)$ with $k \in \cK$ is defined as 
	\begin{equation}\small 
		\begin{aligned}
			&(J_{\vv \beta, \v x}(\v p))_k  \\
			={}& \frac{\gamma_k \left( \sum_{j\neq k} p_j |\v h_k^\hermitian \v u_j(\vv \beta, \v x)|^2 + \v h_k^\hermitian \m Q_{\vv \beta, \v x}(\v p) \v h_k + \sigma_k^2\right)}{|\v h_k^\hermitian \v u_k(\vv \beta, \v x)|^2}, 
		\end{aligned}
		\label{equ:def_Jk}
	\end{equation}
	where 
	\begin{equation}
		\begin{aligned}
			\v u_k(\vv \beta, \v x) &= \frac{\m C(\vv \beta, \{\m \Lambda_{\v x, m}(\vv \beta)\}, \v x)^{-1} \v h_k}{\|\m C(\vv \beta, \{\m \Lambda_{\v x, m}(\vv \beta)\}, \v x)^{-1} \v h_k\|}
		\end{aligned} \label{equ:def_u}
	\end{equation}
	and $\m Q_{\vv \beta, \v x}(\cdot)$ is given in the paragraph under Eq. (29) in \cite{fan2023QoSbasedBeamformingCompression}. 
	It is worth mentioning that $\m Q_{\vv \beta, \v x}(\v p)$ has a special linear relationship with respect to $\v p$, given by 
	\begin{equation}
		\m Q_{\vv \beta, \v x}(\v p) = \sum_{k\in\cK} p_k \underline{\m A}_k(\vv \beta, \v x), 
		\label{equ:def_Q}
	\end{equation}
	where $\underline{\m A}_k(\vv \beta, \v x) \succeq \m 0$. 
	
	The second stage of the algorithm is to compute the function value and the gradient based on the solutions to the equations in \eqref{equ:fp_equation}.
	Let $\vv \beta^\star(\v x)$ and 
	\begin{equation}
		\v p^\star(\v x) = \v p^\star(\vv \beta^\star(\v x), \v x)
		\label{equ:pstarx}
	\end{equation}
	be the unique solutions to the equations in \eqref{equ:fp_equation} (such uniqueness is guaranteed by Lemma \ref{lem:fixedpoint} in Appendix \ref{apd:lemmas} in the Supplementary Material). 
	The solutions $\{\m V_k^\star(\v x)\}$ and $\m Q^\star(\v x)$ to problem~\eqref{equ:JBCP_PAPC_dual_sub} can be recovered by
	\begin{equation*}
		\begin{aligned}
			\m V_k^\star(\v x) &= (\v p^\star(\v x))_k \v u_k(\vv \beta^\star(\v x), \v x) \v u_k(\vv \beta^\star(\v x), \v x)^\hermitian, \fk,\\
			\m Q^\star(\v x) &= \m Q_{\vv \beta^\star(\v x), \v x}(\v p^\star(\v x)). 
		\end{aligned}
		\label{equ:reconstruction}
	\end{equation*}
	Then, by \eqref{equ:grad}, we can compute $\nabla d(\v x)$ using $\vv \beta^\star (\v x)$ and $\v p^\star(\v x)$ as follows: 
	\begin{equation}
		\nabla d(\v x) = \v g(\vv \beta^\star(\v x), \v p^\star(\v x), \v x),
		\label{equ:grad2}
	\end{equation}
	where the $m$-th component of $\v g(\cdot, \cdot, \cdot)$ is defined as
	\begin{equation}\label{equ:compute_ap_grad_component}\small 
		(\v g(\vv \beta, \v p, \v x))_m = \sum_{k \in \cK} p_k |\v u_k(\vv \beta, \v x)^{(m)}|^2 + \m Q_{\vv \beta, \v x}(\v p) - \overline{P}_m. 
	\end{equation}
	Furthermore, we can compute $d(\v x)$ using the optimal value of the dual problem of problem \eqref{equ:JBCP_PAPC_dual_sub} \cite{fan2023QoSbasedBeamformingCompression}, given by
\begin{equation}
		d(\v x) =  \widetilde{d}(\vv \beta^\star(\v x), \v x) := \sum_{k\in\cK} \beta_k^\star(\v x) \sigma_k^2 - \sum_{m\in\cM} x_m \overline{P}_m. 
		\label{equ:func}
\end{equation}

However, although the FP iteration algorithm in \cite{fan2022EfficientlyGloballySolving} can solve problem \eqref{equ:JBCP_PAPC_dual_sub} to global optimality, the computational cost of finding the global solution or a very high-precision solution generally is high. Consequently, directly applying the PG method \eqref{equ:PG} to solve problem \eqref{equ:JBCP_PAPC_dual} maybe inefficient. To address this, we propose leveraging the proposed APIG framework to solve problem \eqref{equ:JBCP_PAPC_dual}, which involves utilizing an inexact solution to problem \eqref{equ:JBCP_PAPC_dual_sub} and computing approximations of $d(\v x)$ and $\nabla d(\v x)$. The next subsection details this approach.

\subsection{Compute Approximations of $d(\v x)$ and $\nabla d(\v x)$}\label{ss:3c}

In this subsection, we first outline the FP iteration algorithm for solving problem \eqref{equ:JBCP_PAPC_dual_sub} inexactly for a given $\v x \geq \v 0$. 
Then, we describe how to use \eqref{equ:grad2} and \eqref{equ:func} with inexact solutions to compute approximations of $d(\v x)$ and $\nabla d(\v x)$. 
Finally, we demonstrate that the errors in these approximations are bounded by a factor proportional to the precision of the FP iteration algorithm. 

Given two stopping parameters $\operatorname{res}_1 \geq 0, \operatorname{res}_2\geq 0$,  define $\operatorname{res} =  \operatorname{res}_1 +  \operatorname{res}_2$. 
The FP iteration algorithm proposed in \cite{fan2022EfficientlyGloballySolving} solves the FP equations \eqref{equ:fp_equation} as follows:

Step 1: Starting from $\vv \beta^{(0)} \in \mathbb{R}_+^K$, the FP equation $I_{\v x}(\cdot)$ is solved iteratively as
\begin{equation}
	\begin{aligned}
		\vv \beta^{(i+1)} &= I_{\v x}(\vv \beta^{(i)}),\quad \forall\, i=0,1,\d. 
	\end{aligned}
	\label{equ:dual_iter}
\end{equation}
The iteration terminates once the FP residual  satisfies
\begin{equation}
	\mu(\vv \beta^{(i)}, \vv \beta^{(i+1)})\leq \operatorname{res}_1, 
	\label{equ:res1}
\end{equation}
where $\mu(\cdot, \cdot)$ is Thompson's metric defined in \eqref{equ:metric}. 
The approximate FP of $I_{\v x}(\cdot)$ is denoted as $\widetilde{\vv \beta}: = \vv \beta^{(i+1)}$. 

Step 2: Starting from $\v p^{(0)} \in \mathbb{R}_+^K$, the FP equation $J_{\widetilde{\vv \beta}, \v x}(\cdot)$ is solved iteratively as
\begin{equation}
	\begin{aligned}
		\v p^{(j+1)} &= J_{\widetilde{\vv \beta}, \v x}(\v p^{(j)}),\quad\forall\, j=0,1,\d. 
	\end{aligned}
	\label{equ:primal_iter}
\end{equation}
The iteration terminates when 
\begin{equation}
	\mu(\v p^{(j)}, \v p^{(j+1)}) \leq \operatorname{res}_2. 
	\label{equ:res2}
\end{equation}
The approximate FP of $J_{\widetilde{\vv \beta}, \v x}(\cdot)$ is denoted as $\widetilde{\v p}: = \v p^{(j+1)}$.

After obtaining the approximate FPs $\widetilde{\vv \beta}$ and $\widetilde{\v p}$, the inexact function and gradient are computed by replacing $\vv \beta^\star(\v x)$ and $\v p^\star(\v x)$ with $\widetilde{\vv \beta}$ and $\widetilde{\v p}$ in \eqref{equ:grad2} and \eqref{equ:func}, respectively. 
They are denoted by $\widetilde{d}(\widetilde{\vv \beta}, \v x)$ and $\v g(\widetilde{\vv \beta}, \widetilde{\v p}, \v x)$. 
To effectively control the terms $| \widetilde{d}(\widetilde{\vv \beta}, \v x) - d(\v x) |$ and $\| \v g(\widetilde{\vv \beta}, \widetilde{\v p}, \v x) - \nabla d(\v x)\|$ via the stopping parameter $\operatorname{res}$, we introduce the following assumption. 
\begin{assumption}\label{ass:lip}
\begin{itemize}
	\item[(a)] The FP equation \eqref{equ:primal_fp} with $\vv \beta = \widetilde{\vv \beta}$ has a FP. 
	\item[(b)] There exists a constant $\kappa_1(\v x)>0$ such that 
	\begin{equation}
		\mu(J_{\widetilde{\vv \beta}, \v x}(\widetilde{\v p}), J_{\widetilde{\vv \beta}, \v x}(\v p^\star(\v x))) \leq \kappa_1(\v x) \mu(\widetilde{\v p}, \v p^\star(\v x)). 
		\label{equ:lc_primal_ass}
	\end{equation}
	\item[(c)] The following Lipschitz continuity conditions hold for $\v p^\star(\cdot, \cdot)$ and $\v g(\cdot, \cdot, \cdot)$: 
	\begin{equation}
		\|\v p^\star(\widetilde{\vv \beta}, \v x) - \v p^\star(\vv \beta^\star(\v x), \v x)\| \leq L_1(\v x) \|\widetilde{\vv \beta} - \vv \beta^\star(\v x)\|
		\label{equ:asss1}
	\end{equation}
	and 
	\begin{equation}
		\begin{aligned}
			&\|\v g(\widetilde{\vv \beta}, \widetilde{\v p}, \v x) - \v g(\vv \beta^\star(\v x), \v p^\star(\v x), \v x)\|\\
			\leq{}& L_2(\v x) (\|\widetilde{\vv \beta} - \vv \beta^\star(\v x)\| + \| \widetilde{\v p} - \v p^\star(\v x) \|),
		\end{aligned}
		\label{equ:asss2}
	\end{equation}
	where $L_1(\v x) \geq 0$ and $L_2(\v x) \geq 0$ are constants. Note that $\v p^\star(\v x) = \v p^\star(\vv \beta^\star(\v x), \v x)$ as defined in \eqref{equ:pstarx}. 
\end{itemize}
\end{assumption}

As shown in \cite{fan2022EfficientlyGloballySolving}, when \(\operatorname{res}_1 = 0\), \(\widetilde{\vv \beta}\) is the exact FP \(\vv \beta^\star(\v x)\), and thus solves the FP equation \eqref{equ:dual_fp}. 
In this case, \eqref{equ:lc_primal_ass} holds because of \eqref{equ:lc3} in Lemma \ref{lem:properties} (c) in Appendix \ref{apd:lemmas} in the Supplementary Material,
\eqref{equ:asss1} holds naturally, and \eqref{equ:asss2} also holds because the mapping $\v g(\vv \beta, \v p, \v x)$ depends linearly on $\v p$, as shown in \eqref{equ:def_Q} and \eqref{equ:compute_ap_grad_component}. 
Therefore, Assumption \ref{ass:lip} is satisfied. 
We assume similar conditions hold for general choices of \(\operatorname{res}_1\), specifically for sufficiently small \(\operatorname{res}_1\) to ensure that \(\widetilde{\vv \beta}\) is sufficiently close to \(\vv \beta^\star\). 

Under Assumption \ref{ass:lip}, we have the following result. 
\begin{proposition}
	Let $(\widetilde{\vv \beta}, \widetilde{\v p})$ be the output of the iterative procedure \eqref{equ:dual_iter}--\eqref{equ:res2} with termination parameter $\operatorname{res}$. 
	If Assumption \ref{ass:lip} holds, then there exists a constant $C(\v x) \geq 0$ such that  
	\begin{subequations}\label{equ:control}
		\begin{align}
			| \widetilde{d}(\widetilde{\vv \beta}, \v x) - d(\v x) |	 \leq C(\v x) \operatorname{res}, \label{equ:control_function}\\
			\| \v g(\widetilde{\vv \beta}, \widetilde{\v p}, \v x) - \nabla d(\v x)\| \leq C(\v x) \operatorname{res}\!. 
			\label{equ:control_gradient}
		\end{align}
	\end{subequations}
	\label{prop:error_control}
\end{proposition}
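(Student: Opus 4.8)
The plan is to first show that both approximate fixed points $\widetilde{\vv \beta}$ and $\widetilde{\v p}$ stay within $\mathcal O(\operatorname{res})$ of their exact counterparts $\vv \beta^\star(\v x)$ and $\v p^\star(\v x)$ in Euclidean norm, and then to propagate these two estimates through the \emph{affine} map $\widetilde d(\cdot, \v x)$ in \eqref{equ:func} and the \emph{Lipschitz} map $\v g(\cdot,\cdot,\v x)$ via \eqref{equ:asss2}. Both inequalities in \eqref{equ:control} then reduce to controlling the norms $\|\widetilde{\vv \beta} - \vv \beta^\star(\v x)\|$ and $\|\widetilde{\v p} - \v p^\star(\v x)\|$.

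First I would control the dual error. Since $\widetilde{\vv \beta} = \vv \beta^{(i+1)} = I_{\v x}(\vv \beta^{(i)})$ with the stopping residual $\mu(\vv \beta^{(i)}, \vv \beta^{(i+1)}) \le \operatorname{res}_1$, and since $I_{\v x}(\cdot)$ is a Thompson-metric contraction with some modulus $\rho_1(\v x)\in(0,1)$ whose unique fixed point is $\vv \beta^\star(\v x)$ (the property underlying the uniqueness in Lemma \ref{lem:fixedpoint}), the standard residual-to-distance estimate $\mu(\vv z, \vv z^\star) \le (1-\rho_1)^{-1}\mu(\vv z, I_{\v x}(\vv z))$ yields $\mu(\widetilde{\vv \beta}, \vv \beta^\star(\v x)) \le \tfrac{\rho_1}{1-\rho_1}\operatorname{res}_1$. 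Because all iterates remain in a bounded, strictly positive box, the Thompson metric and the Euclidean norm are equivalent there, giving $\|\widetilde{\vv \beta} - \vv \beta^\star(\v x)\| \le C_\beta(\v x)\operatorname{res}_1$. The function-error bound \eqref{equ:control_function} is then immediate from \eqref{equ:func}: as $\widetilde d(\vv \beta, \v x)$ is affine in $\vv \beta$, we get $|\widetilde d(\widetilde{\vv \beta}, \v x) - d(\v x)| = \big|\sum_{k\in\cK}(\widetilde\beta_k - \beta_k^\star(\v x))\sigma_k^2\big| \le \big(\sum_{k\in\cK}\sigma_k^4\big)^{1/2}\,\|\widetilde{\vv \beta} - \vv \beta^\star(\v x)\| \le C(\v x)\operatorname{res}$.

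Next I would control the primal error $\|\widetilde{\v p} - \v p^\star(\v x)\|$, which couples both residuals. I would split $\mu(\widetilde{\v p}, \v p^\star(\v x)) \le \mu(\widetilde{\v p}, J_{\widetilde{\vv \beta}, \v x}(\widetilde{\v p})) + \mu(J_{\widetilde{\vv \beta}, \v x}(\widetilde{\v p}), J_{\widetilde{\vv \beta}, \v x}(\v p^\star(\v x))) + \mu(J_{\widetilde{\vv \beta}, \v x}(\v p^\star(\v x)), \v p^\star(\v x))$. The first term is an inner-iteration residual, bounded by $\mathcal O(\operatorname{res}_2)$ via the stopping rule \eqref{equ:res2} and the contraction of $J_{\widetilde{\vv \beta}, \v x}$. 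The middle term is $\le \kappa_1(\v x)\mu(\widetilde{\v p}, \v p^\star(\v x))$ by Assumption \ref{ass:lip}(b), which I absorb on the left using $\kappa_1(\v x)<1$. The last term measures the sensitivity of the map to the dual variable: since $\v p^\star(\v x)$ is the exact fixed point of $J_{\vv \beta^\star(\v x),\v x}$, it equals $\mu(J_{\widetilde{\vv \beta}, \v x}(\v p^\star(\v x)), J_{\vv \beta^\star(\v x), \v x}(\v p^\star(\v x)))$, which is $\mathcal O(\operatorname{res}_1)$ by Step 1; equivalently, one may bypass it by bounding $\|\widetilde{\v p} - \v p^\star(\widetilde{\vv \beta}, \v x)\|$ through the contraction and $\|\v p^\star(\widetilde{\vv \beta}, \v x) - \v p^\star(\vv \beta^\star(\v x), \v x)\|$ through \eqref{equ:asss1} (whose solvability is ensured by Assumption \ref{ass:lip}(a)). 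Rearranging and converting to Euclidean norm gives $\|\widetilde{\v p} - \v p^\star(\v x)\| \le C_p(\v x)(\operatorname{res}_1 + \operatorname{res}_2) = C_p(\v x)\operatorname{res}$.

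Finally, the gradient-error bound \eqref{equ:control_gradient} follows from \eqref{equ:asss2}: $\|\v g(\widetilde{\vv \beta}, \widetilde{\v p}, \v x) - \nabla d(\v x)\| \le L_2(\v x)\big(\|\widetilde{\vv \beta} - \vv \beta^\star(\v x)\| + \|\widetilde{\v p} - \v p^\star(\v x)\|\big) \le L_2(\v x)\big(C_\beta(\v x)\operatorname{res}_1 + C_p(\v x)\operatorname{res}\big) = \mathcal O(\operatorname{res})$, and taking $C(\v x)$ to be the largest of the constants produced above establishes both estimates in \eqref{equ:control}. I expect the main obstacle to be the primal step: because $\widetilde{\v p}$ is an approximate fixed point of the \emph{perturbed} map $J_{\widetilde{\vv \beta}, \v x}$ rather than of $J_{\vv \beta^\star(\v x), \v x}$, its error inherits contributions from both the inner residual $\operatorname{res}_2$ and the propagated dual error $\operatorname{res}_1$; cleanly disentangling these, and justifying the equivalence of the Thompson metric with the Euclidean norm on the bounded strictly positive region where all iterates live, is the delicate part, and is precisely what Assumption \ref{ass:lip} together with the boundedness afforded by Lemma \ref{lem:fixedpoint} and Lemma \ref{lem:properties} are designed to supply.
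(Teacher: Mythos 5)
Your proposal follows essentially the same route as the paper's proof: the paper also (i) turns the stopping residuals \eqref{equ:res1}--\eqref{equ:res2} into distance estimates by the standard contraction/residual argument in the Thompson metric (its inequality \eqref{equ:99}, with $\kappa_2(\v x)$ from Lemma \ref{lem:properties}(b) playing the role of your $\rho_1(\v x)$, and Assumption \ref{ass:lip}(b) for the primal map), (ii) converts Thompson distances to Euclidean ones via Lemma \ref{lem:norm_control} (note that only the upper bound $\max\{\|\v x\|,\|\v y\|\}$ enters the constant there, so no positivity bound away from zero is needed), (iii) splits the primal error exactly as in your ``bypass,'' i.e.\ $\|\widetilde{\v p}-\v p^\star(\v x)\| \le \|\widetilde{\v p}-\v p^\star(\widetilde{\vv \beta},\v x)\| + \|\v p^\star(\widetilde{\vv \beta},\v x)-\v p^\star(\vv \beta^\star(\v x),\v x)\|$ with the second piece handled by \eqref{equ:asss1} (this is the paper's \eqref{equ:102}), and (iv) finishes with the linearity of $\widetilde d(\cdot,\v x)$ and the Lipschitz bound \eqref{equ:asss2}, taking $C(\v x)$ to be the maximum of the resulting constants.

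The one step you should not lean on is the last term of your preferred three-term decomposition: you assert that $\mu(J_{\widetilde{\vv \beta},\v x}(\v p^\star(\v x)),\v p^\star(\v x)) = \mu(J_{\widetilde{\vv \beta},\v x}(\v p^\star(\v x)), J_{\vv \beta^\star(\v x),\v x}(\v p^\star(\v x)))$ is $\mathcal O(\operatorname{res}_1)$ ``by Step 1.'' Step 1 only controls $\mu(\widetilde{\vv \beta},\vv \beta^\star(\v x))$; to bound the displayed quantity you would additionally need Lipschitz continuity of the map $\vv \beta \mapsto J_{\vv \beta,\v x}(\v p^\star(\v x))$ in its $\vv \beta$-argument, and Assumption \ref{ass:lip} supplies no such property (it gives Lipschitzness of the solution map $\v p^\star(\cdot,\v x)$ in \eqref{equ:asss1} and of $\v g$ in \eqref{equ:asss2}, not of $J$ itself). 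While such continuity is plausible from the explicit formulas \eqref{equ:def_Jk}--\eqref{equ:def_Q}, it is not among the stated hypotheses, so within the paper's framework your parenthetical ``bypass'' is not merely an equivalent alternative but the argument that actually closes the proof --- and it is precisely the one the paper uses.
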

\begin{proof}
	See Appendix~\ref{apd:error_control} in the Supplementary Material. 
\end{proof}

\subsection{A Practical Algorithm for Solving Problem \eqref{equ:JBCP_PAPC_dual}}\label{ss:3d}

\begin{algorithm}[t]
	\caption{Proposed APIG-FP Algorithm}\label{alg:APIG-FP}
	\begin{algorithmic}[1]
		\STATE Initialize: summable nonnegative sequences $\{(\widetilde{\eta}_i^{\v g})^2\}$ and $\{\widetilde{\eta}_i^f\}, \widetilde{b} \geq 0$, $\theta \in (0,1)$, $\varrho > 1$, $\epsilon > 0$, $\lambda_0^{(0)}, \lambda_1^{(1)} \in [\lambda_{\min}, \lambda_{\max}]$, $\alpha \in (0, 1)$, $\v x^0 \in \mathbb{R}_+^n$. Set $i = 0$ and give an initial guess of $C_0$ and  $\widetilde C_0$. 
		
		\FOR{$i = 0,1,\d $} 
		\FOR{$\ell = 0,1, \ldots$}
		\STATE Set $\lambda = \lambda_{i}^{(0)} \alpha^\ell$.  
		\STATE Use the iterative procedure \eqref{equ:dual_iter}--\eqref{equ:res2} with $\operatorname{res}$ given in \eqref{equ:set_res} to solve problem~\eqref{equ:JBCP_PAPC_dual_sub} with $\v x = \v x^i$ and $\v x = \v x^i(\lambda)$, respectively, to obtain $(f_i, f_i(\lambda), \v g^i)$.  
		\STATE \textbf{If} \eqref{equ:B1:JBCP} holds \textbf{then} break.
		\STATE Update $C_i = \varrho C_i$ and  $\widetilde C_i  =  \varrho \widetilde C_i$. 
		\ENDFOR
		
		\STATE Set $\lambda_i = \lambda$ and update $\v x^{i+1} = \v x^{i}(\lambda)$.
		
		\STATE \textbf{If} \eqref{equ:termination2} holds \textbf{then} return $\v x^i$ and break. 
		
		\STATE \revxl{Set $\lambda_{i+1}^{(0)}$ via \eqref{equ:BB} when $i \geq 1$}, $C_{i+1} = C_i$, and $\widetilde{C}_{i+1} = \widetilde{C}_i$.
		\ENDFOR
	\end{algorithmic}
\end{algorithm}

Now it is evident that problem~\eqref{equ:JBCP_PAPC_dual}, an equivalent reformulation of problem \eqref{equ:JBCP_PAPC}, is a special instance of problem~\eqref{equ:generic_problem} with 
$f(\cdot) = -d(\cdot)$ and $h(\cdot)$ being the indicator function of $\mathbb{R}_{+}^M$.  
Therefore, we can apply Algorithm \ref{frm:PIG} to solve problem \eqref{equ:JBCP_PAPC_dual}. To achieve this, we need to compute the inexact function and gradient information at $\v x^i$ and $\v x^i(\lambda)$ such that the inexact conditions \eqref{equ:general_gradient_inexact_new}, \eqref{equ:function_error1}, and 
\eqref{equ:function_error2} hold. 
Thanks to Proposition~\ref{prop:error_control}, this can be done by 
properly controlling the termination parameter $\operatorname{res}$. 

Specifically, let  $\widetilde{b} \geq 0$, and $\{(\widetilde{\eta}_i^{\v g})^2\}$ and $\{\widetilde{\eta}_i^f\}$ be nonnegative summable sequences.  
By setting 
	\begin{equation}\label{equ:set_res}
		\operatorname{res} = \min\left\{\sqrt{(\widetilde{\eta}_i^{\v g})^2 + \widetilde{b}^2 \|\v x^i(\lambda) - \v x^i\|^2}, \widetilde{\eta}_i^f \right\},
	\end{equation}
then \eqref{equ:general_gradient_inexact_new}, \eqref{equ:function_error1}, and 
	\eqref{equ:function_error2} become 
	
	\begin{subequations}
		\begin{align*}
				&\|\varepsilon_i^{\v{g}}\|^2 \leq  (C_i \widetilde{\eta}_i^{\v{g}})^2 +  (C_i\widetilde{b})^2   \|\v{x}^i(\lambda) - \v{x}^i\|^2, 
				\label{equ:general_gradient_inexact_new:JBCP} 
				\\
				&|\varepsilon_i^f| \leq C_i \widetilde{\eta}_i^f,	
				\quad |\varepsilon_i^f(\lambda)| \leq \widetilde C_i \widetilde{\eta}_i^f, 
		\end{align*}
	\end{subequations}
	where $C_i = C(\v x^i)$ and $\widetilde C_i = C(\v x^i(\lambda))$. 
	Moreover, the LS condition \eqref{equ:B1} becomes 
	\begin{equation}
		f_i(\lambda) \leq f_i  - \theta \lambda^{-1}\| \v x^{i}(\lambda) - \v x^i \|^2 + \nu_i, 
		\label{equ:B1:JBCP}
	\end{equation}
	with $
		\nu_i =   \Upsilon_1(\lambda) {(C_i\widetilde{\eta}_i^{\v g})^2}  + (C_i + \widetilde C_i) \widetilde{\eta}_i^f. 
		$
	However, due to the unavailability of $C_i$ and $\widetilde C_i$, the following adjustments might be necessary to find a suitable stepsize such that the LS condition \eqref{equ:B1:JBCP} holds: given initial guesses for $C_i$ and $\widetilde C_i$, if the LS condition \eqref{equ:B1:JBCP} is not satisfied, we not only decrease the stepsize but also increase $C_i$ and $\widetilde C_i$. 
	Similar to Lemma \ref{lem:ls}, we can show that this procedure terminates after a finite number of steps.
	Furthermore, the termination criterion \eqref{equ:termination}  becomes
	\begin{equation}
		\| \v G_{\lambda_i}(\v x^i, \v g^i) \| + C_i \sqrt{(\widetilde{\eta}_i^{\v g})^2 + \widetilde{b}^2 \|\v x^{i+1} - \v x^i\|^2} \leq \epsilon,
		\label{equ:termination2}
	\end{equation}
	which, together with Lemma \ref{lem:KKT}, guarantees that the KKT violation is controlled by the stopping parameter $\epsilon$.

We now present the APIG-FP algorithm for solving 
problem  \eqref{equ:JBCP_PAPC_dual}, with details given in Algorithm~\ref{alg:APIG-FP}. 
Next, we establish the convergence result of the proposed algorithm APIG-FP based on the theoretical guarantee for the APIG framework. 
\begin{theorem}
	\label{thm:convergece_PIGA}
	Suppose that Assumption \ref{ass:lip} holds, and let $\{\v x^i\}$ be the sequence generated by Algorithm \ref{alg:APIG-FP} with $\epsilon > 0$. 
	If there  exists a constant $B_1 > 0$ such that $\|\v x^i\| \leq B_1$, then, Algorithm~\ref{frm:PIG} terminates within $N_s = \mathcal{O}(\epsilon^{-2})$ iterations, and $\v x^{N_s}$ is an $\epsilon$-stationary point of problem \eqref{equ:generic_problem}. 
\end{theorem}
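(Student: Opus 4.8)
The plan is to observe that Algorithm~\ref{alg:APIG-FP} (APIG-FP) is exactly the general APIG framework of Algorithm~\ref{frm:PIG} specialized to the dual problem~\eqref{equ:JBCP_PAPC_dual}, which is an instance of the generic problem~\eqref{equ:generic_problem} with $f(\cdot)=-d(\cdot)$ and $h(\cdot)$ the indicator of $\mathbb{R}_+^M$. Hence the entire claim reduces to verifying, along the generated trajectory $\{\v x^i\}$, the hypotheses of the nonconvex part of Theorem~\ref{thm:PIG_convergence}: Assumptions~\ref{ass:1} and~\ref{ass:2} for $f=-d$, the inexactness conditions~\eqref{equ:general_gradient_inexact_new}, \eqref{equ:function_error1}, and~\eqref{equ:function_error2} with summable driving sequences, and the finite termination of the (modified) backtracking line search.

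First I would establish Assumption~\ref{ass:1}. Because $\|\v x^i\|\le B_1$ and $\lambda\le\lambda_{\max}$, the trial points $\v x^i(\lambda)$ produced by the proximal step remain in a fixed compact set $\mathcal{B}=\{\v x\ge\v 0:\|\v x\|\le B_2\}$ for some $B_2\ge B_1$. On $\mathcal{B}$, Proposition~\ref{prop:diff} expresses $\nabla d(\v x)$ through the \emph{unique} solution $(\{\m V_k^\star(\v x)\},\m Q^\star(\v x))$ of the parametric subproblem~\eqref{equ:JBCP_PAPC_dual_sub}; a sensitivity argument, using this uniqueness together with the strict feasibility of~\eqref{equ:JBCP_PAPC_SDR}, shows that the solution map is Lipschitz in $\v x$ on $\mathcal{B}$, so $\nabla d$ admits a uniform Lipschitz constant $L>0$ there. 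Although Assumption~\ref{ass:1} is stated globally on $\operatorname{dom}(h)=\mathbb{R}_+^M$, the boundedness hypothesis confines the iterates to $\mathcal{B}$, and the proof of Theorem~\ref{thm:PIG_convergence} invokes Lipschitz continuity only along the visited points, so a constant valid on $\mathcal{B}$ suffices. Assumption~\ref{ass:2} is then immediate: $f=-d$ is convex with nonempty optimal set by strong duality (the SDR is tight and strictly feasible), and level-boundedness holds on the compact region of interest. \emph{This perturbation analysis of the implicitly defined dual of an SDP is the step I expect to be the main obstacle.}

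Next I would verify the inexactness conditions. By the choice of $\operatorname{res}$ in~\eqref{equ:set_res}, Proposition~\ref{prop:error_control} gives $|\varepsilon_i^f|\le C_i\widetilde{\eta}_i^f$, $|\varepsilon_i^f(\lambda)|\le\widetilde C_i\widetilde{\eta}_i^f$, and $\|\varepsilon_i^{\v g}\|^2\le(C_i\widetilde{\eta}_i^{\v g})^2+(C_i\widetilde b)^2\|\v x^i(\lambda)-\v x^i\|^2$, i.e.\ conditions~\eqref{equ:general_gradient_inexact_new}--\eqref{equ:function_error2} hold with effective parameters $a=0$, $b=C_i\widetilde b$, $c=0$, $\eta_i^{\v g}=C_i\widetilde{\eta}_i^{\v g}$, and $\eta_i^f=\max\{C_i,\widetilde C_i\}\,\widetilde{\eta}_i^f$. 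The key is that $C_i,\widetilde C_i$ stay uniformly bounded: since $C(\cdot)$ is finite on the compact set $\mathcal{B}$, it attains a finite maximum $C_{\max}$, and because the algorithm inflates $C_i,\widetilde C_i$ by the factor $\varrho$ only when the line search fails, these constants never exceed $\varrho\,C_{\max}$. Consequently $\{(\eta_i^{\v g})^2\}=\{(C_i\widetilde{\eta}_i^{\v g})^2\}$ remains square-summable and $\{\eta_i^f\}$ remains summable, as required by~\eqref{equ:def_etai} and~\eqref{equ:def_etaif}.

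Finally, the same inflation bound yields finite termination of the inner loop: once $\lambda\le\overline{\lambda}_\theta$ (with the effective parameters) and $C_i\ge C(\v x^i)$, $\widetilde C_i\ge C(\v x^i(\lambda))$, the argument of Lemma~\ref{lem:ls} forces~\eqref{equ:B1:JBCP} to hold, so only finitely many backtracking steps occur and the analogue of Proposition~\ref{lem:bound} gives $\lambda_i\in[\underline{\lambda}_\theta,\lambda_{\max}]$. Moreover, the tailored relaxation $\nu_i=\Upsilon_1(\lambda)(C_i\widetilde{\eta}_i^{\v g})^2+(C_i+\widetilde C_i)\widetilde{\eta}_i^f$ is dominated by the generic $\nu_i$ of~\eqref{equ:def_nu} under the above identifications, so~\eqref{equ:B1:JBCP} implies the generic condition~\eqref{equ:B1} and the controlled-descent estimate of Lemma~\ref{lem:descent_property} applies. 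With Assumptions~\ref{ass:1} and~\ref{ass:2} and the inexactness conditions in force, and the termination criterion~\eqref{equ:termination2} coinciding with~\eqref{equ:termination} (so that, via Lemma~\ref{lem:KKT}, the KKT residual is at most $\epsilon$), the nonconvex part of Theorem~\ref{thm:PIG_convergence} applies verbatim, giving termination within $N_s=\mathcal{O}(\epsilon^{-2})$ iterations with $\v x^{N_s}$ an $\epsilon$-stationary point, as claimed.
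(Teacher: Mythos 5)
Your reduction skeleton matches the paper's proof exactly: the boundedness hypothesis $\|\v x^i\|\le B_1$ replaces the level-boundedness part of Assumption~\ref{ass:2}, the inexactness conditions \eqref{equ:general_gradient_inexact_new}--\eqref{equ:function_error2} are enforced through Proposition~\ref{prop:error_control} and the choice of $\operatorname{res}$ in \eqref{equ:set_res}, and the theorem then follows from the nonconvex part of Theorem~\ref{thm:1} once Assumption~\ref{ass:1} is verified. The paper's proof says precisely this and devotes all of its technical content (Appendix~\ref{apd:convergece_APIG-FP}) to that last verification.

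The gap is exactly at the step you yourself flag as ``the main obstacle.'' You assert that uniqueness of the solution of the parametric SDP \eqref{equ:JBCP_PAPC_dual_sub}, together with strict feasibility of \eqref{equ:JBCP_PAPC_SDR}, yields ``by a sensitivity argument'' Lipschitz continuity of the solution map $\v x\mapsto(\{\m V_k^\star(\v x)\},\m Q^\star(\v x))$ on a compact set, hence a uniform Lipschitz constant for $\nabla d$ via Proposition~\ref{prop:diff}. This is unsupported and false in the stated generality: uniqueness plus Slater's condition gives at best \emph{continuity} of a parametric SDP solution map; Lipschitz (or even H\"older) stability of SDP solutions requires substantially stronger hypotheses (nondegeneracy, strict complementarity, or second-order conditions), none of which are established here. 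The paper avoids the primal solution map entirely. It instead uses the identity $d(\v x)=\sum_{k\in\cK}\beta_k^\star(\v x)\sigma_k^2-\sum_{m\in\cM}x_m\overline P_m$ from \eqref{equ:func}, which is \emph{linear} in $\vv\beta^\star(\v x)$; applies the implicit function theorem to the fixed-point equation \eqref{equ:dual_fp} --- legitimate because $I_{\v x}(\cdot)$ is a rational map and the Thompson-metric contraction gives $\rho\bigl(\m D(\v x)\bigr)\le\kappa_4(B)<1$ uniformly on $\mathcal S_{B_1}^M$, where $\m D(\v x)=\frac{\partial I_{\v x}}{\partial\vv\beta}(\vv\beta^\star(\v x))$, so that $\m I-\m D(\v x)$ is invertible --- and then bounds $\bigl\|\frac{\partial^2\vv\beta^\star}{\partial\v x^2}(\v x)\bigr\|$ on the compact set, which bounds $\|\nabla^2 d(\v x)\|$ and yields the Lipschitz gradient. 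Closing your gap requires either reproducing this fixed-point/implicit-function argument or supplying a genuine SDP perturbation theorem; the route you sketch does not go through as stated. A secondary, more minor soft spot: your uniform bound $C_i,\widetilde C_i\le\varrho C_{\max}$ presumes that $C(\cdot)$ attains a finite maximum on the compact set, but pointwise finiteness of $C(\v x)$ (Proposition~\ref{prop:error_control}) does not give a finite supremum without continuity or upper semicontinuity of the constants $\kappa_1(\cdot),\kappa_2(\cdot),L_1(\cdot),L_2(\cdot)$, which is not established; the paper sidesteps this by only asserting finite termination of the modified line search ``similar to Lemma~\ref{lem:ls}.''
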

\begin{proof}
	Since $\|\v x^i\| \leq B_1$, we do not need the level-set boundness in Assumption \ref{ass:2} in Theorem \ref{thm:1}. 
	To apply Theorem \ref{thm:1} and establish the desired result, it suffices to verify Assumption \ref{ass:1}, which is confirmed in  Appendix~\ref{apd:convergece_APIG-FP} in the Supplementary Material. 
\end{proof}

Theorem \ref{thm:convergece_PIGA} shows that the iteration complexity of APIG-FP for solving problem \eqref{equ:JBCP_PAPC_dual} is $\mathcal{O}(\epsilon^{-2})$. 
The parameters $(\widetilde{b}, \widetilde{{\eta}}_i^{\v g}, \widetilde{\eta}_i^f)$ in Algorithm \ref{alg:APIG-FP} controls the precision of solving the subproblem \eqref{equ:JBCP_PAPC_dual_sub} at each iteration. 
Compared to the PG method applied to problem \eqref{equ:JBCP_PAPC_dual}, APIG-FP has a lower per-iteration computational cost via solving the subproblem inexactly. 
As demonstrated in numerical experiments in the next subsection, such reduction in computational effort can significantly improve the overall efficiency of the APIG-FP algorithm.

\subsection{Simulation Results}\label{ss:3e}

In this subsection, we evaluate the performance of the proposed algorithm, APIG-FP, and compare it with existing state-of-the-art algorithms. 
We adopt the same system setup as in \cite{liu2021UplinkdownlinkDualityMultipleaccess}. 
Specifically, we consider a downlink cooperative cellular network with $M = 7$ single-antenna BSs serving $K = 7$ single-antenna users. 
The channels between the BSs and the users are generated using the Rayleigh fading model with zero mean and unit variance. 
In the following numerical experiments, the users' SINR targets $\{\overline{\gamma}_k\}$, the BSs' power limits $\{\overline{P}_m\}$, and the fronthaul capacities $\{\overline{C}_m\}$ are assumed to be identical across all users and BSs, denoted by $\overline{\gamma}$, $\overline{P}$, and $\overline{C}$, respectively. 
By default, we set $\overline{\gamma} = \overline{C} = 3$, which are typical parameter values as used in \cite{liu2021UplinkdownlinkDualityMultipleaccess} and \cite{park2013JointPrecodingMultivariate}. 
The power limit $\overline{P}$ is selected based on $\overline{\gamma}$ through trial and error to ensure that most randomly generated problem instances are feasible and at least one of the PAPCs is active at the optimal solution\footnote{If none of the PAPCs is active at the optimal solution, the generated feasible problem can be directly solved by the algorithm proposed in \cite{fan2022EfficientlyGloballySolving} without needing to update $\v x$, as $\v x = \v 0$ is the optimal Lagrange multiplier. }.

In the proposed APIG-FP algorithm, we set the parameters as follows: $\epsilon = 10^{-6}$,  $\theta = 10^{-4}$, $\alpha = 0.25$, $\lambda_{\min} = 10^{-10}$, $\lambda_{\max} = 10^{10}$, $\lambda_0^{(0)} = \lambda_1^{(1)} = 1$, $\varrho = 1.1$, $C_0 = \widetilde{C}_0 = 100$, with an initial point  $\v x^0 = \v 0$. 
The selection of parameters $(\widetilde{b}, \widetilde{\eta}_i^{\v g}, \widetilde{\eta}_i^f)$ 
is crucial in  APIG-FP. For $\widetilde{\eta}_i^f$, we set $\widetilde{\eta}_i^f = 10^{-\delta_1} (i+1)^{-\delta_2}$ for  $i \geq 0$, where $\delta_1 \geq 0$ and $\delta_2 > 1$. 
We consider two different settings for $\widetilde{b}$ and $\widetilde{\eta}_i^{\v g}$, leading to two specialized versions of  APIG-FP: 
\begin{itemize}
	\item \textbf{APIG-FP-A} (\textbf{a}bsolute gradient inexactness): $\widetilde{b}\equiv 0$ and $\widetilde{\eta}_i^{\v g}  = 10^{-\delta_1} (i+1)^{-\delta_2}$ for $i \geq 0$. 
	\item \textbf{APIG-FP-R} (\textbf{r}elative gradient inexactness):   $\widetilde{b} = 10^{-\delta_3}$ with $\delta_3 \geq 0$ and $\widetilde{\eta}_i^{\v g} \equiv 0$.  
\end{itemize}
All the results presented in this subsection are averaged over 200 randomly generated instances. 

\subsubsection{Behaviors of Proposed Algorithms}

\begin{table}[t]
	\setlength{\tabcolsep}{3pt}
	\caption{Average results of 200 randomly generated instances with $\overline{P} = 12$ versus different algorithm parameters.}
	\label{tab:2}
	\begin{center}
		\begin{tabular}{@{}ccccccccccccccccc@{}}
			\toprule
			& &\multicolumn{5}{c}{Time ($10^{-2}$ seconds)} & \multicolumn{5}{c}{Per-Iteration FPIs} & \multicolumn{5}{c}{Outer Iterations} \\
			\cmidrule(l){3-7} \cmidrule(l){8-12} \cmidrule(l){13-17}
			$\delta_1$ & $\delta_2$ & \multicolumn{1}{c}{PG} & \multicolumn{1}{c}{a} & \multicolumn{1}{c}{b} & \multicolumn{1}{c}{c} & \multicolumn{1}{c}{d} & \multicolumn{1}{c}{PG} & \multicolumn{1}{c}{a} & \multicolumn{1}{c}{b} & \multicolumn{1}{c}{c} & \multicolumn{1}{c}{d} & \multicolumn{1}{c}{PG} & \multicolumn{1}{c}{a} & \multicolumn{1}{c}{b} & \multicolumn{1}{c}{c} & \multicolumn{1}{c}{d} \\
			\midrule
			$1$ & $1.2$ & 7.8 & 2.4 & \textbf{2.4} & \textbf{2.5} & 2.8 & 182 & \textbf{7} & \textbf{6} & \textbf{7} & \textbf{8} & 10 & 63 & 56 & 51 & 48\\
			$1$ & $2.0$ & -- & 2.5 & 2.6 & 2.6 & 2.8 & -- & 9 & 8 & 9 & 10 & -- & 52 & 50 & 45 & 42\\
			$1$ & $4.0$ & -- & 3.2 & 3.1 & 3.1 & 2.8 & -- & 32 & 26 & 26 & 23 & -- & 24 & 23 & 23 & 23\\
			$2$ & $1.2$ & -- & \textbf{2.4} & 2.5 & 2.5 & 2.8 & -- & 9 & 8 & 9 & 11 & -- & 50 & 48 & 43 & 41\\
			$2$ & $2.0$ & -- & 2.4 & 2.6 & 2.6 & \textbf{2.7} & -- & 13 & 11 & 12 & 13 & -- & 38 & 39 & 37 & 34\\
			$2$ & $4.0$ & -- & 2.8 & 3.0 & 3.0 & 3.0 & -- & 40 & 35 & 36 & 35 & -- & \textbf{17} & \textbf{17} & \textbf{17} & \textbf{17}\\
			\bottomrule
		\end{tabular}
		\begin{minipage}{0.5\textwidth}
			\vspace{6pt}
			\footnotesize
			\textit{Note:} Here, ``a' stands for APIG-FP-A, ``b', ``c', and ``d' stand for APIG-FP-R with $\delta_3 = 0, 1, 2$, respectively. Besides, since PG does not depend on the parameters $(\delta_1, \delta_2)$, we only report its result in the first row of the PG column and use the dash mark ``--' for the remaining rows.
		\end{minipage}
	\end{center}	\vspace{-15pt}
\end{table}

We first study the performance of APIG-FP-A and APIG-FP-R by varying \(\delta_1\), \(\delta_2\), and \(\delta_3\). 
For comparison, the PG method \eqref{equ:PG} with the Armijo LS condition \eqref{equ:B1:JBCP} is used as a benchmark. 
While the standard PG method requires exact gradient and function evaluations (which are computationally prohibitive for problem \eqref{equ:JBCP_PAPC_dual} in practice), we implement it  through APIG-FP with high-precision settings: $\widetilde{b} = 0$ and $\widetilde{\eta}_i^{\v g} = \widetilde{\eta}_i^f = 10^{-10}$. 

The comparison results are reported in Table II. 
In this table, ``time'' refers to the average CPU time, ``Per-Iteration FPIs'' denotes the average number of FP iterations \eqref{equ:dual_iter} and \eqref{equ:primal_iter} per outer iteration, and ``Outer Iterations'' represents the average number of proximal gradient iterations.  
Additionally, the best performance in each column is marked in \textbf{bold} for both APIG-FP-A and APIG-FP-R. 

From Table~\ref{tab:2}, we make the following observations: 
(i) The PG method requires the fewest outer iterations but the most FP iterations per outer iteration. 
Overall, it is about $2.5$ to $3$ times slower than both APIG-FP-A and APIG-FP-R. 
(ii) Increasing \(\delta_1\), \(\delta_2\), or \(\delta_3\) (i.e., tightening the inexactness condition) generally reduces the number of outer iterations but increases the number of FP iterations per outer iteration. 
This demonstrates that the inexactness conditions require careful calibration to balance between inner fixed-point iterations and outer proximal gradient iterations. 
(iii) The CPU time performance of  APIG-FP-A and APIG-FP-R remains relatively stable as the parameters \(\delta_1\), \(\delta_2\), and \(\delta_3\) vary. 
Based on the results,  we use APIG-FP-A with $(\delta_1, \delta_2) = (2,1.2)$ and APIG-FP-R with $(\delta_1, \delta_2, \delta_3) = (1,1.2,1)$ in the following. 

\subsubsection{Comparison with Benchmarks}
To illustrate the efficiency of the proposed algorithms, we compare them with two state-of-the-art benchmarks: 
\begin{itemize}
	\item \textbf{SDR}: 
	This benchmark directly solves the SDR in \eqref{equ:JBCP_PAPC_SDR} using CVX \cite{CVX}. 
	For fair comparison, we set the precision of CVX to be the same $\epsilon = 10^{-6}$ as used in APIG-FP. 
	
	\item \textbf{PSG}: The projected subgradient (PSG) algorithm treats the gradient $\nabla d(\v x)$ as a subgradient. The diminishing stepsize is used and chosen as $\lambda_{i} = \lambda (i+1)^{-\delta}$ (to guarantee the convergence). 
	We systematically select and tune $\lambda \in \{10^{-2}, 10^{-1}, 1, 10\}$ and $\delta \in \{10^{-2}, 10^{-1}, 1, 10\}$ for the best performance of this benchmark. 
\end{itemize}

To verify the solution quality, we check that all test algorithms can return a solution $\v x$ satisfying  $\|\v G_{\lambda}(\v x, \nabla f(\v x))\| \leq 10^{-5}$ and  $|f(\v x) - f(\v x^\star)| \leq 10^{-6}$, with the function and gradient information computed via  the high-precision FP iterations \eqref{equ:dual_iter} and \eqref{equ:primal_iter}. 
Therefore, in Fig. \ref{fig:TvsdSINR}, we focus on computational efficiency and report the average CPU time and relative average CPU time versus the SINR targets for different algorithms. 
The relative average CPU time is defined as the runtime ratio between each algorithm and the fastest algorithm at each SINR target. 

From Fig. \ref{fig:TvsdSINR}, we make the following observations: (i) Both APIG-FP-A and APIG-FP-R significantly outperform PSG, achieving speedup from $4$ to $35$ times. This highlights the importance of  recognizing the differentiability of the objective function $d(\cdot)$ in problem \eqref{equ:JBCP_PAPC_dual}, as shown in Proposition \ref{prop:diff}.  
(ii) While the average CPU time of SDR remains stable across different SINR targets, our APIG-FP-A and APIG-FP-R are significantly faster, with speedups ranging from $15$ to $95$ times. This emphasizes the advantage  of leveraging the problem's structure in solving problem \eqref{equ:JBCP_PAPC_dual},  enabling us to develop computationally efficient first-order methods. 
(iii) Both APIG-FP-A and APIG-FP-R outperform PG by $2.5$ to $4.5$ times in CPU time, demonstrating the efficiency of the proposed APIG framework~in solving the JBCP with PAPCs. 
Moreover, the speedup increases as the SINR target grows. 

\begin{figure}[t]
	\centering
	\includegraphics[width=0.24\textwidth]{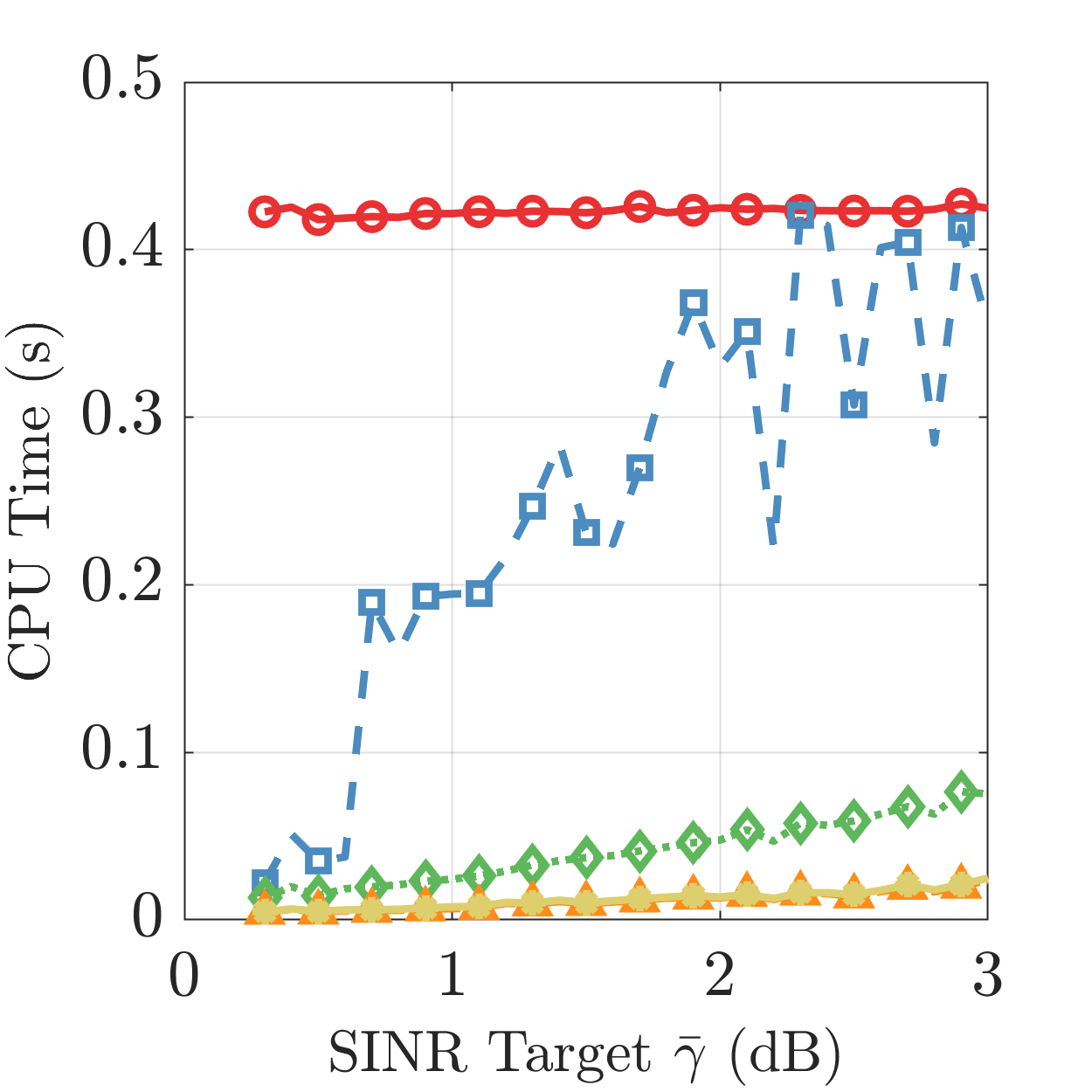}
	\includegraphics[width=0.24\textwidth]{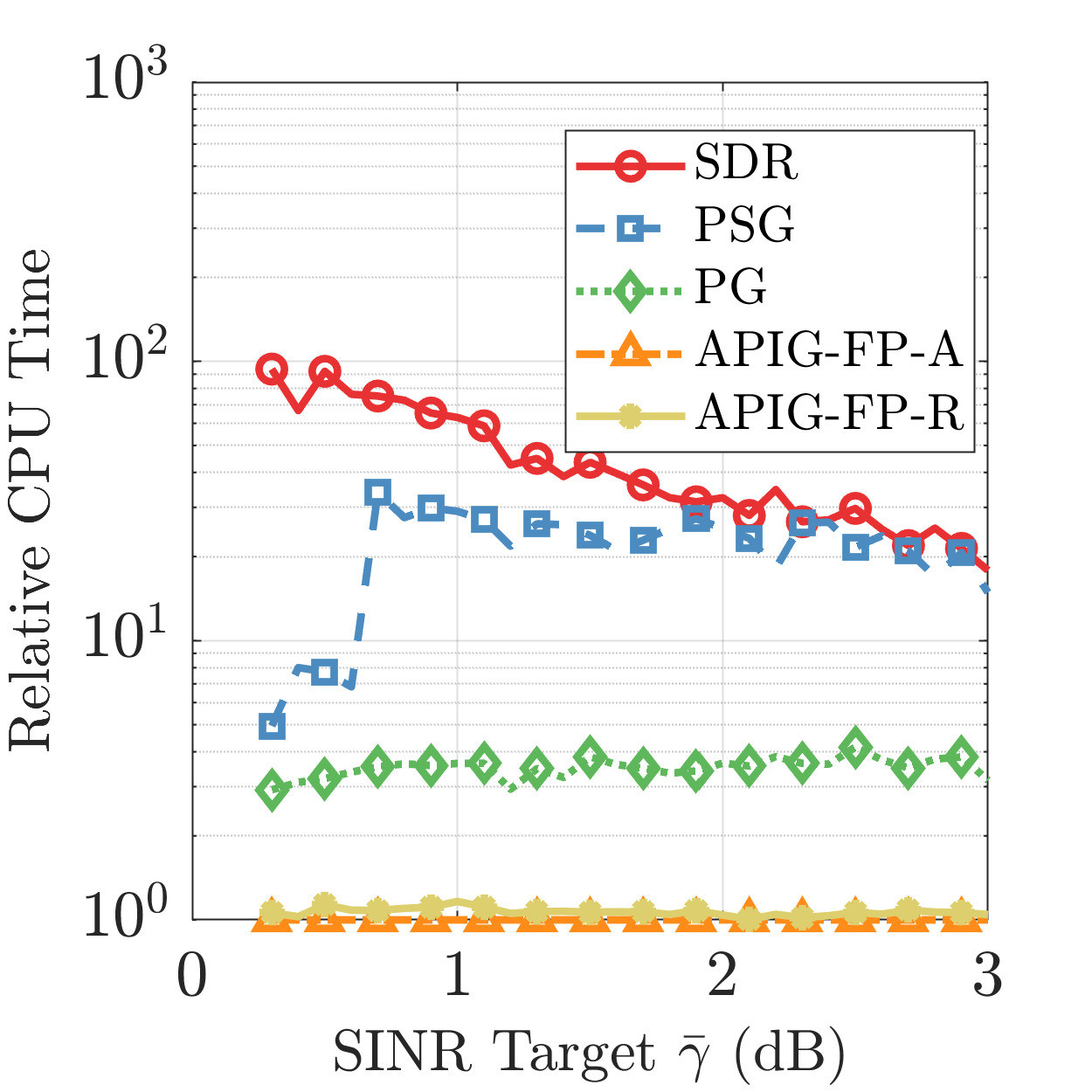}
	\vspace{-20pt}
	\caption{The average CPU time and relative average CPU time versus the SINR target $\overline{\gamma}$ for different algorithms. }
	\label{fig:TvsdSINR}	\vspace{-15pt}
\end{figure}

\section{Conclusion}
In this paper, we propose the APIG framework for solving a class of nonsmooth composite optimization problems, which have wide applications in signal processing and machine learning. Compared to existing methods, the proposed algorithmic framework accommodates errors in both function and gradient evaluations and is applicable a broad range of optimization problems. We show that the proposed framework is able to find an $\epsilon$-stationary point within $\mathcal{O}(\epsilon^{-2})$ iterations for nonconvex problems and an $\epsilon$-optimal solution within $\mathcal{O}(\epsilon^{-1})$ iterations for convex ones. By applying the APIG framework to the JBCP with PAPCs and judiciously exploiting the problem’s structure, we develop an efficient APIG-FP algorithm. Numerical simulations show that APIG-FP significantly outperforms state-of-the-art benchmark algorithms in computational efficiency. An interesting future work is to apply the APIG framework to other signal processing problems involving function and gradient errors. The key challenge is to effectively control these errors to achieve an optimal trade-off between the computational complexity of function and gradient approximations and the overall convergence of the framework.

\appendices
\section{Proof of Theorem \ref{thm:PIG_convergence}}\label{apd:PIG_convergence}
Our proof progressively considers the nonconvex case, the convex case I, and the convex case II. 
In the nonconvex case, we first show two key inequalities related to $\v x^i(\lambda)$. Then, we establish the $\mathcal{O}(\epsilon^{-2})$ iteration complexity by showing two key bounds in \eqref{equ:58} and \eqref{equ:bound:Delta:2}. 
In the convex cases I and II, we establish the desired conlusions bounding $F(\v x^{i+1}) - F^\star$ through the key inequalities in \eqref{equ:83} and \eqref{equ:Fi1:F*:case2}, respectively.

\subsection{Nonconvex Case}
First, we establish two key inequalities related to $\v x^i(\lambda)$, as shown in \eqref{equ:opti} and \eqref{equ:about_h}. 
By the definition of $\v x^i(\lambda)$ in \eqref{equ:x:lambda} and the property of the proximal mapping \eqref{equ:def_prox}, we have
\begin{equation}
	- \v g^i - \lambda^{-1} (\v x^{i}(\lambda) - \v x^i) \in \partial h(\v x^{i}(\lambda)), 
	\label{equ:proj}
\end{equation}
where $\partial h(\v x^{i}(\lambda))$ denotes the subdifferential of $h(\cdot)$ at $\v x^{i}(\lambda)$ \cite{beck2017FirstOrderMethodsOptimization}. 
In the following, we denote $\m G_\lambda^i = \m G_{\lambda}(\v x^i, \v g^i)$ for simplicity of notation. 
By the definition in \eqref{equ:def_gradient_mapping}, we have 
\begin{equation}
	\lambda \m G_\lambda^i = \v x^i - \v x^i(\lambda), 
	\label{equ:conversion}
\end{equation}
which, together with \eqref{equ:proj} and the convexity of $h(\cdot)$, implies 
\begin{equation}
	\begin{aligned}
		h(\v x^{i}(\lambda)) \leq 
		h(\v y) - \langle \v g^i - \m G_\lambda^i, \v x^{i}(\lambda) - \v y \rangle, ~\forall\,y\in \mathbb{R}^n.
	\end{aligned}
	\label{equ:opti}
\end{equation}
Setting $\v y = \v x^i$ in \eqref{equ:opti} and using \eqref{equ:conversion}, we have 
\begin{equation}
	h(\v x^{i}(\lambda)) \leq h(\v x^i) + \lambda \langle \v g^i - \v G_\lambda^i, \v G_\lambda^i \rangle. 
	\label{equ:about_h}
\end{equation}

By \eqref{equ:def_etai}, \eqref{equ:def_etaif}, \eqref{equ:def_nu}, \eqref{equ:cond_C}, and \eqref{equ:lambda_bd}, we have
\begin{equation}
	\sum_{i=0}^{\infty}(\nu_i + 2 \eta_i^f) < \infty. 
	\label{equ:summable_2}
\end{equation}
Summing \eqref{equ:descent_property} for $i=0,1,\d,N-1$ with $N \geq 1$ and using the lower bound of $\lambda_i$ as shown in \eqref{equ:lambda_bd}, we have 
\begin{equation}
	\begin{aligned}
		\sum_{i=0}^{N-1} \| \v G_{\lambda_i}^i \|^2  \leq \frac{2 \Upsilon_2}{\theta \underline{\lambda}_{\theta}},
	\end{aligned}\label{equ:58} 
\end{equation}
where  $\Upsilon_2 =F(\v x^0) - F^\star + \sum_{i=0}^{\infty} ( \nu_i + 2 \eta_i^f )$. Here, $F^\star$ is the optimal value of problem \eqref{equ:generic_problem} and $\Upsilon_2 < \infty$ due to  \eqref{equ:summable_2}. 
From the definition of $\Delta_i^{\v g}$ in \eqref{equ:termination}  and \eqref{equ:conversion}, we have 
\begin{equation}
	\begin{aligned}
		\left(\Delta_i^{\v g}\right)^2 \leq{}& 2 (1 + a^2 + b^2 \lambda_i^2) \| \v G_{\lambda_i}^i \|^2 + 2 (\eta_i^{\v g})^2  \\
		\leq{}&  2 (1 + a^2 + b^2 \lambda_{\max}^2) \| \v G_{\lambda_i}^i \|^2 + 2 (\eta_i^{\v g})^2,
	\end{aligned}
	\label{equ:bound:Delta:2}
\end{equation}
where the second inequality uses the upper bound of $\lambda_i$ in \eqref{equ:lambda_bd}.   Combining \eqref{equ:58} and \eqref{equ:bound:Delta:2}, and using \eqref{equ:def_etai}, we have
\begin{equation}
	\label{equ:stop}
	\sum_{i = 0}^{N-1} \left(\Delta_i^{\v g}\right)^2  \leq \Upsilon_3 < \infty,
\end{equation}
where $\Upsilon_3 = \frac{4 (1 + a^2 + b^2 \lambda_{\max}^2)\Upsilon_2}{\theta \underline{\lambda}_{\theta}} + 2 \sum_{i = 0}^\infty (\eta_i^{\v g})^2$.  
Hence, $\{\Delta_i^{\v g}\}$ converges to zero, which further implies that the algorithm must terminate at some $N_s$ iterations, and for $i = 0,1, \ldots, N_s-1$, there holds that $\Delta_i^{\v g} > \epsilon$. 
Combining this and \eqref{equ:stop}, we have the desired  
$N_s <  \Upsilon_3 \epsilon^{-2} = \mathcal{O}(\epsilon^{-2})$.  

\subsection{Convex Case I}
When $f(\cdot)$ is convex, it follows directly from the controlled descent property \eqref{equ:descent_property} that 
\begin{equation} \label{equ:F:level}
	F(\v x^{i+1}) \leq F(\v x^{0}) + \sum_{i=0}^\infty (\nu_i + 2\eta_i^f) < \infty. 
\end{equation}
Let $\v x^\star \in \mathcal X^\star$ be an optimal solution to problem \eqref{equ:generic_problem}. 
By \eqref{equ:F:level} and Assumption \ref{ass:2}, there exists a constant $B_2 > 0$ such that 
\begin{equation}
	\| \v x^i - \v x^\star \| \leq B_2, \quad \forall\,i \geq 0. 
	\label{equ:bound_iterate}
\end{equation}
Define 
\begin{equation}\label{equ:def:Delta:i:F}
	\Delta_i^F = F(\v x^i) - F(\v x^\star). 
\end{equation}

For convex case I, the LS condition  \eqref{equ:B2} is used and $a = b = 0$, $\sum_{i}^\infty \eta_i^{\v g} < \infty$. Therefore,  by \eqref{equ:general_gradient_inexact_new}, we have 
\begin{equation}\label{equ:sum:eta:g}
	\sum_{i=0}^\infty \|\varepsilon_i^{\v g}\| \leq \sum_{i = 0}^\infty \eta_i^{\v g} <  \infty. 
\end{equation} 
Moreover, substituting \eqref{equ:conversion} into \eqref{equ:B2}, we have 
\begin{equation}
	f_i(\lambda_i) \leq f_i + \left\langle\v g^i, \v x^{i+1} - \v x^i \right\rangle + \frac{\lambda_i}{2} \|\v G_{\lambda_i}^i\|^2 + \nu_i.
	\label{equ:B2:new}
\end{equation}
By  \eqref{equ:function_error1}, \eqref{equ:function_error2}, and \eqref{equ:conversion}, we further have 
\begin{equation}
	\label{equ:f:i1:bound}
	f(\v x^{i+1}) \leq f_i(\lambda_i) + \eta_i^f + c \lambda_i\|\v G_{\lambda_i}^i\|^2
\end{equation}
and 
\begin{equation}
	\label{equ:f:i:bound}
	f_i \leq f(\v x^i) + \eta_i^f + c \lambda_i\|\v G_{\lambda_i}^i\|^2. 
\end{equation}
Combining \eqref{equ:B2:new}, \eqref{equ:f:i1:bound}, and \eqref{equ:f:i:bound}, and using the upper bound of $\lambda_i$ in \eqref{equ:lambda_bd}, we get 
\begin{equation}
	f(\v x^{i+1}) \leq f(\v x^i) + \left\langle\v g^i, \v x^{i+1} - \v x^i \right\rangle  + \frac{\lambda_i}{2} \|\v G_{\lambda_i}^i\|^2  + \widetilde \nu_i,
	\label{equ:B2:new2}
\end{equation}
where 
\begin{equation}\label{equ:tilde:nui}
	\widetilde \nu_i = 2 c \lambda_{\max} \|\v G_{\lambda_i}^i\|^2 + \nu_i + 2\eta_i^f.
\end{equation}
By the convexity of $f(\cdot)$, we have 
\begin{equation}
	f(\v x^{i}) \leq f(\v x^\star) + \langle \nabla f(\v x^i), \v x^i - \v x^\star\rangle. \nonumber 
\end{equation}
This, together with \eqref{equ:B2:new2}, \eqref{equ:def_eg},  and  \eqref{equ:conversion}, implies 
\begin{equation} \label{equ:fi1:f*}
	\begin{aligned}
		f(\v x^{i+1}) \leq{}&  f(\v x^\star)  + \left\langle\v g^i, \v x^{i+1} - \v x^\star \right\rangle - \langle \varepsilon_i^{\mathbf{g}}, \v x^i - \v x^\star\rangle \\ 
		{}&  + \frac{1}{2 \lambda_i} \|\v x^{i+1} - \v x^i\|^2 + \widetilde \nu_i. 
	\end{aligned}
\end{equation}

Setting $\lambda = \lambda_i$ and $\v y = \v x^\star$ in \eqref{equ:opti}, and using \eqref{equ:conversion}, we have 
\begin{equation}
	\begin{aligned}
		h(\v x^{i+1}) \leq{}& h(\v x^\star) - \langle \v g^i , \v x^{i+1}  - \v x^\star \rangle  \\ 
		{}& + \frac{1}{\lambda_i}\langle \v x^i - \v x^{i+1}, \v x^{i+1}  - \v x^\star \rangle.
	\end{aligned}
	\label{equ:opti:2}
\end{equation}
Using $F(\cdot) = f(\cdot) + h(\cdot)$ and combining \eqref{equ:bound_iterate}, \eqref{equ:fi1:f*}, and \eqref{equ:opti:2} along with the fact that $\|\v a\|^2  - 2 \langle \v a,  \v b \rangle  = \| \v b- \v a\|^2 - \|\v b\|^2$ 
with $\v a = \v x^{i+1} - \v x^i$ and $\v b = \v x^{i+1}  - \v x^\star$, 
we obtain 
\[
\Delta_{i+1}^F \leq B_2 \|\varepsilon_i^{\mathbf{g}}\|  + \widetilde \nu_i  +  \frac{1}{2\lambda_i} \left( \|\v x^i - \v x^*\|^2  - \|\v x^{i+1} - \v x^*\|^2\right), 
\]
which, together with \eqref{equ:lambda_bd}, further implies
\begin{multline}
	\Delta_{i+1}^F \leq \frac{\lambda_{\max}}{\underline{\lambda}} \left(B_2 \|\varepsilon_i^{\mathbf{g}}\|+ \widetilde \nu_i\right)  \\
	+ \frac{1}{2\underline{\lambda}} \left( \|\v x^i - \v x^*\|^2  - \|\v x^{i+1} - \v x^*\|^2\right). 
	\label{equ:83}
\end{multline}
Since the algorithm stops at the $N_s$-th iteration, summing \eqref{equ:83} over $i = 0, 1,\ldots, N - 1$ with $1 \leq N \leq N_s$ and adding $\Delta_0^F$ to both sides of the derived inequality, we have 
\[
\sum_{i = 0}^{N}  \Delta_{i}^F \leq \Upsilon_4, \quad \forall\, N \leq N_s, 
\]
where 
\[\Upsilon_4 = \Delta_0^F + \frac{1}{2\underline{\lambda}} \|\v x^0 - \v x^\star\|^2 + \frac{\lambda_{\max}}{\underline{\lambda}} \sum_{i = 0}^\infty \left(B_2 \|\varepsilon_i^{\mathbf{g}}\|  + \widetilde \nu_i \right). 
\]
From \eqref{equ:summable_2},  \eqref{equ:58}, \eqref{equ:sum:eta:g}, and \eqref{equ:tilde:nui},  we know that $\Upsilon_4 < \infty$. 
By the convexity of $F(\cdot)$ and \eqref{equ:def:Delta:i:F},  the averaged iterate $\overline{\v x}^{N}$ satisfies 
\[
F( \overline{\v x}^{N}) - F^\star \leq  \frac{1}{N + 1}\sum_{i = 0}^{N}\Delta_{i}^F \leq \frac{\Upsilon_4}{N+ 1},
\]
which is the desired result. 

\subsection{Convex Case II}
In this case, $\eta_i^{\v g} \equiv \eta_i^f \equiv 0$, and either the LS condition \eqref{equ:B1} or \eqref{equ:B2} is used. 
As a result, the controlled descent property becomes a sufficient descent property: 
\begin{equation}
	F(\v x^{i+1}) - F(\v x^i) \leq - \frac{\theta \lambda_i}{2}  \|\v G_{\lambda_i}^i\|^2. 
	\label{equ:descent_property:new}
\end{equation}
Additionally, the condition \eqref{equ:general_gradient_inexact_new} becomes 
\begin{equation}
	\|\varepsilon_i^{\v{g}}\| \leq  \sqrt{a^2 + b^2 \lambda_i^{2}} \|\v G_{\lambda_i}^i\| \leq (a + b\lambda_i) \|\v G_{\lambda_i}^i\|,
	\label{equ:general_gradient_inexact_new:new}
\end{equation}
where we use \eqref{equ:conversion}. 
From the convexity of $f(\cdot)$, it follows that 
\begin{equation}
	f(\v x^{i+1}) \leq f(\v x^\star) + \langle \nabla f(\v x^{i+1}), \v x^{i+1} - \v x^\star\rangle, \label{equ:78}
\end{equation}
Setting $\lambda = \lambda_i$ and $\v y = \v x^\star$ in \eqref{equ:opti}, and adding it to \eqref{equ:78} gives 
\begin{equation}\label{equ:Fi1:F*:case2}
	\begin{aligned}
		&F(\v x^{i+1})  - F(\v x^\star) \\
		\leq{}& \langle \nabla f(\v x^{i+1}) - \v g^i + \v G_{\lambda_i}^i, \v x^{i+1} - \v x^\star \rangle \\
		\overset{\text{(a)}}{\leq}{}& B_2 \left( \|\nabla f(\v x^{i+1}) - \nabla f(\v x^{i})\| + \|\nabla f(\v x^{i})  - \v g^i\| + \|\v G_{\lambda_i}^i \| \right) \\
		\overset{\text{(b)}}{\leq}{}& B_2 \left(L \lambda_i \|\v G_{\lambda_i}^i\| + \| \varepsilon_i^{\v g} \| + \|\v G_{\lambda_i}^i\|\right) \\
		\overset{\text{(c)}}{\leq}{}& \Upsilon_4 \|\v G_{\lambda_i}^i\|,
	\end{aligned}
\end{equation}
where $\Upsilon_4 = B_2( (L + b) \lambda_{\max}  + a +  1)$. 
In the above, (a) follows from \eqref{equ:bound_iterate},  (b) is due to the $L$-Lipschitz continuity of $\nabla f(\cdot)$ as in Assumption \ref{ass:1}, the definition of $\varepsilon_i^{\v g}$ in \eqref{equ:def_eg}, and \eqref{equ:conversion}, and (c) uses the upper bound of $\lambda_i$ in \eqref{equ:lambda_bd} and \eqref{equ:general_gradient_inexact_new:new}.  

Substituting \eqref{equ:Fi1:F*:case2} into  \eqref{equ:descent_property:new}, and using the definition \eqref{equ:def:Delta:i:F} along with the lower bound on $\lambda_i$ in \eqref{equ:lambda_bd}, we have 
\[
\Delta_{i}^F - \Delta_{i+1}^F \leq  \frac{\theta \underline{\lambda}_{\theta} }{2 \Upsilon_4^2}(\Delta_{i+1}^F)^2.
\]
Applying \cite[Lemma 11.17]{beck2017FirstOrderMethodsOptimization} to the sequence $\{\Delta_i^F\}$, we know that for any $2 \leq N \leq N_s$, there holds that 
\[
\Delta_N^F  \leq  \max \left\{\left(\frac{1}{\sqrt{2}}\right)^{N-1} \Delta_0^F,  \frac{1}{N-1} \cdot \frac{8\Upsilon_4^2}{\theta \underline{\lambda}_{\theta}} \right\}, 
\]
which completes the proof.

\IEEEpeerreviewmaketitle

\bibliographystyle{IEEEtran}
\bibliography{papc_tsp}

\begin{thebibliography}{10}
\providecommand{\url}[1]{#1}
\csname url@samestyle\endcsname
\providecommand{\newblock}{\relax}
\providecommand{\bibinfo}[2]{#2}
\providecommand{\BIBentrySTDinterwordspacing}{\spaceskip=0pt\relax}
\providecommand{\BIBentryALTinterwordstretchfactor}{4}
\providecommand{\BIBentryALTinterwordspacing}{\spaceskip=\fontdimen2\font plus
\BIBentryALTinterwordstretchfactor\fontdimen3\font minus
  \fontdimen4\font\relax}
\providecommand{\BIBforeignlanguage}[2]{{%
\expandafter\ifx\csname l@#1\endcsname\relax
\typeout{** WARNING: IEEEtran.bst: No hyphenation pattern has been}%
\typeout{** loaded for the language `#1'. Using the pattern for}%
\typeout{** the default language instead.}%
\else
\language=\csname l@#1\endcsname
\fi
#2}}
\providecommand{\BIBdecl}{\relax}
\BIBdecl

\bibitem{fan2024JointBeamformingCompression}
X.~Fan, Y.-F. Liu, and B.~Jiang, ``Joint beamforming and compression design for
  per-antenna power constrained cooperative cellular networks,'' in \emph{Proc.
  {{IEEE ICASSP}}}, Apr. 2024, pp. 9091--9095.

\bibitem{wright2009SparseReconstructionSeparable}
S.~Wright, R.~Nowak, and M.~Figueiredo, ``Sparse reconstruction by separable
  approximation,'' \emph{IEEE Trans. Signal Process.}, vol.~57, no.~7, pp.
  2479--2493, Jul. 2009.

\bibitem{hong2013JointBaseStation}
M.~Hong, R.~Sun, H.~Baligh, and Z.-Q. Luo, ``Joint base station clustering and
  beamformer design for partial coordinated transmission in heterogeneous
  networks,'' \emph{IEEE J. Sel. Areas Commun.}, vol.~31, no.~2, pp. 226--240,
  Feb. 2013.

\bibitem{zou2005RegularizationVariableSelection}
H.~Zou and T.~Hastie, ``Regularization and variable selection via the elastic
  net,'' \emph{J. Royal Statistical Society Series B: Statistical Methodology},
  vol.~67, no.~2, pp. 301--320, Mar. 2005.

\bibitem{candes2011RobustPrincipalComponent}
E.~J. Cand{\`e}s, X.~Li, Y.~Ma, and J.~Wright, ``Robust principal component
  analysis?'' \emph{J. ACM}, vol.~58, no.~3, pp. 1--37, May 2011.

\bibitem{liu2024SurveyRecentAdvances}
Y.-F. Liu, T.-H. Chang, M.~Hong, Z.~Wu, A.~M.-C. So, E.~A. Jorswieck, and
  W.~Yu, ``A survey of recent advances in optimization methods for wireless
  communications,'' \emph{IEEE J. Sel. Areas Commun.}, vol.~42, no.~11, pp.
  2992--3031, Nov. 2024.

\bibitem{liu2019NonergodicConvergenceRate}
Y.-F. Liu, X.~Liu, and S.~Ma, ``On the nonergodic convergence rate of an
  inexact augmented {Lagrangian} framework for composite convex programming,''
  \emph{Math. Oper. Res.}, vol.~44, no.~2, pp. 632--650, May 2019.

\bibitem{jiang2023RiemannianExponentialAugmented}
B.~Jiang and Y.-F. Liu, ``A {Riemannian} exponential augmented {Lagrangian}
  method for computing the projection robust {Wasserstein} distance,''
  \emph{Proc. Adv. Neural Inf. Process. Syst. ({{NeurIPS}})}, vol.~36, pp.
  79\,999--80\,023, 2023.

\bibitem{wu2024CIBasedQoS}
Z.~Wu, W.-K. Chen, Y.-F. Liu, and C.~Masouros, ``{CI-Based} {QoS}-constrained
  transmit signal design for {DFRC} systems with one-bit {DACs},'' in
  \emph{Proc. Sensor Array Multichannel Signal Process. Workshop (SAM)}, Jul.
  2024, pp. 1--5.

\bibitem{dai2018OptimizedBasestation}
B.~Dai, Y.-F. Liu, and W.~Yu, ``Optimized base-station cache allocation for
  cloud radio access network with multicast backhaul,'' \emph{IEEE J. Sel.
  Areas Commun.}, vol.~36, no.~8, pp. 1737--1750, Aug. 2018.

\bibitem{vanleeuwen2021VariableProjectionNonsmooth}
T.~Van~Leeuwen and A.~Y. Aravkin, ``Variable projection for nonsmooth
  problems,'' \emph{SIAM J. Sci. Comput.}, vol.~43, no.~5, pp. S249--S268, Oct.
  2021.

\bibitem{poon2023SmoothOverparameterizedSolvers}
C.~Poon and G.~Peyr{\'e}, ``Smooth over-parameterized solvers for non-smooth
  structured optimization,'' \emph{Math. Program.}, vol. 201, no.~1, pp.
  897--952, Feb. 2023.

\bibitem{bertsekas2000GradientConvergenceGradient}
D.~P. Bertsekas and J.~N. Tsitsiklis, ``Gradient convergence in gradient
  methods with errors,'' \emph{SIAM J. Optim.}, vol.~10, no.~3, pp. 627--642,
  Jan. 2000.

\bibitem{polyak1987IntroductionOptimization}
B.~T. Polyak, \emph{Introduction to Optimization}.\hskip 1em plus 0.5em minus
  0.4em\relax New York, NY, USA: Optimization Software, 1987.

\bibitem{schmidt2011ConvergenceRatesInexact}
M.~Schmidt, N.~Roux, and F.~Bach, ``Convergence rates of inexact
  proximal-gradient methods for convex optimization,'' in \emph{Proc. Adv.
  Neural Inf. Process. Syst. ({{NeurIPS}})}, vol.~24, 2011, pp. 1458--1466.

\bibitem{hamadouche2024SharperBoundsProximal}
A.~Hamadouche, Y.~Wu, A.~M. Wallace, and J.~F. Mota, ``Sharper bounds for
  proximal gradient algorithms with errors,'' \emph{SIAM J. Optim.}, vol.~34,
  no.~1, pp. 278--305, mar 2024.

\bibitem{so2017NonasymptoticConvergenceAnalysis}
A.~M.-C. So and Z.~Zhou, ``Non-asymptotic convergence analysis of inexact
  gradient methods for machine learning without strong convexity,''
  \emph{Optim. Methods Softw.}, vol.~32, no.~4, pp. 963--992, 2017.

\bibitem{khanh2024NewInexactGradient}
P.~D. Khanh, B.~S. Mordukhovich, and D.~B. Tran, ``A new inexact gradient
  descent method with applications to nonsmooth convex optimization,''
  \emph{Optim. Methods Softw.}, vol.~39, no.~1, pp. 1--29, Mar. 2024.

\bibitem{hallak2024StudyFirstorderMethods}
N.~Hallak and K.~Y. Levy, ``A study of first-order methods with a deterministic
  relative-error gradient oracle,'' in \emph{Proc. 41-st {ICML}}, Jun. 2024,
  pp. 17\,313--17\,332.

\bibitem{luo1993ErrorBoundsConvergence}
Z.-Q. Luo and P.~Tseng, ``Error bounds and convergence analysis of feasible
  descent methods: A general approach,'' \emph{Ann. Oper. Res.}, vol.~46,
  no.~1, pp. 157--178, Mar. 1993.

\bibitem{cartis2018GlobalConvergenceRate}
C.~Cartis and K.~Scheinberg, ``Global convergence rate analysis of
  unconstrained optimization methods based on probabilistic models,''
  \emph{Math. Program.}, vol. 169, no.~2, pp. 337--375, Jun. 2018.

\bibitem{paquette2020StochasticLine}
C.~Paquette and K.~Scheinberg, ``A stochastic line search method with expected
  complexity analysis,'' \emph{SIAM J. Optim.}, vol.~30, no.~1, pp. 349--376,
  Jan. 2020.

\bibitem{berahas2021GlobalConvergenceRate}
A.~S. Berahas, L.~Cao, and K.~Scheinberg, ``Global convergence rate analysis of
  a generic line search algorithm with noise,'' \emph{SIAM J. Optim.}, vol.~31,
  no.~2, pp. 1489--1518, Jan. 2021.

\bibitem{beck2017FirstOrderMethodsOptimization}
A.~Beck, \emph{First-Order Methods in Optimization}.\hskip 1em plus 0.5em minus
  0.4em\relax Philadelphia, PA, USA: {SIAM}-Society for Industrial and Applied
  Mathematics, 2017.

\bibitem{carter1991GlobalConvergenceTrust}
R.~G. Carter, ``On the global convergence of trust region algorithms using
  inexact gradient information,'' \emph{SIAM J. Numer. Anal.}, vol.~28, no.~1,
  pp. 251--265, Feb. 1991.

\bibitem{fan2022EfficientlyGloballySolving}
X.~Fan, Y.-F. Liu, and L.~Liu, ``Efficiently and globally solving joint
  beamforming and compression problem in the cooperative cellular network via
  {{Lagrangian}} duality,'' in \emph{Proc. {{IEEE ICASSP}}}, May 2022, pp.
  5388--5392.

\bibitem{fan2023QoSbasedBeamformingCompression}
\BIBentryALTinterwordspacing
X.~Fan, Y.-F. Liu, L.~Liu, and T.-H. Chang, ``{{QoS-based}} beamforming and
  compression design for cooperative cellular networks via {{Lagrangian}}
  duality,'' Jun. 2023. [Online]. Available:
  \url{http://arxiv.org/abs/2306.13962}
\BIBentrySTDinterwordspacing

\bibitem{yu2007TransmitterOptimizationMultiantenna}
W.~Yu and T.~Lan, ``Transmitter optimization for the multi-antenna downlink
  with per-antenna power constraints,'' \emph{IEEE Trans. Signal Process.},
  vol.~55, no.~6, pp. 2646--2660, Jun. 2007.

\bibitem{dartmann2013DualityMaxminBeamforming}
G.~Dartmann, X.~Gong, W.~Afzal, and G.~Ascheid, ``On the duality of the max-min
  beamforming problem with per-antenna and per-antenna-array power
  constraints,'' \emph{IEEE Trans. Veh. Technol.}, vol.~62, no.~2, pp.
  606--619, Feb. 2013.

\bibitem{zhang2020DeepLearningEnabled}
J.~Zhang, W.~Xia, M.~You, G.~Zheng, S.~Lambotharan, and K.-K. Wong, ``Deep
  learning enabled optimization of downlink beamforming under per-antenna power
  constraints: Algorithms and experimental demonstration,'' \emph{IEEE Trans.
  Wireless Commun.}, vol.~19, no.~6, pp. 3738--3752, Jun. 2020.

\bibitem{miretti2024ULDLDualityCellfree}
L.~Miretti, R.~L.~G. Cavalcante, E.~Bj{\"o}rnson, and S.~Sta{\'n}czak,
  ``{{UL-DL}} duality for cell-free massive {{MIMO}} with per-{{AP}} power and
  information constraints,'' \emph{IEEE Trans. Signal Process.}, vol.~72, pp.
  1750--1765, Mar. 2024.

\bibitem{lemmens2012NonlinearPerronFrobeniusTheory}
B.~Lemmens and R.~Nussbaum, \emph{Nonlinear {{Perron-Frobenius}} Theory}.\hskip
  1em plus 0.5em minus 0.4em\relax Cambridge, UK: Cambridge University Press,
  2012.

\bibitem{dai2005ProjectedBarzilaiBorweinMethods}
Y.-H. Dai and R.~Fletcher, ``Projected {{Barzilai-Borwein}} methods for
  large-scale box-constrained quadratic programming,'' \emph{Numer. Math.},
  vol. 100, no.~1, pp. 21--47, Mar. 2005.

\bibitem{dai2014SparseBeamformingUsercentric}
B.~Dai and W.~Yu, ``Sparse beamforming and user-centric clustering for downlink
  cloud radio access network,'' \emph{IEEE Access}, vol.~2, pp. 1326--1339,
  Oct. 2014.

\bibitem{shi2014GroupSparseBeamforming}
Y.~Shi, J.~Zhang, and K.~B. Letaief, ``Group sparse beamforming for green
  cloud-{{RAN}},'' \emph{IEEE Trans. Wireless Commun.}, vol.~13, no.~5, pp.
  2809--2823, May 2014.

\bibitem{park2013JointPrecodingMultivariate}
S.-H. Park, O.~Simeone, O.~Sahin, and S.~Shamai, ``Joint precoding and
  multivariate backhaul compression for the downlink of cloud radio access
  networks,'' \emph{IEEE Trans. Signal Process.}, vol.~61, no.~22, pp.
  5646--5658, Nov. 2013.

\bibitem{park2014InterclusterDesignPrecoding}
------, ``Inter-cluster design of precoding and fronthaul compression for cloud
  radio access networks,'' \emph{IEEE Wireless Commun. Lett.}, vol.~3, no.~4,
  pp. 369--372, Aug. 2014.

\bibitem{patil2014HybridCompressionMessagesharing}
P.~Patil and W.~Yu, ``Hybrid compression and message-sharing strategy for the
  downlink cloud radio-access network,'' in \emph{Proc. Inf. Theory Appl.
  Workshop ({{ITA}})}, Feb. 2014, pp. 1--6.

\bibitem{kang2016FronthaulCompressionPrecoding}
J.~Kang, O.~Simeone, J.~Kang, and S.~Shamai, ``Fronthaul compression and
  precoding design for {{C-RANs}} over ergodic fading channels,'' \emph{IEEE
  Trans. Veh. Technol.}, vol.~65, no.~7, pp. 5022--5032, Jul. 2016.

\bibitem{zhou2016FronthaulCompressionTransmit}
Y.~Zhou and W.~Yu, ``Fronthaul compression and transmit beamforming
  optimization for multi-antenna uplink {{C-RAN}},'' \emph{IEEE Trans. Signal
  Process.}, vol.~64, no.~16, pp. 4138--4151, Aug. 2016.

\bibitem{he2019HybridPrecoderDesign}
S.~He, Y.~Wu, J.~Ren, Y.~Huang, R.~Schober, and Y.~Zhang, ``Hybrid precoder
  design for cache-enabled millimeter-wave radio access networks,'' \emph{IEEE
  Trans. Wireless Commun.}, vol.~18, no.~3, pp. 1707--1722, Mar. 2019.

\bibitem{kim2019JointDesignFronthauling}
J.~Kim, S.-H. Park, O.~Simeone, I.~Lee, and S.~S. Shitz, ``Joint design of
  fronthauling and hybrid beamforming for downlink {{C-RAN}} systems,''
  \emph{IEEE Trans. Commun.}, vol.~67, no.~6, pp. 4423--4434, Jun. 2019.

\bibitem{ahn2020FronthaulCompressionPrecoding}
S.~Ahn, S.-I. Park, J.-Y. Lee, N.~Hur, and J.~Kang, ``Fronthaul compression and
  precoding optimization for {{NOMA-based}} joint transmission of broadcast and
  unicast services in {{C-RAN}},'' \emph{IEEE Trans. Broadcast.}, vol.~66,
  no.~4, pp. 786--799, Dec. 2020.

\bibitem{liu2021UplinkdownlinkDualityMultipleaccess}
L.~Liu, Y.-F. Liu, P.~Patil, and W.~Yu, ``Uplink-downlink duality between
  multiple-access and broadcast channels with compressing relays,'' \emph{IEEE
  Trans. Inf. Theory}, pp. 7304--7337, Nov. 2021.

\bibitem{luo2010SemidefiniteRelaxationQuadratic}
Z.-Q. Luo, W.-K. Ma, A.~M.-C. So, Y.~Ye, and S.~Zhang, ``Semidefinite
  relaxation of quadratic optimization problems,'' \emph{IEEE Signal Process.
  Mag.}, vol.~27, no.~3, pp. 20--34, May 2010.

\bibitem{xu2023new}
Y.~Xu, C.~Lu, Z.~Deng, and Y.-F. Liu, ``New semidefinite relaxations for a
  class of complex quadratic programming problems,'' \emph{J. Global Optim.},
  vol.~87, pp. 255--275, Sep. 2023.

\bibitem{CVX}
M.~Grant and S.~Boyd, ``{CVX}: Matlab software for disciplined convex
  programming, version 2.1,'' \url{http://cvxr.com/cvx}, Mar. 2014.

\bibitem{boyd2004ConvexOptimization}
S.~P. Boyd and L.~Vandenberghe, \emph{Convex Optimization}.\hskip 1em plus
  0.5em minus 0.4em\relax Cambridge, UK: Cambridge University Press, 2004.

\bibitem{hiriart-urruty1996ConvexAnalysisMinimization}
J.-B. Hiriart-Urruty and C.~Lemarechal, \emph{Convex Analysis and Minimization
  Algorithms {I}: Fundamentals}.\hskip 1em plus 0.5em minus 0.4em\relax Berlin,
  Heidelberg, Germany: Springer Science \& Business Media, Oct. 1996.

\bibitem{horn1990MatrixAnalysis}
R.~A. Horn and C.~R. Johnson, \emph{Matrix Analysis}.\hskip 1em plus 0.5em
  minus 0.4em\relax Cambridge, UK: Cambridge University Press, 1990.

\bibitem{zorich2004MathematicalAnalysis}
V.~A. Zorich and R.~Cooke, \emph{Mathematical Analysis}.\hskip 1em plus 0.5em
  minus 0.4em\relax Berlin, Heidelberg, Germany: Springer, 2004.

\end{thebibliography}

\clearpage
\begin{center}
  \Large\textbf{Supplementary Material}
\end{center}
\appendices
\setcounter{section}{2}
\setcounter{page}{1}
\vspace{-20pt}
\section{Proof of Lemma \ref{lem:ls} and Lemma \ref{lem:descent_property}}
\subsection{Proof of Lemma \ref{lem:ls}}
\label{apd:ls} 
From the $L$-Lipschitz continuity of $\nabla f(\cdot)$ in Assumption \ref{ass:1},  \eqref{equ:conversion}, and the definition of $\varepsilon_i^{\v g}$ in \eqref{equ:def_eg}, we have
\begin{equation}
	\begin{aligned}
		f(\v x^i(\lambda)) \leq f(\v x^i) + \lambda \langle \varepsilon_i^{\v g}, \v G_\lambda^i \rangle - \lambda \langle \v g^i, \v G_\lambda^i \rangle  +  \frac{L\lambda^2}{2} \| \v G_\lambda^i \|^2. 
	\end{aligned}
	\label{equ:appedix:f:x:lambda:upper}
\end{equation}
Let 
\begin{equation}\label{equ:xi}
	\xi=  
	\begin{cases}
		1,& \text{if $a^2 + b^2 = 0$}, \\
		\frac{\lambda}{\sqrt{a^2 + b^2 \lambda^2}}, & \text{otherwise}.
	\end{cases}
\end{equation}
By the Cauchy-Schwartz inequality, we have 
\[
\begin{aligned}
	\lambda \langle \varepsilon_i^{\v g}, \v G_{\lambda}^i \rangle \leq{}&   \frac{\xi}{2} \| \varepsilon_i^{\v g} \|^2 + \frac{\lambda^2}{2\xi} \| \v G_\lambda^i \|^2  \\
	\leq{}&  \frac{\xi}{2} (\eta_i^{\v g})^2 + \frac{1}{2} \left( \xi (a^2 + b^2 \lambda^2) + \frac{\lambda^2}{\xi}  \right) \| \v G_\lambda^i \|^2,
\end{aligned}
\]
where the second inequality uses  \eqref{equ:general_gradient_inexact_new} and \eqref{equ:conversion}.  Substituting the above into 
\eqref{equ:appedix:f:x:lambda:upper} yields
\begin{equation}
	\begin{aligned}
		{}&f(\v x^i(\lambda))\\
		\leq{} &  f(\v x^i) + \frac{\xi}{2} (\eta_i^{\v g})^2 + \frac{1}{2} \left( \xi (a^2 + b^2 \lambda^2) + \frac{\lambda^2}{\xi}  \right) \| \v G_\lambda^i \|^2 \\
		& - \lambda \langle \v g^i, \v G_\lambda^i \rangle +  \frac{L\lambda^2}{2} \| \v G_\lambda^i \|^2. 
	\end{aligned}
	\label{equ:37}
\end{equation}
Using the definition of $\Upsilon_1(\lambda)$ in \eqref{equ:cond_C} and the fact that $\max\{a, b\lambda\} \leq \sqrt{a^2 + b^2 \lambda^2} \leq a + b \lambda$, we have, from \eqref{equ:xi} and \eqref{equ:37}, that 
\begin{equation}
	\begin{aligned}
		f(\v x^i(\lambda)) \leq{}& f(\v x^i) + \frac{\lambda}{2}\big(2a + (2b + 1 + L) \lambda\big) \| \v G_\lambda^i\|^2 \\
		{}& +\Upsilon_1(\lambda) (\eta_i^{\v g})^2 - \lambda \langle \v g^i, \v G_\lambda^i \rangle. 
	\end{aligned}
	\label{equ:59}
\end{equation}
By \eqref{equ:function_error1}, \eqref{equ:function_error2}, and \eqref{equ:conversion}, we have $f(\v x^i(\lambda)) \geq f_i(\lambda) - \eta_i^f - c\lambda\|\v G_{\lambda}^i\|^2$ 
and $f(\v x^i) \leq f_i +  \eta_i^f + c\lambda\|\v G_{\lambda}^i\|^2$. 
Combining these with \eqref{equ:59} and the definition of $\nu_i$ in \eqref{equ:def_nu}, we obtain  
\begin{equation}
	\begin{aligned}
		f_i(\lambda) 
		\leq{}& f_i   -\lambda \langle \v g^i, \v G_\lambda^i \rangle   \\
		{}& + \frac{\lambda}{2} \left(2a + 4c + (2b  + 1 + L)\lambda\right) \| \v G_\lambda^i\|^2 + \nu_i.
	\end{aligned}
	\label{equ:interm}
\end{equation}
Adding \eqref{equ:about_h} and \eqref{equ:interm} gives 
\begin{equation}
	\begin{aligned}
		&f_i(\lambda) + h(\v x^i(\lambda)) \\
		\leq{} &f_i + h(\v x^i) \\
		&+ \frac{\lambda}{2} \left(2a+ 4c - 2+(2b+1+L)\lambda\right)  \| \v G_\lambda^i \|^2 + \nu_i. 
	\end{aligned}
	\label{equ:interm2}
\end{equation}
By leveraging  the relation in \eqref{equ:conversion}, the definition of $\overline \lambda_{\theta}$ in \eqref{equ:def_tildelambda:theta}, and the bounding inequalities from \eqref{equ:interm} and \eqref{equ:interm2}, we conclude that the LS condition \eqref{equ:B1} is satisfied for all $\lambda \in (0, \overline{\lambda}_{\theta}]$, and the LS condition
\eqref{equ:B2} holds for all $\lambda \in (0, \overline{\lambda}_{1/2}]$.

\subsection{Proof of Lemma \ref{lem:descent_property}}
\label{apd:finite_termination}

First, using \eqref{equ:conversion},  the inequality \eqref{equ:about_h} becomes 
\begin{equation*}
	h(\v x^{i}(\lambda)) \leq h(\v x^i) +  \langle \v g^i, \v x^i - \v x^i(\lambda) \rangle - \frac{1}{\lambda} \|\v x^i(\lambda) - \v x^i\|^2. 
	\label{equ:about_h:new}
\end{equation*}
Adding this to the LS condition \eqref{equ:B2} implies that the LS condition \eqref{equ:B1} holds with $\theta = 1/2$. 
Therefore, it suffices to show the desired property for the LS condition \eqref{equ:B1} with $\theta \in (0, 1)$. 
If \eqref{equ:B1} holds, then setting $\lambda = \lambda_i$ and using \eqref{equ:conversion}, we have 
\begin{equation}
	f_i(\lambda_i) + h(\v x^{i+1}) \leq f_i + h(\v x^i) - \theta \lambda_i \| \v G_{\lambda_i}^i \|^2 + \nu_i. 
	\label{equ:61}
\end{equation}
Using \eqref{equ:def_eif} and \eqref{equ:def_eif_lambda} with $\lambda = \lambda_i$, we obtain $f_i(\lambda_i) = f(\v x^{i+1}) + \varepsilon_i^f(\lambda_i)$  and $f_i = f(\v x^i) +  \varepsilon_i^f(\lambda_i)$. By substituting these two expressions into \eqref{equ:61}, leveraging the relationship $F(\cdot)=f(\cdot)+h(\cdot)$ from problem \eqref{equ:generic_problem}, and combining the bounds in \eqref{equ:function_error1} and \eqref{equ:function_error2} under the condition $c \leq \theta/4$ (after \eqref{equ:function_error2}), we directly derive the desired inequality \eqref{equ:descent_property}. 

\section{Useful Properties of $I_{\v x}(\cdot)$, $J_{\vv \beta, \v x}(\cdot)$, and $\mu(\cdot, \cdot)$}\label{apd:lemmas}
In this section, we first give Lemma \ref{lem:fixedpoint} and Lemma \ref{lem:properties}, which describe the properties of mappings $I_{\v x}(\cdot)$ and $J_{\vv \beta, \v x}(\cdot)$, and then show Lemma \ref{lem:norm_control} about Thompson’s metric $\mu(\cdot, \cdot)$. 
\begin{lemma}
	For any given $\v x \geq 0$, the mapping $I_{\v x}(\cdot)$ has a unique FP, denoted as $\vv \beta^\star(\v x)$, and the mapping $J_{\vv \beta^\star(\v x), \v x}(\cdot)$ has a unique solution, denoted as $\vv p^\star(\v x)$. 
	Moreover, for $\widetilde{\vv \beta} \geq \v 0$ such that $J_{\widetilde{\vv \beta}, \v x}(\cdot)$ has a FP, then the mapping $J_{\widetilde{\vv \beta}, \v x}(\cdot)$ has a unique FP, denoted as  $\vv p^\star(\widetilde{\vv \beta}, \v x)$. 
	\label{lem:fixedpoint}
\end{lemma}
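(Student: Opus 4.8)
The plan is to treat the two fixed-point equations by different mechanisms, since they have different structure: the equation $\vv\beta = I_{\v x}(\vv\beta)$ is genuinely nonlinear and I would handle it with the theory of standard interference functions (equivalently, the nonlinear Perron--Frobenius framework in Thompson's metric \cite{lemmens2012NonlinearPerronFrobeniusTheory}), whereas $\v p = J_{\vv\beta,\v x}(\v p)$ is \emph{affine} in $\v p$ and I would dispatch it directly with the linear Perron--Frobenius theorem.

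For the mapping $I_{\v x}(\cdot)$, I would verify that it is a standard interference function on $\mathbb R_+^K$, i.e.\ that it satisfies positivity, monotonicity, and scalability, after which the interference-function argument gives that it has \emph{at most} one fixed point and that the iteration \eqref{equ:dual_iter} converges to it whenever one exists. Positivity is immediate because $\m C(\vv\beta,\{\m\Lambda_{\v x,m}(\vv\beta)\},\v x)\succ\m 0$ for every $\vv\beta\ge\v 0$ (the summand $\m I$ is positive definite and all remaining summands, including $\diag(\v x)$ with $\v x\ge\v 0$, are positive semidefinite), so each $(I_{\v x}(\vv\beta))_k$ is a strictly positive ratio. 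Monotonicity and scalability reduce to the corresponding structural properties of $\m C(\cdot)$ and of the auxiliary maps $\{\m\Lambda_{\v x,m}(\cdot)\}$ established in \cite{fan2022EfficientlyGloballySolving}: increasing $\vv\beta$ enlarges $\m C$ in the positive semidefinite order, hence shrinks $\v h_k^\hermitian\m C^{-1}\v h_k$ and enlarges $(I_{\v x}(\vv\beta))_k$; and for $\alpha>1$ the strict inequality $\m C(\alpha\vv\beta,\cdot,\v x)\prec\alpha\,\m C(\vv\beta,\cdot,\v x)$, which uses that the $\alpha$-independent block $\m I+\diag(\v x)$ contributes strictly together with subhomogeneity of $\m\Lambda$, yields $I_{\v x}(\alpha\vv\beta)<\alpha I_{\v x}(\vv\beta)$. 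The crucial observation is that appending $\diag(\v x)\succeq\m 0$ to $\m C$ preserves all three properties verbatim, so uniqueness of $\vv\beta^\star(\v x)$ follows from the analysis in \cite{fan2022EfficientlyGloballySolving} with $\v x$ carried along, while existence follows from the assumed feasibility of the weighted subproblem \eqref{equ:JBCP_PAPC_dual_sub}.

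For the mapping $J_{\vv\beta,\v x}(\cdot)$, I would first record that it is affine: using $\m Q_{\vv\beta,\v x}(\v p)=\sum_{k\in\cK}p_k\,\underline{\m A}_k(\vv\beta,\v x)$ from \eqref{equ:def_Q}, each component \eqref{equ:def_Jk} is a linear form in $\v p$ plus the constant $\gamma_k\sigma_k^2/|\v h_k^\hermitian\v u_k(\vv\beta,\v x)|^2$, so
\[
J_{\vv\beta,\v x}(\v p)=\m M(\vv\beta,\v x)\,\v p+\v c(\vv\beta,\v x),
\]
where $\m M\ge\m 0$ entrywise (all coefficients are built from $|\v h_k^\hermitian\v u_j|^2\ge 0$ and from $\v h_k^\hermitian\underline{\m A}_j\v h_k\ge 0$, the latter because $\underline{\m A}_j\succeq\m 0$) and $\v c>\v 0$ (because $\sigma_k^2>0$). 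Any fixed point $\v p^\star$ lies in $\mathbb R_{++}^K$ since $J$ maps into $\mathbb R_{++}^K$; combined with $\v c>\v 0$ this gives the strict subinvariance $\m M\v p^\star<\v p^\star$ componentwise. The Collatz--Wielandt characterization of the spectral radius then forces $\rho(\m M)<1$, so $\m I-\m M$ is invertible with $(\m I-\m M)^{-1}=\sum_{n\ge 0}\m M^n\ge\m 0$, whence $\v p^\star=(\m I-\m M)^{-1}\v c$ is the unique fixed point. This argument is insensitive to whether $\vv\beta=\vv\beta^\star(\v x)$ or $\vv\beta=\widetilde{\vv\beta}$, so it simultaneously yields uniqueness of $\v p^\star(\v x)$ and, under the hypothesis that a fixed point exists, uniqueness of $\v p^\star(\widetilde{\vv\beta},\v x)$.

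I expect the main obstacle to be the rigorous verification of monotonicity and, especially, scalability of $I_{\v x}(\cdot)$, since these hinge on properties of the auxiliary maps $\m\Lambda_{\v x,m}(\cdot)$ that are only specified by reference to \cite{fan2022EfficientlyGloballySolving,fan2023QoSbasedBeamformingCompression}; the cleanest resolution is to show that introducing $\diag(\v x)\succeq\m 0$ leaves intact the positive-definiteness, order-monotonicity, and subhomogeneity used there, thereby reducing the claim to the already-established results. By contrast, the $J$-part is essentially routine once the affine form is exhibited.
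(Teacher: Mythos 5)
Your proposal splits into two halves of different character. For the $I$-part you ultimately take the paper's own route: the paper proves uniqueness for all three mappings by citing \cite[Lemma 1]{fan2022EfficientlyGloballySolving}, and your verification of positivity, monotonicity, and scalability is exactly a reduction to that analysis ``with $\v x$ carried along,'' with the key structural facts about $\m \Lambda_{\v x, m}(\cdot)$ still deferred to the cited reference (which is acceptable, and is what the paper itself does). For the $J$-part you take a genuinely different and more self-contained route: instead of citing \cite{fan2022EfficientlyGloballySolving}, you exploit the affine structure $J_{\vv \beta, \v x}(\v p) = \m M(\vv \beta, \v x)\v p + \v c(\vv \beta, \v x)$ visible from \eqref{equ:def_Jk} and \eqref{equ:def_Q}, note that any fixed point is strictly positive so that $\m M \v p^\star < \v p^\star$ componentwise, and conclude $\rho(\m M) < 1$ by Collatz--Wielandt, whence the fixed point $(\m I - \m M)^{-1}\v c$ is unique. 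That argument is correct, and it is well matched to the third claim of the lemma, where existence is hypothesized; it is also consonant with how the paper later uses the affine structure of $J$ in its proof of the contraction property.

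The genuine gap concerns the second claim: the lemma asserts that $J_{\vv \beta^\star(\v x), \v x}(\cdot)$ \emph{has} a unique fixed point $\v p^\star(\v x)$, i.e., existence as well as uniqueness, and your proposal never establishes existence. Your own mechanism cannot supply it: the bound $\rho(\m M(\vv \beta^\star(\v x), \v x)) < 1$ is \emph{derived from} the existence of a fixed point, so using it to produce one is circular, and without an a priori bound on $\rho(\m M)$ the equation $\v p = \m M \v p + \v c$ with $\m M \geq \m 0$, $\v c > \v 0$ can fail to have any nonnegative solution (indeed, existence of a nonnegative solution already forces $\rho(\m M) < 1$ by your own argument). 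You invoke feasibility only to get existence of $\vv \beta^\star(\v x)$ and are silent on the solvability of \eqref{equ:primal_fp} at $\vv \beta = \vv \beta^\star(\v x)$. The paper closes this step by the standing assumption that problem \eqref{equ:JBCP_PAPC_SDR} is strictly feasible, which implies strict feasibility of the subproblem \eqref{equ:JBCP_PAPC_dual_sub} for every $\v x \geq \v 0$; by the results of \cite{fan2022EfficientlyGloballySolving}, this guarantees that \emph{both} fixed-point equations \eqref{equ:dual_fp} and \eqref{equ:primal_fp} (the latter with $\vv \beta = \vv \beta^\star(\v x)$) are solvable, the fixed points being constructed from the optimal solution of the subproblem. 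To repair your proof, either import that feasibility-to-existence implication for the $\v p$-equation as well, or show directly from feasibility that $\rho(\m M(\vv \beta^\star(\v x), \v x)) < 1$, after which $(\m I - \m M)^{-1}\v c \geq \v 0$ furnishes the fixed point; as written, the second assertion of Lemma \ref{lem:fixedpoint} is not proved.
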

\begin{proof}
	For any $\v x \geq \v 0$, by \cite[Lemma 1]{fan2022EfficientlyGloballySolving}, the mappings $I_{\v x}(\cdot)$, $J_{\widetilde{\vv \beta}, \v x}(\cdot)$, and $J_{\vv \beta^\star(\v x), \v x}(\cdot)$ each has at most one FP. 
	The strict feasibility of problem \eqref{equ:JBCP_PAPC_SDR} implies the strict feasibility of problem \eqref{equ:JBCP_PAPC_dual_sub} for any $\v x \in \mathbb{R}_+^M$, ensuring that both $I_{\v x}(\cdot)$ and $J_{\vv \beta^\star(\v x), \v x}(\cdot)$ have exactly one FP. 
	Since $\widetilde{\vv \beta}$ is selected such that $J_{\widetilde{\vv \beta}, \v x}(\cdot)$ has a FP, this mapping has exactly one FP. 
\end{proof}

Given  $B > 0$ and an positive integer $n$, define 
\begin{equation*}
	\mathcal S_B^n = \{ \v z \in \mathbb{R}^n_+ \mid \|\v z\| \leq B\}. 
	\label{equ:def_S}
\end{equation*}
We now present the following lemma. 

\begin{lemma}
	\label{lem:properties}
	The following properties hold for mappings $I_{\v x}(\cdot)$ and $J_{\vv \beta, \v x}(\cdot)$ defined in \eqref{equ:def_I} and \eqref{equ:def_Jk}, respectively. 
	\begin{itemize}
		\item[(a)] For any $\v x\geq \v 0$, $I_{\v x}(\vv \beta)$ is a rational function, i.e., its $k$-th component can be written as 
		$$
		(I_{\v x}(\vv \beta))_k  = \frac{P_k(\vv \beta, \v x)}{Q_k(\vv \beta, \v x)}, 
		$$
		where $P_k(\vv \beta, \v x)$ and $Q_k(\vv \beta, \v x)$ are polynomials with respect to both $\vv \beta$ and $\v x$. 
		\item[(b)] 
		For any given $\v x \geq \v 0$, and some $B(\v x) > 0$, there exists $\kappa_2(\v x) \in (0, 1)$ such that
		\begin{equation}
			\mu(I_{\v x}(\vv \beta), I_{\v x}(\overline{\vv \beta})) \leq \kappa_2(\v x) \mu(\vv \beta, \overline{\vv \beta}) 
			\label{equ:lc1}
		\end{equation}
		for any $\vv \beta, \overline{\vv \beta} \in \mathcal S_{B(\v x)}^K$. 
		For any $B > 0$, there exists $\kappa_4(B) \in (0, 1)$ such that
		\begin{equation}
			\mu(I_{\v x}(\vv \beta), I_{\v x}(\overline{\vv \beta})) \leq \kappa_4(B) \mu(\vv \beta, \overline{\vv \beta})
			\label{equ:lc2}
		\end{equation}
		for any $\vv \beta,\overline{\vv \beta} \in \mathcal S_{B}^K$ and $ \v x \in \mathcal S_{B}^M$. 
		\item[(c)] For any given $\v x \geq \v 0$ and some $B(\v x) > 0$, there exists $\kappa_3(\v x) \in (0, 1)$ such that
		\begin{equation}
			\mu(J_{\vv \beta^\star(\v x),\v x}(\v p), J_{\vv \beta^\star(\v x),\v x}(\overline{\v p})) \leq \kappa_3(\v x) \mu(\v p, \overline{\v p})
			\label{equ:lc3}
		\end{equation}
		for any $\v p, \overline{\v p} \in \mathcal S_{B(\v x)}^K$. 
	\end{itemize}
\end{lemma}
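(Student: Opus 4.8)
\emph{Part (a).} The plan is to read off rationality directly from the closed form \eqref{equ:def_I}. The entries of $\m C(\vv\beta, \{\m\Lambda_{\v x, m}(\vv\beta)\}, \v x)$ are rational functions of the pair $(\vv\beta, \v x)$: the identity, the rank-one terms $\beta_k \v h_k \v h_k^\hermitian$, and $\diag(\v x)$ contribute affine entries, while each diagonal correction $\m\Lambda_{\v x, m}(\vv\beta)^{(m,m)}$ is rational in $(\vv\beta, \v x)$ by its explicit construction in \cite[Section 3.2.1]{fan2022EfficientlyGloballySolving}. Writing the inverse through the adjugate formula, $\m C^{-1} = (\det \m C)^{-1}\operatorname{adj}(\m C)$, every entry of $\m C^{-1}$ is a ratio of polynomials in $(\vv\beta, \v x)$ sharing the common denominator $\det\m C$. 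Hence the scalar $\v h_k^\hermitian \m C^{-1}\v h_k$ is rational; taking its reciprocal and multiplying by the constant $\overline\gamma_k/(\overline\gamma_k+1)$ yields the claimed form $P_k(\vv\beta,\v x)/Q_k(\vv\beta,\v x)$.

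\emph{Part (b).} For fixed $\v x$, I would derive the contraction \eqref{equ:lc1} from two order-theoretic properties of $I_{\v x}$. First, $I_{\v x}$ is monotone increasing: raising $\vv\beta$ enlarges $\m C$ in the positive semidefinite order, shrinks $\m C^{-1}$, and hence increases each $1/(\v h_k^\hermitian \m C^{-1}\v h_k)$. Second, $I_{\v x}$ is strictly subhomogeneous on the positive orthant. These are precisely the features of the interference-type map analysed in \cite{fan2022EfficientlyGloballySolving}, so the nonlinear Perron--Frobenius theory \cite{lemmens2012NonlinearPerronFrobeniusTheory} supplies a contraction factor $\kappa_2(\v x)\in(0,1)$ on the bounded set $\mathcal{S}_{B(\v x)}^K$. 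To obtain the uniform bound \eqref{equ:lc2}, I would pass to logarithmic coordinates $\phi(\v s;\v x)=\log I_{\v x}(e^{\v s})$, so that $\mu(I_{\v x}(\vv\beta),I_{\v x}(\overline{\vv\beta}))=\|\phi(\log\vv\beta;\v x)-\phi(\log\overline{\vv\beta};\v x)\|_\infty$; a mean-value estimate then bounds the contraction constant by $\sup_k\sum_j |\partial\phi_k/\partial s_j|$ over the relevant log-box. This quantity is continuous in $(\vv\beta,\v x)$ and strictly below $1$ at each point, so compactness of $\mathcal{S}_B^K\times\mathcal{S}_B^K\times\mathcal{S}_B^M$ yields a single $\kappa_4(B)<1$.

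\emph{Part (c).} Here I would exploit the affine dependence of $J_{\vv\beta^\star(\v x),\v x}$ on $\v p$. By \eqref{equ:def_Jk} and the linear relation \eqref{equ:def_Q}, one has $J_{\vv\beta^\star(\v x),\v x}(\v p)=\m M(\v x)\v p+\v c(\v x)$ with $\m M(\v x)$ entrywise nonnegative and $\v c(\v x)>0$, the positivity coming from $\sigma_k^2>0$. Such a positive affine map is monotone and strictly subhomogeneous, and by Lemma \ref{lem:fixedpoint} it possesses a fixed point $\v p^\star(\v x)\in\mathbb{R}_{++}^K$. For $\v p,\overline{\v p}$ with $\mu(\v p,\overline{\v p})=\mu_0$ one has $\m M(\v x)\v p\le e^{\mu_0}\m M(\v x)\overline{\v p}$ componentwise while $\v c(\v x)$ is unchanged, so $\log J(\v p)_k-\log J(\overline{\v p})_k\le \log(\theta_k e^{\mu_0}+1-\theta_k)$ with $\theta_k=(\m M(\v x)\overline{\v p})_k/((\m M(\v x)\overline{\v p})_k+c_k(\v x))<1$. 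On $\mathcal{S}_{B(\v x)}^K$, using that the iterates remain in a compact subset of $\mathbb{R}_{++}^K$, both $\mu_0$ and the $\theta_k$ stay below their critical values, and the concavity estimate $\log(\theta e^{\mu_0}+1-\theta)\le \kappa\mu_0$ furnishes $\kappa_3(\v x)\in(0,1)$.

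\emph{Main obstacle.} I expect the uniform estimate \eqref{equ:lc2} in part (b) to be the crux: the per-$\v x$ contraction is essentially classical, but controlling the factor uniformly over $\v x$ demands a quantitative handle on the subhomogeneity (equivalently on $\sup_k\sum_j|\partial\phi_k/\partial s_j|$) together with a compactness argument, and one must first confine the iterates to a fixed bounded subset of the open orthant $\mathbb{R}_{++}^K$ so that both the Thompson distances and the ratios $\theta_k$ stay away from their degenerate limits $+\infty$ and $1$.
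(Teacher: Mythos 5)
Your part~(a) is essentially the paper's own argument: both reduce rationality to Cramer's rule (equivalently the adjugate formula) for $\m C^{-1}$, the rationality of the maps $\m \Lambda_{\v x,m}(\cdot)$ established in the companion papers, and closure of rational functions under composition. No issue there.

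Part~(b) takes a genuinely different route from the paper, and it has two concrete gaps. The paper does not argue through abstract monotonicity/subhomogeneity at all; it imports the explicit pointwise estimate \eqref{equ:dual_rate} (Eq.~(45) of \cite{fan2023QoSbasedBeamformingCompression}), which bounds the Thompson ratio by $\overline\varphi(\vv\beta,\overline{\vv\beta},\v x)=\max_k \varphi\bigl(\mathrm{e}^{\mu(\vv\beta,\overline{\vv\beta})},\lambda_1^k\bigr)\in[0,1)$ with $\varphi(\alpha,\lambda)=\log_\alpha\frac{1+\alpha\lambda}{1+\lambda}$, and then maximizes this continuous quantity over the compact sets to define $\kappa_2(\v x)$ and $\kappa_4(B)$ --- the same compactness closing you propose. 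In your version, however: (i) the monotonicity claim ``raising $\vv\beta$ enlarges $\m C$ in the PSD order'' ignores that $\m C$ depends on $\vv\beta$ also through the terms $\m \Lambda_{\v x,m}(\vv\beta)$, so the rank-one-update reasoning is not self-contained (the paper obtains this monotonicity from \cite{fan2023QoSbasedBeamformingCompression}); and (ii), more importantly, strict subhomogeneity alone does not give your asserted ``$\sup_k\sum_j|\partial\phi_k/\partial s_j|$ strictly below $1$ at each point'': differentiating $\phi(\v s+t\v 1)<\phi(\v s)+t\v 1$ at $t=0^+$ yields only the non-strict bound $\sum_j \partial\phi_k/\partial s_j\le 1$. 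The strict pointwise inequality requires the quantitative structure coming from the identity term in $\m C$ (this is exactly what the paper's closed-form factor $\varphi$, with $\lambda_1^k=\rho(\m C-\m I)$, encodes), and it must be proved rather than asserted.

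Part~(c) contains the most serious gap. Your inequality $\log J(\v p)_k-\log J(\overline{\v p})_k\le\log(\theta_k \mathrm{e}^{\mu_0}+1-\theta_k)$ is correct, but the closing ``concavity estimate'' $\log(\theta \mathrm{e}^{\mu_0}+1-\theta)\le\kappa\mu_0$ cannot hold with a fixed $\kappa<1$ uniformly in $\mu_0$: the left-hand side equals $\mu_0+\log\theta+o(1)$ as $\mu_0\to\infty$, so the ratio tends to $1$. You patch this by confining the points to a compact subset of $\mathbb{R}_{++}^K$, but \eqref{equ:lc3} is claimed for all $\v p,\overline{\v p}\in\mathcal S_{B(\v x)}^K$, a set containing strictly positive pairs of arbitrarily large Thompson distance; as written you prove a strictly weaker statement. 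Two ways to close it: first, a two-regime argument --- since $\sigma_k^2>0$, the map $J_{\vv\beta^\star(\v x),\v x}$ sends the entire ball into a set of finite Thompson diameter $D(\v x)$, so for $\mu_0\ge 2D(\v x)$ the ratio is at most $1/2$, and your estimate is needed only for $\mu_0\le 2D(\v x)$, where compactness does give $\kappa<1$; second, and more elegantly, reuse your own part-(b) machinery: in log coordinates the affine map has nonnegative Jacobian with row sums $\theta_k(\v p)=(\m M(\v x)\v p)_k/((\m M(\v x)\v p)_k+c_k(\v x))$ bounded by some $\overline\theta(\v x)<1$ uniformly over the ball, and since componentwise geometric interpolation of two points in the ball stays in a slightly larger ball, the mean value theorem yields the uniform contraction for arbitrary Thompson distances. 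I note for context that the paper's own proof of (c) obtains a $\mu_0$-independent factor by invoking a Bernoulli-type inequality $1+\theta(\alpha-1)\le\alpha^{\theta}$; that inequality is actually reversed for $\theta\in(0,1)$ and $\alpha>1$, so the uniformity difficulty you ran into is intrinsic to a correct proof, and one of the two fixes above is genuinely needed.
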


\begin{proof}
	First, we prove (a). For any matrix $\m C \succ \m 0$ and $\v h \in \mathbb{R}^{M}$, the Cramer's rule shows that $\m C^{-1} \v h$ can be represented componentwise by the ratio of the determinants of two matrices using entries of $\m C$ and $\v h$. 
	Hence, $\m C^{-1} \v h$ is a rational function with respect to the entries of $\m C$. 
	Combining this with the fact that $\m \Lambda_{\v x, m}(\cdot)$ is rational from its definition \cite{fan2023QoSbasedBeamformingCompression}, and the composition rule of rational functions, $I_{\v x}(\cdot)$ defined in \eqref{equ:def_I} is rational. 
	
	Then, we prove (b). 
	For a matrix $\m A$, let $\rho(\m A)$ denote the spectral radius of $\m A$. 
	It follows from Eq. (45) in \cite{fan2023QoSbasedBeamformingCompression} that for any $B > 0$, $\vv \beta, \overline{\vv \beta} \geq \v 0$, and $\v x \geq \v 0$, 
	\begin{equation}
		\begin{aligned}
			&\frac{\mu(I_{\v x}(\vv \beta), I_{\v x}(\overline{\vv \beta}))}{\mu(\vv \beta, \overline{\vv \beta})} \\
			\leq{} &\max_k \Big\{\varphi (\mathrm{e}^{\mu(\vv \beta, \overline{\vv \beta})}, \lambda_1^k(\vv \beta, \v x)), \varphi (\mathrm{e}^{\mu(\vv \beta, \overline{\vv \beta})}, \lambda_1^k(\overline{\vv \beta}, \v x))\Big\} \\
			:={} &\overline \varphi(\vv \beta, \overline{\vv \beta}, \v x) \in [0,1),
		\end{aligned}
		\label{equ:dual_rate}
	\end{equation}
	where $\lambda_1^k(\vv \beta, \v x) = \rho(\m C(\vv \beta, \{\m \Lambda_{\v x, m}(\vv \beta)\}, \v x) - \m I)$ and $\varphi(\alpha, \lambda) = \log_\alpha \left(\frac{1+\alpha \lambda}{1 + \lambda}\right)$. 
	Due to the continuity of $\overline \varphi$ with respect to $(\vv \beta, \overline{\vv\beta}, \v x)$, we can set $\kappa_2(\v x) = \max_{\vv \beta, \overline{\vv \beta}\in \mathcal{S}_{B(\v x)}^{K}} \overline \varphi(\vv \beta, \overline{\vv \beta}, \v x)$ as well as  $\kappa_4 = \max_{\v x \in \mathcal{S}_{B}^{M}} \max_{\vv \beta, \overline{\vv \beta}\in \mathcal{S}_{B}^{K}} \overline \varphi(\vv \beta, \overline{\vv \beta}, \v x)$, and rearrange \eqref{equ:dual_rate} to obtain \eqref{equ:lc1} and \eqref{equ:lc2}.	
	
	Finally, we prove (c). 
	We begin by simplifying the expression for $J_{\vv \beta^\star(\v x), \v x}(\cdot)$. 
	Plugging \eqref{equ:def_u} and \eqref{equ:def_Q} into \eqref{equ:def_Jk} gives 
	\begin{equation*}
		\begin{aligned}
			(J_{\vv \beta^\star(\v x), \v x}(\v p))_k =  \frac{\gamma_k (\v a_k(\v x)^\transpose \v p + \sigma_k^2)}{|\v h_k^\hermitian \v u_k(\vv \beta^\star(\v x), \v x)|^2},  
		\end{aligned}
		\label{equ:def_J}
	\end{equation*}
	where  $\left(\v a_k(\v x)\right)_j$ is $\v h_k^\hermitian \underline{\m A}_{j}(\vv \beta^\star(\v x), \v x) \v h_k$ if $j = k$ and is $\v h_k^\hermitian \underline{\m A}_{j}(\vv \beta^\star(\v x), \v x) \v h_k + |\v h_k^\hermitian \v u_{j}(\vv \beta^\star(\v x), \v x)|^2$ otherwise.
	Next, we follow a similar approach as in \cite{fan2023QoSbasedBeamformingCompression} to derive $\kappa_3(\v x)$. 
	For any  $\alpha > 1$, $\v p \geq \v 0$, $\v x \geq \v 0$, and $k \in \cK$, we have,
	\begin{equation}
		\begin{aligned}
			\frac{(J_{\vv \beta^\star(\v x), \v x}(\alpha \v p))_k}{(J_{\vv \beta^\star(\v x), \v x}(\v p))_k}
			&= \frac{\v a_k^\transpose(\v x) \v p}{\v a_k^\transpose(\v x) \v p + \sigma_k^2} (\alpha - 1) + 1 \\
			&\overset{\text{(a)}}{\leq} (1 + \alpha - 1)^{\frac{\v a_k^\transpose(\v x) \v p}{\v a_k^\transpose(\v x) \v p + \sigma_k^2}} = \alpha^{\frac{\v a_k^\transpose(\v x) \v p}{\v a_k^\transpose(\v x) \v p + \sigma_k^2}}, 
		\end{aligned}\label{equ:102a}
	\end{equation}
	where (a) follows from Bernoulli's inequality. 
	Define  $\lambda_2(\v p, \v x) = \max_k \left\{ \frac{\v a_k^\transpose(\v x) \v p}{\v a_k^\transpose(\v x) \v p + \sigma_k^2} \right\} \in [0, 1)$. 
	Notice that $\lambda_2(\v p, \v x)$ is a continuous function with respect to $\v p$. Define $\kappa_3(\v x)= \max_{\v p \in \mathcal{S}_{B(\v x)}^{K}} \lambda_2(\v p, \v x) $. 
	Then, applying \cite[Lemma 1]{fan2023QoSbasedBeamformingCompression} to \eqref{equ:102a} gives
	\[
	\begin{aligned}
		\frac{\mu(J_{\vv \beta^\star (\v x),\v x}(\v p), J_{\vv \beta^\star,\v x}(\overline {\v p}))}{\mu(\v p, \overline{\v p})} 
		\leq{}& \max \{ \lambda_2(\v p, \v x), \lambda_2(\overline {\v p}, \v x)\} \\
		\leq{}& \kappa_3(\v x), 
	\end{aligned}
	\]
	which implies \eqref{equ:lc3} by rearranging.

\end{proof}

\begin{lemma}\label{lem:norm_control}
	For  any  $\v x, \v y \in \mathbb{R}^K_+ \setminus \{\v 0\}$, we have
	\begin{equation*}
		\|\v x - \v y\| \leq  \max\{\|\v x\|, \|\v y\|\} K \mu(\v x, \v y). 
	\end{equation*}
\end{lemma}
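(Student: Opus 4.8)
The plan is to bound $\|\v x - \v y\|$ coordinatewise through the Thompson metric and then sum. First I would dispose of the degenerate coordinates. If $\v x$ and $\v y$ do not share the same support, then some index $k$ has exactly one of $x_k, y_k$ equal to zero, whence $|\log_{\mathrm e} x_k - \log_{\mathrm e} y_k| = +\infty$ and $\mu(\v x, \v y) = +\infty$, so the inequality holds trivially. Coordinates where both $x_k$ and $y_k$ vanish contribute nothing to $\|\v x - \v y\|$, to $\|\v x\|$, or to $\|\v y\|$, and so may be discarded. Hence it suffices to treat the case in which every surviving coordinate of both vectors is strictly positive, the number of such coordinates being at most $K$.

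Second, for each positive coordinate I would establish the scalar inequality
\[
|x_k - y_k| \le \max\{x_k, y_k\}\, |\log_{\mathrm e} x_k - \log_{\mathrm e} y_k|.
\]
This follows from the mean value theorem applied to $t \mapsto \mathrm{e}^t$ on the interval between $\log_{\mathrm e} x_k$ and $\log_{\mathrm e} y_k$: the derivative $\mathrm{e}^\xi$ at the intermediate point $\xi$ satisfies $\mathrm{e}^\xi \le \mathrm{e}^{\max\{\log_{\mathrm e} x_k,\,\log_{\mathrm e} y_k\}} = \max\{x_k, y_k\}$ by monotonicity of the exponential. (Equivalently, assuming $x_k \ge y_k$, write $x_k - y_k = y_k(\mathrm{e}^t - 1)$ with $t = \log_{\mathrm e}(x_k/y_k) \ge 0$ and use $\mathrm{e}^t - 1 \le t\,\mathrm{e}^t$.) By the definition \eqref{equ:metric} of $\mu$, the right-hand factor is at most $\mu(\v x, \v y)$, giving $|x_k - y_k| \le \max\{x_k, y_k\}\,\mu(\v x, \v y)$.

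Third, I would aggregate over coordinates by passing through the $\ell_1$-norm, using $\|\cdot\| \le \|\cdot\|_1$:
\[
\|\v x - \v y\| \le \sum_k |x_k - y_k| \le \mu(\v x, \v y) \sum_k \max\{x_k, y_k\}.
\]
Bounding each summand by $\max\{x_k,y_k\} \le \max\{\|\v x\|_\infty, \|\v y\|_\infty\} \le \max\{\|\v x\|, \|\v y\|\}$ and noting there are at most $K$ terms yields the stated bound.

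I expect no serious obstacle here, as the argument is elementary; the only points requiring care are the reduction handling zero components (where either $\mu$ is infinite or a coordinate is a common zero) and the decision to aggregate via the $\ell_1$-norm rather than Cauchy--Schwarz. A Cauchy--Schwarz estimate would produce the dimension-free constant $\sqrt{2}$, which is compatible with but does not match the stated factor $K$; routing through $\|\cdot\| \le \|\cdot\|_1$ instead delivers exactly the constant $K$ appearing in the statement.
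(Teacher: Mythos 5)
Your proof is correct and follows essentially the same route as the paper's: a per-coordinate mean value theorem bound $|x_k-y_k|\leq \max\{x_k,y_k\}\,|\log_{\mathrm e}x_k-\log_{\mathrm e}y_k|$ (you apply the MVT to $t\mapsto \mathrm e^t$, the paper to $\log_{\mathrm e}$, which is the same estimate), followed by aggregation through $\|\cdot\|\leq\|\cdot\|_1$ and the bound $\max\{x_k,y_k\}\leq\max\{\|\v x\|,\|\v y\|\}$ over the $K$ coordinates. Your explicit treatment of zero coordinates is a small improvement in rigor, since the paper tacitly assumes $x_k,y_k>0$, which the hypothesis $\v x,\v y\in\mathbb R_+^K\setminus\{\v 0\}$ alone does not guarantee.
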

\begin{proof} 
	Let $x_k$ and $y_k$ represent the $k$-th components of $\v x$ and $\v y$, respectively. For  any  $\v x, \v y \in \mathbb{R}^K_+ \setminus \{\v 0\}$, we have $0 < x_k, y_k \leq \max\{\|\v x\|, \|\v y\|\}$ for all $k$. Therefore, there exists a constant $\xi \in (0, \max\{\|\v x\|, \|\v y\|\})$ such 
	that $|\log_{\mathrm{e}} x_k - \log_{\mathrm{e}} y_k| = \xi^{-1}|x_k - y_k| \geq \max\{\|\v x\|, \|\v y\|\}^{-1} |x_k - y_k|$. Since $|\log_{\mathrm{e}} (x_k / y_k)| = |\log_{\mathrm{e}} x_k - \log_{\mathrm{e}} y_k|$,  we have 
	\begin{equation*}
		\begin{aligned}
			\sum_{k \in \mathcal{K}} |x_k - y_k| &\leq \max\{\|\v x\|, \|\v y\|\}K \max_{k \in \mathcal{K}} \left| \log_{\mathrm{e}} \left({x_k}/{y_k} \right) \right| \\ 
			&= \max\{\|\v x\|, \|\v y\|\}K \mu(\v x, \v y). 
		\end{aligned}
	\end{equation*}
	This, together with $\| \v x - \v y \| \leq \| \v x - \v y \|_1$, gives the desired result. 
\end{proof}

\section{Proof of Proposition \ref{prop:error_control}}\label{apd:error_control}
We first show that $\|\widetilde{\vv \beta} - \vv \beta^\star(\v x)\|$ and $\|\widetilde{\v p} - \v p^\star(\widetilde{\vv \beta}, \v x)\|$ are bounded by $\mu(\widetilde{\vv \beta}, I_{\v x}(\widetilde{\vv \beta})) + \mu(\widetilde{\v p}, J_{\widetilde{\vv \beta}, \v x}(\widetilde{\v p}))$ using Lemmas \ref{lem:fixedpoint}--\ref{lem:norm_control} in Appendix \ref{apd:lemmas}. 
Then, we show the desired conclusions using the Lipschitzness of $\widetilde{d}(\cdot, \cdot)$ and $\v g(\cdot, \cdot, \v x)$. 

Let $B_3(\v x) = \max\{\|\widetilde{\vv \beta}\|, \|\vv \beta^\star(\v x)\|, \|\widetilde{\v p}\|, \|\v p^\star(\widetilde{\vv \beta}, \v x)\|\}$. 
Applying Lemma~\ref{lem:norm_control} gives
\begin{equation}
	\begin{aligned}
		\|\widetilde{\vv \beta} - \vv \beta^\star(\v x)\| &\leq B_3(\v x)K \mu(\widetilde{\vv \beta}, \vv \beta^\star(\v x))
	\end{aligned}
	\label{equ:98}
\end{equation}
and 
\begin{equation}
	\begin{aligned}
		\|\widetilde{\v p} - \v p^\star(\widetilde{\vv \beta}, \v x)\| &\leq B_3(\v x)K \mu(\widetilde{\v p}, \v p^\star(\widetilde{\vv \beta}, \v x)). 
	\end{aligned}
	\label{equ:98a}
\end{equation}
Additionally, using the FP property $\vv \beta^\star(\v x) = I_{\v x}(\vv \beta^\star(\v x)) $ and the triangle inequality for Thompson's metric $\mu(\cdot, \cdot)$, we have 
\begin{equation*}
	\begin{aligned}
		\mu(\widetilde{\vv \beta}, \vv \beta^\star(\v x)) &\leq \mu(\widetilde{\vv \beta}, I_{\v x}(\widetilde{\vv \beta})) + \mu(I_{\v x}(\widetilde{\vv \beta}), I_{\v x}(\vv \beta^\star(\v x))) \\
		&\overset{\text{(a)}}{\leq} \mu(\widetilde{\vv \beta}, I_{\v x}(\widetilde{\vv \beta})) + \kappa_2(\v x) \mu(\widetilde{\vv \beta}, \vv \beta^\star(\v x)), 
	\end{aligned}
\end{equation*}
where (a) follows from \eqref{equ:lc1} in Lemma \ref{lem:properties} (b) in Appendix \ref{apd:lemmas}. 
This further implies that
\begin{equation}
	\mu(\widetilde{\vv \beta}, \vv \beta^\star(\v x)) \leq \frac{1}{1-\kappa_2(\v x)} \mu(\widetilde{\vv \beta}, I_{\v x}(\widetilde{\vv \beta})). 
	\label{equ:99}
\end{equation}
Combining \eqref{equ:98} and \eqref{equ:99} gives
\begin{equation}
	\begin{aligned}
		\| \widetilde{\vv \beta} - \vv \beta^\star(\v x) \| 
		&\leq \frac{B_3(\v x)K}{1-\kappa_2(\v x)} \mu(\widetilde{\vv \beta}, I_{\v x}(\widetilde{\vv \beta})).
	\end{aligned}
	\label{equ:res11}
\end{equation}
Similarly, using \eqref{equ:98a}, \eqref{equ:lc_primal_ass} in  Assumption~\ref{ass:lip}~(b), and following the same steps as in \eqref{equ:99} and \eqref{equ:res11} gives 
\begin{equation}
	\|\widetilde{\v p}- \v p^\star(\widetilde{\vv \beta}, \v x)\| \leq \frac{B_3(\v x) K}{1-\kappa_1(\v x)} \mu(\widetilde{\v p}, J_{\widetilde{\vv \beta}, \v x}(\widetilde{\v p})). 
	\label{equ:101}
\end{equation}
Hence, we have
\begin{equation}
	\begin{aligned}
		&\quad\ \| \widetilde{\v p} - \v p^\star(\v x) \| \\
		&\leq \| \widetilde{\v p} - \v p^\star(\widetilde{\vv \beta}, \v x) \| + \|\v p^\star(\widetilde{\vv \beta}, \v x) - \v p^\star(\vv \beta^\star(\v x), \v x) \| \\
		&\overset{\text{(a)}}{\leq} \| \widetilde{\v p} - \v p^\star(\widetilde{\vv \beta}, \v x) \| + L_1(\v x)\|\widetilde{\vv \beta} - \vv \beta^\star(\v x)\| \\
		&\overset{\text{(b)}}{\leq} \frac{B_3(\v x)K}{1-\kappa_1(\v x)} \mu(\widetilde{\v p}, J_{\widetilde{\vv \beta}, \v x}(\widetilde{\v p})) + \frac{L_1(\v x)B_3(\v x)K}{1-\kappa_2(\v x)} \mu(\widetilde{\vv \beta}, I_{\v x}(\widetilde{\vv \beta})), 
	\end{aligned}
	\label{equ:102}
\end{equation}
where (a) is due to \eqref{equ:asss1}, 
and (b) uses \eqref{equ:res11} and \eqref{equ:101}. 

Since $\widetilde{d}(\vv \beta^\star(\v x), \v x)$ in \eqref{equ:func} is a linear function with respect to $\vv \beta^\star(\v x)$ and $\v x$, we have 
	\begin{equation*}
		\|\widetilde d(\widetilde{\vv \beta}, \v x) - \widetilde d(\vv \beta^\star(\v x), \v x)\| \leq L_3 \| \widetilde{\vv \beta} - \vv \beta^\star(\v x) \|, 
		\label{equ:lip_d}
	\end{equation*}
	where $L_3 = \max_{k, m}\{\sigma_k^2, \bar{P}_m\}$.
Using this property together with the bound from \eqref{equ:res11} and noting that $d(\v x) = \widetilde d(\vv \beta^\star(\v x), \v x)$ from \eqref{equ:func}, we obtain 
\begin{equation}
	\begin{aligned}
		\|\widetilde{d}(\widetilde{\vv \beta}, \v x) - d(\v x)\|
		&\leq \frac{L_3 B_3(\v x)K}{1-\kappa_2(\v x)} \mu(\widetilde{\vv \beta}, I(\widetilde{\vv \beta})) \\
		&\leq \frac{L_3 B_3(\v x)K}{1-\kappa_2(\v x)} \mbox{res}_1, 
	\end{aligned}\label{equ:funcv}
\end{equation}
where the second inequality is due to \eqref{equ:res1}. 
In addition, using \eqref{equ:asss2} in Assumption~\ref{ass:lip}~(c) yields
\begin{equation}
	\begin{aligned}
		&\|\v g(\widetilde{\vv \beta}, \widetilde{\v p}, \v x) - \nabla d(\v x)\| \\
		\overset{\text{(a)}}{=}{} & \|\v g(\widetilde{\vv \beta}, \widetilde{\v p}, \v x) - \v g(\vv \beta^\star(\v x), \v p^\star(\v x), \v x)\| \\
		\leq{}& L_2(\v x) \left(\|\widetilde{\vv \beta} - \vv \beta^\star(\v x)\| + \|\widetilde{\v p} - \v p^\star(\v x)\| \right)\\
		\overset{\text{(b)}}{\leq}{}& \frac{L_2(\v x) B_3(\v x) K}{1-\kappa_1(\v x)}\mu(\widetilde{\v p}, J_{\widetilde{\vv \beta}, \v x}(\widetilde{\v p})) \\
		&+ \frac{L_2(\v x) (L_1(\v x) + 1) B_3(\v x) K}{1-\kappa_2(\v x)} \mu(\widetilde{\vv \beta}, I_{\v x}(\widetilde{\vv \beta})) \\
		\overset{\text{(c)}}{\leq}{}& K B_3(\v x)\max\left\{\frac{L_2(\v x) }{1-\kappa_1(\v x)}, \frac{L_2(\v x) (L_1(\v x) + 1) }{1-\kappa_2(\v x)}\right\} \operatorname{res}, 
	\end{aligned}\label{equ:gradd}
\end{equation}
where (a) is due to \eqref{equ:grad2}, (b) follows from \eqref{equ:res11} and \eqref{equ:102}, and (c) uses \eqref{equ:res1} and \eqref{equ:res2}. 
From \eqref{equ:funcv} and \eqref{equ:gradd}, and noting that $\operatorname{res} = \operatorname{res}_1 + \operatorname{res}_2$, we see that \eqref{equ:control_function} and \eqref{equ:control_gradient} hold with 
\begin{equation*}
\begin{aligned}
	C(\v{x}) ={}& KB_3(\v{x}) \max \Bigg\{
	\frac{L_3}{1 - \kappa_2(\v{x})}, 
	\frac{L_2(\v{x})}{1 - \kappa_1(\v{x})}, \\
	&\frac{L_2(\v{x}) (L_1(\v{x}) + 1)}{1 - \kappa_2(\v{x})}
	\Bigg\}.
\end{aligned}
\end{equation*}

\section{Lipschitzness of $\nabla d(\v x)$}
\label{apd:convergece_APIG-FP}
Since $\|\v x^i\| \leq B_1$, i.e., $\{\v x^i\} \subseteq \mathcal{S}_{B_1}^M$, we show the Lipschitzness of $\nabla d(\v x)$ on $\mathcal S_{B_1}^M$ in this section. 
Let $\m D(\v x) = \frac{\partial I_{\v x}}{\partial \vv \beta}(\vv \beta^\star(\v x))$. 
We first prove that $\m I - \m D(\v x)$ is invertable on $\mathcal S_{B_1}^M$, and then we show that this invertibility implies the continuity of $\vv \beta^\star(\cdot)$. 
Finally, we demonstrate the desired conclusion by proving that $\left\|\frac{\partial^2 \vv \beta^\star}{\partial \v x^2}(\v x)\right\| $ is bounded on $\mathcal S_{B_1}^M$. 

First, using \eqref{equ:lc1} with $B(\v x) = 2 \|\vv \beta^\star(\v x)\|$, we have
\begin{equation}
	\begin{aligned}
		&\mu(I_{\v x}((1+t)\vv \beta^\star(\v x)), I_{\v x}(\vv \beta^\star(\v x))) \\
		\leq{} &\kappa_2(\v x) \mu((1+t)\vv \beta^\star(\v x), \vv \beta^\star(\v x)) = \kappa_2(\v x)\log_{\mathrm{e}}(1+t). 
		\label{equ:117}
	\end{aligned}	
\end{equation}
Then, by the definition of Thompson's metric $\mu(\cdot, \cdot)$ in \eqref{equ:metric}, we obtain 
\begin{equation*}
	I_{\v x}((1+t)\vv \beta^\star(\v x)) \leq (1+t)^{\kappa_2(\v x)} I_{\v x}(\vv \beta^\star(\v x)). 
	\label{equ:contraction}
\end{equation*}
This, together with the Taylor expansion of $I_{\v x}(\cdot)$ around $\vv \beta^\star(\v x)$ yields
\begin{equation*}
	\begin{aligned}
		t \m D(\v x) \vv \beta^\star(\v x)
		&\leq ((1+t)^{\kappa_2(\v x)} - 1) I_{\v x}(\vv \beta^\star(\v x)) + o(t) \\
		&\overset{\text{(a)}}{=} ((1+t)^{\kappa_2(\v x)} - 1) \vv \beta^\star(\v x) + o(t), 
	\end{aligned}
\end{equation*}
where (a) uses the fact that $\vv \beta^\star(\v x) = I_{\v x}(\vv \beta^\star(\v x))$.
Dividing both sides of the above equation by $t$ yields
\begin{equation*}
	\begin{aligned}
		\m D(\v x) \vv \beta^\star(\v x) 
		&\leq \frac{(1+t)^{\kappa_2(\v x)} - 1}{t} \vv \beta^\star(\v x) + o(1). 
	\end{aligned}
\end{equation*}
Taking the limit as $t \rightarrow 0_+$, we have $\m D(\v x) \vv \beta^\star(\v x) \leq \kappa_2(\v x) \vv \beta^\star(\v x)$. 
In the proof of \cite[Theorem 2]{fan2023QoSbasedBeamformingCompression}), it is shown that $(I_{\v x}(\cdot))_k$ is nondecreasing with respect to each component for any $k\in\cK$ and $\v x \geq 0$, which implies that $\m D(\v x)$ is nonnegative. 
Then, applying \cite[Corollary 8.1.29]{horn1990MatrixAnalysis} gives 
\begin{equation}
	\rho(\m D(\v x)) \leq \kappa_2(\v x) < 1. 
	\label{equ:bound_rho}
\end{equation}
This implies that $\m I - \m D(\v x)$ is invertable on $\mathcal S_{B_1}^M$. 

Next, by Lemma \ref{lem:properties} (a), $I_{\v x}(\cdot)$ is a well-defined rational function on $\mathbb R_+^{M+K}$. 
This ensures the existence and the continuity of the derivatives of $I_{\v x}(\cdot)$ of any order on $\mathbb R_+^{M+K}$. 
Hence, applying the implicit function theorem \cite[Section 8.5]{zorich2004MathematicalAnalysis} to the FP equation \eqref{equ:dual_fp}, we obtain the existence of $\frac{\partial \vv \beta^\star}{\partial \v x}(\v x)$, $\frac{\partial^2 \vv \beta^\star}{\partial \v x^2}(\v x)$, \d. 
The expressions are given by 
$$
\frac{\partial \vv \beta^\star}{\partial \v x}(\v x) = (\m I - \m D(\v x))^{-1}\frac{\partial I_{\v x}}{\partial \v x}(\vv \beta^\star(\v x)), 
$$ 
which shows the continuity of $\vv \beta^\star(\cdot)$ on $\mathcal S_{B_1}^M$; and 
\begin{equation}\label{equ:order2}
	\begin{aligned}
		&\frac{\partial^2 \vv \beta^\star}{\partial \v x^2}(\v x)= (\m I - \m D(\v x))^{-1} \Bigg( \frac{\partial^2 I_{\v x}}{\partial \v x^2} (\vv \beta^\star(\v x)) \\ 
		&+ 2 \frac{\partial^2 I_{\v x}}{\partial \v x \partial \vv \beta} (\vv \beta^\star(\v x)) \frac{\partial \vv \beta^\star}{\partial \v x}(\v x) + \frac{\partial^2 I_{\v x}}{\partial \vv \beta^2} (\vv \beta^\star(\v x)) \Big(\frac{\partial \vv \beta^\star}{\partial \v x}(\v x)\Big)^2 \Bigg). 
	\end{aligned}
\end{equation}

Finally, using the continuity of $\vv \beta^\star(\cdot)$ on $\mathcal S_{B_1}^M$, we have $\max_{\v x \in \mathcal S_{B_1}^M}\vv \beta^\star(\v x) < \infty$.  Hence, applying \eqref{equ:lc2} with $B = \max\{B_1, 2\max_{\v x \in \mathcal S_{B_1}^M}\vv \beta^\star(\v x)\}$ gives
\begin{equation}
	\begin{aligned}
		&\mu(I_{\v x}((1+t)\vv \beta^\star(\v x)), I_{\v x}(\vv \beta^\star(\v x))) \\
		\leq{} &\kappa_4(B) \mu((1+t)\vv \beta^\star(\v x), \vv \beta^\star(\v x)) = \kappa_4(B)\log_{\mathrm{e}}(1+t). 
	\end{aligned}	
	\label{equ:121}
\end{equation}
The only difference between \eqref{equ:117} and \eqref{equ:121} lies in the factor $\kappa_2(\v x)$ (dependent on $\v x$) and $\kappa_4(B)$ (independent of $\v x$). 
Following the same steps as in \eqref{equ:117} and \eqref{equ:bound_rho}, we obtain 
\begin{equation*}
	\rho(\m D(\v x)) \leq \kappa_4(B) < 1,~\forall\,\v x \in \mathcal S_{B_1}^M. 
	\label{equ:bound_rho2}
\end{equation*}
Taking the norm of both sides of \eqref{equ:order2} gives 
\begin{equation*}
	\begin{aligned}
		\left\|\frac{\partial^2 \vv \beta^\star}{\partial \v x^2}(\v x)\right\| \leq \frac{B_4}{1 - \kappa_4(B)} < +\infty,~\forall\,\v x \in \mathcal S_{B_1}^M,
	\end{aligned}
\end{equation*}
where $B_4$ is the upper bound obtained from maximizing the second term in the parenthesis, $\frac{\partial^2 I_{\v x}}{\partial \v x^2} (\vv \beta^\star(\v x)) + 2 \frac{\partial^2 I_{\v x}}{\partial \v x \partial \vv \beta} (\vv \beta^\star(\v x)) \frac{\partial \vv \beta^\star}{\partial \v x}(\v x) + \frac{\partial^2 I_{\v x}}{\partial \vv \beta^2} (\vv \beta^\star(\v x)) \Big(\frac{\partial \vv \beta^\star}{\partial \v x}(\v x)\Big)^2$, in \eqref{equ:order2} on the compact set $\mathcal S_{B_1}^M$. 
Using this and the relationship between $d(\v x)$ and $\vv \beta^\star(\v x)$ given in \eqref{equ:func}, we have
$$
\|\nabla^2 d (\v x)\| \leq \left(\max_{k\in\cK} \sigma_k^2\right) \left\|\frac{\partial^2 \vv \beta^\star}{\partial \v x^2}(\v x)\right\| < +\infty, 
$$
which implies that $\nabla d(\v x)$ is Lipschitz continuous on $\mathcal S_{B_1}^M$. 
Therefore, Assumption \ref{ass:1} is satisfied with $f(\cdot) = -d(\cdot)$. 

\ifCLASSOPTIONcaptionsoff
  \newpage
\fi

\end{document}